\newtheorem{thm}{Theorem}
\newtheorem{lem}{Lemma}
\newtheorem{defn}{Definition}
\begin{document}
%
\title{Ad auctions and cascade model: GSP inefficiency and algorithms}
\author{Gabriele Farina \and Nicola Gatti\\
Dipartimento di Elettronica, Informazione e Bioingegneria, Politecnico di Milano\\
Piazza Leonardo da Vinci, 32\\
I-20133, Milan, Italy\\
gabriele2.farina@mail.polimi.it, nicola.gatti@polimi.it}
\newif\ifshowsupplementalmaterial
\showsupplementalmaterialtrue

\newcommand{\SupplementalMaterial}{
\ifshowsupplementalmaterial
the Appendix
\else
\cite{Aggarwal2008}
\fi
}

\maketitle
\begin{abstract}
\begin{quote}
The design of the best \emph{economic mechanism} for Sponsored Search Auctions (SSAs) is a central task in computational mechanism design/game theory. Two open questions concern the adoption of user models more accurate than that one currently used and the choice between Generalized Second Price auction (GSP) and Vickrey--Clark--Groves mechanism (VCG). In this paper, we provide some contributions to answer these questions. We study  Price of Anarchy (PoA) and  Price of Stability (PoS) over social welfare and auctioneer's revenue of GSP w.r.t. the VCG when the users follow the famous \emph{cascade model}. Furthermore, we provide exact, randomized, and approximate algorithms, showing that in real--world settings (Yahoo! Webscope A3 dataset{, 10 available slots}) optimal allocations can be found in less than 1s with up to $\numprint{1,000}$ ads, and can be approximated in less than 20ms even with more than $\numprint{1,000}$ ads with an average accuracy greater than 99\%.
\end{quote}
\end{abstract}

\section{Introduction}
\label{section:introduction}
SSAs, in which a number of \emph{advertisers} bid to have their \emph{ads} displayed in some slot alongside the search results of a keyword, constitute one of the most successful applications of \emph{microeconomic mechanisms}, with a revenue of about \$50 billion dollars in the US alone in 2014~\cite{IABreport2014}. Similar models are used in many other advertisement applications, e.g. contextual advertising~\cite{VarianGoogle}.  A number of questions remain currently open in the study of effective SSAs. Among these, two are crucial and concern the adoption of the best \emph{user model} in real--world settings and the design of the best \emph{economic mechanism}. In this paper, we provide some contributions to answer these questions. 

The commonly adopted user model is the \emph{position dependent cascade} (PDC) in which the user is assumed to observe the slots from the top  to the bottom with probabilities to observe the next slot that depend only on the slots themselves~\cite{Narahari2009}. The optimal allocation can be efficiently found in a greedy fashion. The natural extension is the \emph{ad/position--dependent cascade} (APDC) model, in which the probability to observe the next slot depends also on the ads actually allocated. This model is extensively experimentally validated~\cite{Craswell,Joachims}. However, it is not known whether finding the best allocation of APDC is $\mathcal{FNP}$--hard, but it is commonly conjectured to be. \cite{aamas2013} provides an exact  algorithm that can be used in real time only with 40 ads or less, while in real--world applications the ads can be even more than $\numprint{1,000}$. The problem admits a constant--ratio $(1-\epsilon)/4$, with $\epsilon \in [0,1]$, approximation algorithm~\cite{aamas2013,cascade}, but no known algorithm is proved to lead to a \emph{truthful} economic mechanism. In addition, the known approximation algorithms can be applied in real time only with 100 ads or less~\cite{aamas2013}. More expressive models include a contextual graph, see~\cite{DBLP:conf/atal/GattiRSV15,Fotakis}, but they are Poly--$\mathcal{APX}$--hard. Hence, the APDC is considered as the candidate providing the best trade--off between accuracy and tractability.

While the GSP is still popular in many SSAs, the increasing evidence of its limits is strongly pushing towards the adoption of the more appealing VCG mechanism, which is already successfully employed in the related scenario of contextual advertising, by Google~\cite{VarianGoogle} and Facebook~\cite{HegemanFacebook}. The main drawback of the GSP is the \emph{inefficiency} of its equilibria (in terms of social welfare) w.r.t. the VCG outcome (the only known results concern PDC):  considering the whole set of Nash equilibria in full information, the PoA of the GSP is upper bounded by about $1.6$, while considering the set of Bayes--Nash equilibria the PoA is upper bounded by about $3.1$~\cite{Leme2010FOCS}. 
Furthermore, automated bidding strategies, used in practice by the advertisers to find their best bids, may not even converge to any Nash equilibrium and, under mild assumptions, the states they converge to are shown to be arbitrarily inefficient~\cite{Markakis2010WINE}. Some works study how inefficiency is affected by some form of externalities, showing that no Nash equilibrium of the GSP provides a larger revenue than the VCG outcome~\cite{Kuminov2009AAMAS,GomesWINE2009}.

\textbf{Original contributions}. Under the assumption that the users behave as prescribed by APDC, we study the PoA and PoS both over the social welfare and over the auctioneer's revenue of the GSP w.r.t. the VCG based on APDC (we study also the restricted case in which the advertisers do not overbid). We extend our analysis done for the GSP to the VCG based on PDC. Both analyses show a wide dispersion of the equilibria with PoAs in most cases unbounded, pointing out the limits of the traditional models. Furthermore, we provide a polynomial--time algorithm that dramatically shrinks auctions instances, safely discarding ads (even more than 90\% in our experiments). We also provide a randomized algorithm (that can be derandomized with polynomial cost) finding optimal allocations with probability $0.5$ in real--world instances (based on the \emph{Yahoo!} Webscope A3 dataset) with up to $\numprint{1,000}$ ads and 10 slots in less than 1s and a \emph{maximal--in--its--range} (therefore leading to truthful mechanisms if paired with VCG--like payments) approximation algorithm returning allocations with an average approximation ratio larger than about 0.99 in less than 20ms even with $\numprint{1,000}$ ads. This shows that the APDC can be effectively used in practice also in large instances.
\ifshowsupplementalmaterial
Finally, we prove our approximation algorithm to have a ratio of $1/2$ in restricted settings. However, in general settings the proof keeps to be elusive (even if we did not find any instance in which the approximation ratio is below $1/2$).
\fi

\section{Problem formulation}
	We adopt the same SSA model used in~\cite{aamas2013} with minor variations. The  model without externalities is composed of the following elements:
	\begin{itemize} 
		\item $\mathcal{A}=\{1,\dots, N\}$ is the set of ads. W.l.o.g. we assume each advertiser (also called agent) to have a single ad, so each agent~$a\in \mathcal{A}$ can be identified with ad~$a$;

		\item $\mathcal{K}=\{1,\dots, K\}$ is the set of slots $s$ ordered from the top to the bottom. We assume w.l.o.g. $K \le N$;
		
		\item $q_{a}\in [0,1]$ is the \emph{quality} of ad $a$ (i.e., the probability a user will click ad~$a$ once observed);

		\item $v_{a} \in V_a \subseteq \mathbb{R}^+$ is the value for agent~$a$ when ad~$a$ is clicked by a user, while ${\bf v}=(v_{1},\dots,v_{N})$ is the value profile;
		
		\item $\bar{v}_a$ is the product of $q_a$ and $v_a$, and can be interpreted as the expected fraction of the ad value that is actually absorbed by the users observing ad $a$;

		\item $\hat{v}_{a} \in \hat{V}_a \subseteq \mathbb{R}^+$ is the value reported by agent~$a$, while $\hat{{\bf v}}=(\hat{v}_{1},\dots,\hat{v}_{N})$ is the reported value profile;
		
		\item $\Theta$ is the set of (ordered) allocations of ads to slots, where each ad cannot be allocated in more than one slot.
	\end{itemize}

\noindent	Given an allocation $\theta$ of ads to slots, we let:
	\begin{itemize} 
		\item $\textrm{ads}(\theta) \subseteq \mathcal{A}$ to denote the subset of ads allocated in $\theta$;
		\item $\textrm{slot}_{\theta}(a)$ to denote the slot in which ad $a$ is allocated, if any;
		\item $\textrm{ad}_{\theta}(s)$ to denote the ad allocated in slot $s$.
	\end{itemize}
	
	The externalities introduced by the cascade model assume the user to have a Markovian behavior, starting to observe the slots from the first (i.e., slot $1$) to the last (i.e., slot $K$) where the transition probability from slot~$s$ to slot~${s+1}$ is given by the product of two parameters:
	\begin{itemize} 
		\item (\emph{ad--dependent externalities}) $c_{a}\in [0,1]$ is the \emph{continuation probability} of ad~$a$;
		\item (\emph{position--dependent externalities}) $\lambda_{s} \in [0,1]$ is the \emph{factorized prominence} of slot~$s$  (it is assumed $\lambda_{K} = 0$).

	\end{itemize}
	
	The click through rate $\textrm{CTR}_{\theta}(a) \in [0,1]$ of ad $a$ in allocation~$\theta$ is the probability a user will click ad $a$ and it is formally defined as
	\begin{equation*}
		\textrm{CTR}_{\theta}(a) = q_{a}\cdot \Lambda_{\textrm{slot}_{\theta}(a)} \cdot C_{\theta}(a),
	\end{equation*}
	where
	\begin{align*}
		C_{\theta}(a) &= \displaystyle\prod \limits_{s\,<\,\textrm{slot}_{\theta}(a)} c_{\textrm{ad}_{\theta}(s)},\\ \Lambda_s &= \displaystyle\prod\limits_{k\,<\,s}\lambda_{k}.
	\end{align*}
	
	\noindent Parameter $\Lambda_{s}$ is commonly called \emph{prominence}~\cite{cascade}.
	
	Given an allocation $\theta$, its social welfare is defined as:
	\begin{equation*}
		\textrm{SW}(\theta) = \sum_{\textrm{ads}(\theta)} v_{a}\cdot \textrm{CTR}_{\theta}(a) = \sum_{\textrm{ads}(\theta)} \bar{v}_{a}\cdot \Lambda_{\textrm{slot}_{\theta}(a)} \cdot C_{\theta}(a).
	\end{equation*}
	
	The problem we study is the design of an economic mechanism $\mathcal{M}$, composed of
	\begin{itemize} 
		\item an \emph{allocation function} $f:\times_{a\,\in\,\mathcal{A}} V_{i}\rightarrow\Theta$ and
		\item a \emph{payment function} for each agent $a$, $p_{a}:\times_{a\in A}\hat{V}_{a}\rightarrow\mathbb{R}$.
	\end{itemize}
	Each agent~$a$ has a linear utility	$u_{a}(v_{a},\hat{\mathbf{v}})= v_{a}\cdot \textrm{CTR}_{f(\mathbf{v})}(a)-p_{a}(\hat{\mathbf{v}})$,
	in expectation over the clicks and pays $p_{a}(\hat{\mathbf{v}})/\textrm{CTR}_{f(\hat{\mathbf{v}})}(a)$ only once its ad is clicked. We are interested in mechanisms satisfying the following properties:
	\begin{defn}
		Mechanism $\mathcal{M}$ is dominant strategy incentive compatible (DSIC)  if reporting true values is a dominant strategy for every agent (i.e., $\hat{v}_{a}=v_{a}$).
	\end{defn}
	\begin{defn}
		Mechanism $\mathcal{M}$ is individually rational (IR) if no agent acting truthfully prefers to abstain from participating to the mechanism rather than participating. 
	\end{defn}
	\begin{defn}
		Mechanism $\mathcal{M}$ is weakly budget balance (WBB) if the mechanism is never in deficit.
	\end{defn}
	\begin{defn}
		Mechanism $\mathcal{M}$ is computationally tractable when both $f$ and $p_a$ are computable in polynomial time.
	\end{defn}
	\begin{defn}
	Allocation function $f$ is maximal in its range if $f$ returns an allocation maximizing the social welfare among a given subset of allocations that is independent of the agents' reports.
	\end{defn}
	When $f$ is maximal in its range, VCG--like payments can be used, obtaining a DSIC, IR and WBB mechanism~\cite{conf/sigecom/NisanR00}.

\section{Inefficiency of GSP and VCG with PDC}
	\subsection{GSP analysis}
		In the GSP,  ads are allocated according to decreasing reported values: supposing $\hat{v}_1 \ge \cdots \ge \hat{v}_N$, the allocation function maps ad $a$ to slot $s=a$, for each $a = 1, \dots, K$. Every agent $a = 1,\dots, K-1$ is charged a price $p_a = q_{a+1}\hat{v}_{a+1}$. Agent $K$ is charged a price $p_K = 0$ if $K = N$, or $p_K = q_{K+1}\hat{v}_{K+1}$ otherwise. The following lemma holds:
		\begin{restatable}{lem}{gspnotir}
			\label{lem:gsp not ir}
			The GSP is not IR, when users follow the APDC.
		\end{restatable}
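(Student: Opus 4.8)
The plan is to refute IR by exhibiting a single truthful agent whose GSP utility is strictly negative. By the IR definition it suffices to find an instance and an agent $a$ reporting $\hat{v}_a = v_a$ for whom $u_a = v_a\cdot\textrm{CTR}_{f(\mathbf{v})}(a) - p_a < 0$, since abstaining guarantees utility $0$. The conceptual point I want to isolate is that the GSP price $p_a = q_{a+1}\hat{v}_{a+1}$ is fixed by the next bid alone and is completely blind to the cascade externalities, whereas under the APDC the realized click-through rate of ad $a$ is multiplicatively discounted by the prominence $\Lambda_{\textrm{slot}_{\theta}(a)}$ and by the continuation product $C_{\theta}(a)$ of the ads ranked above it. Once those discount factors are small enough, the agent is charged more than the value it can actually absorb.

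First I would write the truthful utility explicitly. With $\hat{\mathbf{v}} = \mathbf{v}$ the GSP ranks by value, so $v_1 \ge \cdots \ge v_N$, ad $a$ sits in slot $a$, and $\Lambda_{\textrm{slot}_{\theta}(a)} = \Lambda_a$. Using $\textrm{CTR}_{\theta}(a) = q_a\Lambda_a C_{\theta}(a)$ and $\bar{v}_a = q_a v_a$, for every charged agent (i.e. $a \le K-1$, or $a = K$ when $K < N$) I obtain
\begin{equation*}
u_a = \bar{v}_a\,\Lambda_a\,C_{\theta}(a) - \bar{v}_{a+1}.
\end{equation*}
Since $\Lambda_a, C_{\theta}(a) \in [0,1]$, the first term never exceeds $\bar{v}_a$, so IR breaks as soon as $\bar{v}_a\Lambda_a C_{\theta}(a) < \bar{v}_{a+1}$ for some such $a$.

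Then I would exhibit a minimal explicit instance realizing that inequality. I would take three ads ($N = 3$) and two slots ($K = 2$), with equal values and equal qualities so that all $\bar{v}_a$ coincide, and drive the top-slot externalities down, e.g. $\lambda_1 = c_1 = 1/2$ (respecting the boundary condition $\lambda_K = \lambda_2 = 0$). Then ad $2$, placed in slot $2$, has $\Lambda_2 C_{\theta}(2) = \lambda_1 c_1 = 1/4$ yet is still charged $p_2 = q_3\hat{v}_3 = \bar{v}_3$, so $u_2 = \bar{v}_2/4 - \bar{v}_3 < 0$. This certifies that the GSP is not IR under the APDC.

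The only genuinely delicate step, which I would treat as the crux, is selecting an agent for which a cascade discount and a strictly positive payment coexist. Slot $1$ carries no discount ($\Lambda_1 = 1$ and the continuation product of the top ad is $1$), so no externality bites there, while the bottom displayed agent pays nothing when $K = N$; hence a cascade-driven violation genuinely requires a slot that has at least one ad above it and is still charged, which is exactly why $N \ge 3$ is needed. I would also verify that the chosen profile respects the ordering $\hat{v}_1 \ge \cdots \ge \hat{v}_N$ assumed by the allocation rule (immediate with equal values) together with $\lambda_K = 0$, after which the counterexample is complete.
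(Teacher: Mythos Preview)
Your argument is correct and follows the same pattern as the paper's proof: exhibit one truthful agent whose GSP payment exceeds her APDC-discounted value. The paper's instance is slightly starker (it sets all $c_a=0$ with $\lambda_s=1$, so agent~2's CTR collapses to zero while she still pays $1$), but your $N=3$, $K=2$, $\lambda_1=c_1=1/2$ example works for exactly the same reason and the surrounding discussion---that the violation must occur at a slot that is both cascade-discounted and charged---is a nice touch absent from the paper.
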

		The GSP is well known not to be DSIC. For this reason, we study the inefficiency of its Nash equilibria.
		\begin{restatable}{lem}{gspswpoaovb}
			\label{lem:gsp sw poa ovb}
			The PoA of the social welfare in the GSP when users follow APDC is unbounded.
		\end{restatable}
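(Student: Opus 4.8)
The plan is to exhibit a one--parameter family of APDC instances, each admitting a Nash equilibrium of the GSP whose social welfare is a vanishing fraction of the optimum; letting the parameter grow then drives the ratio $\textrm{OPT}/\textrm{SW}(\text{NE})$ to infinity. The guiding intuition is that the GSP ranks ads purely by reported value and is therefore blind to the continuation probabilities $c_a$: an advertiser with a low $c_a$ sitting on top collapses the prominence that reaches the lower slots, yet the GSP has no device to discourage this. I would thus engineer an instance in which the welfare--maximizing allocation places a non--blocking ad on top, while a second, blocking ad can overbid its way into slot~$1$ and remain there at equilibrium. Note that overbidding is essential here, which is consistent with the paper treating the no--overbidding regime separately.

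Concretely, I would take $N=K=2$ with unit qualities $q_A=q_B=1$, a ``blocking'' ad $A$ with continuation $c_A=0$ and a small value $v_A$, and a ``good'' ad $B$ with $c_B=1$ and a large value $v_B$; for the slots I set $\lambda_1=1/2$ (so that slot~$2$ is strictly less prominent, $\Lambda_2=\lambda_1=1/2<1$) and $\lambda_2=0$ as required. Evaluating $\textrm{SW}$ over the only two full allocations, placing $B$ on top gives $\textrm{CTR}(B)=1$ and $\textrm{CTR}(A)=\lambda_1 c_B=1/2$, hence welfare $v_B+v_A/2$, whereas placing $A$ on top gives $\textrm{CTR}(A)=1$ but $\textrm{CTR}(B)=\lambda_1 c_A=0$, hence welfare only $v_A$. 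So the optimum is $v_B+v_A/2$ and the ``bad'' allocation has welfare $v_A$, and the ratio is $v_B/v_A+1/2$.

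It remains to certify the bad allocation ($A$ in slot~$1$, $B$ in slot~$2$) as a Nash equilibrium. I let $A$ overbid, $\hat v_A=v_B+1$, and $B$ underbid, $\hat v_B=v_A/4$; for $v_B$ large this reproduces the ranking $\hat v_A>\hat v_B$, placing $A$ on top. With two ads there are only two unilateral deviations to check, each agent swapping slots. For $B$, moving up to slot~$1$ costs the price $q_A\hat v_A=v_B+1$ against a value of only $v_B$, so its utility becomes $v_B-(v_B+1)<0$, worse than the $0$ it currently earns. For $A$, moving down to slot~$2$ drops its click--through rate from $1$ to $\lambda_1=1/2$ while saving the price $\hat v_B=v_A/4$, giving utility $v_A/2$ versus the current $v_A-\hat v_B=3v_A/4$, so staying is strictly better. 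Hence no profitable deviation exists, and letting $v_B\to\infty$ with $v_A$ fixed makes $\textrm{OPT}/\textrm{SW}(\text{NE})=v_B/v_A+1/2$ unbounded. The step I expect to be the delicate one is exactly this last incentive check for $A$: naively, since $A$ pays a positive price on top yet gets full clicks in either slot (because $B$ does not block), it would seem to always prefer dropping down for free. The resolution, and the key modeling choice, is the position--dependent decay $\lambda_1<1$, which makes slot~$2$ strictly less valuable and thereby pins the blocking ad in the top slot despite its payment.
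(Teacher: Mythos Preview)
Your construction is correct and follows essentially the same approach as the paper: a two--ad, two--slot instance in which a low--value ``blocking'' ad overbids into slot~$1$ at equilibrium, destroying the prominence of a high--value ad below; the paper parametrizes by $\varepsilon\to 0$ with $\lambda_1=1$ and $c_2=1-\varepsilon<1$ (so the second slot is worse for the blocker via the \emph{ad}--dependent factor), whereas you fix $c_A=0,\,c_B=1$ and instead use $\lambda_1=1/2<1$ (making the second slot worse via the \emph{position}--dependent factor), but the equilibrium and welfare analyses are otherwise identical.
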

		\begin{restatable}{lem}{gsppoanoovb}
			\label{lem:gsp sw poa no ovb}
			The PoA of the social welfare in the GSP when users follow APDC is $\ge K$, in the restricted case the agents do not overbid.
		\end{restatable}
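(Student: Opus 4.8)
The plan is to exhibit a single family of instances, each admitting a no--overbidding Nash equilibrium whose social welfare is only a $1/K$ fraction of the optimum, so that the ratio $\textrm{SW}(\textrm{OPT})/\textrm{SW}(\textrm{worst NE})$ defining the PoA is at least $K$. The source of inefficiency I would exploit is purely the ad--dependent externality of the cascade: an ad with continuation probability $0$ placed in the top slot forces $C_\theta(\cdot)=0$, hence zero CTR, for every ad below it, collapsing the welfare obtainable from $K$ slots down to that of a single slot. The VCG/optimal allocation instead arranges the same ads so the cascade survives, recovering all $K$ slots.

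Concretely, I would take $N=K$ ads and $K$ slots, set $\lambda_s=1$ for $s<K$ (so $\Lambda_s=1$ for all $s\le K$) and $q_a=1$ for every $a$, and give every ad the same value $v$. I make ad $1$ a \emph{blocker} with $c_1=0$ and the remaining $K-1$ ads \emph{regular} with $c_a=1$. First I would compute the optimum: since the blocker is the only ad that interrupts the cascade, the welfare--maximizing allocation places it in the last slot $K$ (where its continuation is irrelevant) and the regular ads in slots $1,\dots,K-1$; then every allocated ad has $\Lambda=C=1$, giving $\textrm{SW}(\textrm{OPT})=Kv$, whereas any allocation putting the blocker in a slot $j<K$ zeroes out slots $j+1,\dots,K$ and yields only $jv\le Kv$.

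Next I would construct the bad equilibrium: all ads bid their value $v$, and the blocker is ranked first (breaking the tie in its favour, or equivalently raising its value by an infinitesimal $\epsilon$). Then only slot $1$ has positive CTR, so $\textrm{SW}(\textrm{NE})=v$ and the ratio is $K$. The step I expect to be the crux is verifying that this profile is a genuine Nash equilibrium under the no--overbidding constraint. Each regular ad sits in a blocked slot with CTR $0$, is therefore never clicked and pays nothing, so its utility is $0$; the only way for it to obtain positive utility is to capture slot $1$, which requires outbidding the blocker, i.e.\ submitting a bid strictly above $v$---exactly what no overbidding forbids. The blocker is indifferent (its utility is $0$ whether it keeps slot $1$ or cedes it), so no agent can profitably deviate. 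The subtlety to handle carefully here is the GSP payment rule for zero--CTR ads: because payments are charged per click, a blocked ad incurs no payment, and this is what keeps the configuration an equilibrium rather than driving the blocked agents to drop out. Taking $\epsilon\to 0$ (or invoking adversarial tie--breaking) then yields $\textrm{PoA}\ge K$.
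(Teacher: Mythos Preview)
Your construction and the overall argument mirror the paper's proof almost exactly: one ``blocker'' ad with $c_1=0$, the remaining ads with $c_a=1$, all $q_a=1$, all $\lambda_s=1$, identical values, truthful bids, and the blocker placed in slot~$1$ so that the GSP welfare collapses to that of a single slot while the APDC optimum achieves $K$.

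There is, however, a concrete gap in your equilibrium verification, and it comes from the choice $N=K$. You assert that the blocker ``is indifferent (its utility is $0$ whether it keeps slot $1$ or cedes it)'', but this is false in your instance. If the blocker lowers its bid below $v$ it becomes the lowest--ranked ad and, since $N=K$, is placed in slot~$K$. Every ad above it has continuation probability $1$, so its CTR there equals $1$; and by the GSP payment rule stated in the paper, $p_K=0$ whenever $K=N$. The blocker's utility after this deviation is therefore $v\cdot 1-0=v>0$, strictly better than the $0$ (or $\epsilon$, in your perturbed version) it earns in slot~$1$. Hence your profile is \emph{not} a Nash equilibrium, and the ratio $K$ is not witnessed by any equilibrium of your instance.

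The paper sidesteps this by taking $N>K$: then an agent who underbids everyone else is either left unallocated or, if it does reach slot~$K$, must still pay $q_{K+1}\hat v_{K+1}=v$, so the blocker's best deviation yields utility $0$ and indifference is genuinely restored. Adding one extra ad with the same parameters (so that $N=K+1$) repairs your argument with no other change.
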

		\begin{restatable}{lem}{gsprevpoa}
			\label{lem:gsp rev poa}
			The PoA of the revenue in the GSP when users follow APDC is unbounded (even without overbidding).
		\end{restatable}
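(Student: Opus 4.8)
The plan is to exhibit a single parametric family of APDC instances together with an explicit no-overbidding bid profile for which the GSP equilibrium revenue can be driven to zero while the VCG revenue stays bounded away from zero; the ratio of the two then diverges, giving an unbounded PoA. Since the claim only requires the \emph{existence} of one bad Nash equilibrium, a minimal instance suffices.

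Concretely, I would take $N=K=2$ with $q_1=q_2=1$, true values $v_1=2>v_2=1$, position factor $\lambda_1=\tfrac12$ (and $\lambda_2=\lambda_K=0$ as required by the model), and continuation probabilities $c_1=c_2=1$. For the equilibrium I consider the profile $\hat v_1=v_1=2$ and $\hat v_2=\epsilon$ with a vanishing $\epsilon>0$, which respects the no-overbidding constraint $\hat v_a\le v_a$. Under this profile the allocation is ad $1\to$ slot $1$ and ad $2\to$ slot $2$, the GSP prices are $p_1=q_2\hat v_2=\epsilon$ and $p_2=0$ (because $K=N$), so the GSP revenue equals exactly $\epsilon$.

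The verification of the Nash property is where the cascade structure and the no-overbidding restriction do the work, and this is the step I expect to require the most care. For agent $2$, climbing to slot $1$ would require reporting a value of at least $\hat v_1=2$, which is impossible under no overbidding since $v_2=1$; hence its only feasible moves keep it in slot $2$ at an unchanged price, and it cannot improve (and even if overbidding were allowed, reaching slot $1$ would cost it a price $q_1\hat v_1=2$ and yield negative utility, so the same profile is an equilibrium in the unrestricted game, establishing the plain ``unbounded'' claim too). For agent $1$, the price it pays depends only on agent $2$'s bid, so its own bid merely selects its slot; since the top slot yields the larger click-through rate ($\textrm{CTR}$ equals $q_1$ in slot $1$ versus $q_1\lambda_1 c_2\le q_1$ in slot $2$), staying in slot $1$ is optimal. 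Thus the profile is a no-overbidding Nash equilibrium for every small $\epsilon>0$.

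Finally I would compute the benchmark. The welfare-optimal allocation coincides with the GSP one, since its social welfare $\bar v_1+\bar v_2\lambda_1 c_1=\tfrac52$ exceeds the swapped allocation's $\bar v_2+\bar v_1\lambda_1 c_2=2$, and the VCG payments are the externalities: agent $1$ pays $\bar v_2-\bar v_2\lambda_1 c_1=\bar v_2(1-\lambda_1 c_1)=\tfrac12$ and agent $2$ pays $0$, giving a VCG revenue of $\tfrac12$ independent of $\epsilon$. Hence the revenue ratio is $\tfrac12/\epsilon\to\infty$ as $\epsilon\to 0^+$, establishing that the revenue PoA is unbounded even without overbidding. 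The only design constraint to keep in mind is $\lambda_1 c_1<1$, needed to make the externality (and thus the VCG revenue) strictly positive; any parameter choice meeting it works.
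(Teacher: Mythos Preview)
Your construction is correct and establishes the lemma, but it takes a genuinely different route from the paper. The paper builds an instance with $N>K$, all $\lambda_s=1$, and $c_1=0$ (with $c_2=\cdots=c_N=1$); bidders $2,\ldots,N$ report $0$, so the GSP revenue is exactly $0$, while under APDC the welfare-optimal allocation pushes ad~$1$ to the last slot and the excluded ad creates a strictly positive VCG payment of $1-\varepsilon$. Your instance instead takes $N=K=2$, sets all $c_a=1$, and obtains a positive VCG externality purely from $\lambda_1<1$; GSP revenue is driven to $0$ by letting bidder~$2$ shade to $\epsilon$. Both work, but note that yours does not use the ad-dependent part of the cascade at all: since every $c_a=1$, your instance is simultaneously a PDC instance, and what you are exhibiting is the classical fact that GSP admits arbitrarily low-revenue equilibria already in the separable model. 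The paper's example is less minimal but isolates the phenomenon to the ad-dependent externality (all $\lambda_s=1$), which is more in the spirit of the section's message. One small imprecision in your write-up: when you argue that agent~$1$ has no profitable deviation, you say its price ``depends only on agent~$2$'s bid'' and then appeal only to the higher CTR in slot~$1$; but dropping to slot~$2$ also changes agent~$1$'s price from $\epsilon$ to $0$, so the correct comparison is $2-\epsilon$ versus $1$, which still goes your way for small $\epsilon$.
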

		\begin{restatable}{lem}{gspswpos}
			\label{lem:gsp sw pos}
			The PoS of the social welfare in the GSP when users follow APDC is 1  (even without overbidding).
		\end{restatable}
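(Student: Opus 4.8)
The bound $\mathrm{PoS}\ge 1$ is immediate: the VCG allocation function returns, by definition, an allocation maximizing $\textrm{SW}$, so no bid profile---and in particular no Nash equilibrium of the GSP---can attain a social welfare exceeding the VCG optimum. Hence the welfare of the best equilibrium is at most the optimum and the ratio is at least $1$. The whole content of the statement is therefore the reverse inequality, which I would establish by exhibiting a single Nash equilibrium whose allocation is socially optimal \emph{and} which uses no overbidding. Since such a profile is simultaneously an equilibrium of the unrestricted game and of the no--overbidding game (ruling out all deviations a fortiori rules out the overbidding--free ones), and since no equilibrium can beat the optimum, this settles both versions of the claim at once.

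The plan is to reproduce the (efficient, IR) VCG outcome inside the GSP. Let $\theta^*$ be a welfare--maximizing allocation and relabel the ads so that ad $a$ sits in slot $a$ for $a\le K$, the remaining ads being unallocated. Because only the \emph{order} of the reported values determines the allocation, there is a large feasible region of bids $\hat v_1\ge\cdots\ge\hat v_K>\hat v_{K+1}\ge\cdots$ that reproduce $\theta^*$, and I would calibrate them bottom--up---from slot $K$ upward---so that each agent is (weakly) indifferent between its slot and the slot immediately below, i.e.\ so that the locally envy--free conditions hold. The overall scale of the bids is then free, and I would use this freedom to keep every bid below the corresponding value, $\hat v_a\le v_a$, so that the profile lies in the no--overbidding set; this is the GSP analogue of the lowest (VCG--equivalent) symmetric equilibrium.

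To upgrade this calibrated profile to a genuine equilibrium I would rule out the three families of unilateral deviations for each agent $a$. Opting out gives utility $0$, which is dominated because the indifference calibration keeps $q_{a+1}\hat v_{a+1}\le v_a\,\textrm{CTR}_{\theta^*}(a)$ (the IR property inherited from VCG). A downward move swaps a higher--prominence slot for a lower price and is exactly controlled by the bottom--up indifference conditions. The subtle case is an upward move to some slot $s'<a$: to land there the agent must outbid the ad currently in $s'$, so after the deviation its GSP price is the quality--weighted bid of the displaced ad, which I would show is large enough to make the move unprofitable, mirroring the ``GSP price $\ge$ VCG externality'' inequality.

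The main obstacle is precisely this verification under the APDC externalities. In the position--dependent model a deviation does not alter which ads sit above the deviator, so slot utilities are separable and local envy--freeness propagates to a global equilibrium by a single--crossing argument. Here a unilateral bid change of $a$ changes \emph{which} ads precede it, and hence changes its own continuation factor $C_{\theta^*}(a)=\prod_{s<\textrm{slot}_{\theta^*}(a)}c_{\textrm{ad}_{\theta^*}(s)}$ together with the prominence $\Lambda$; the utility of a deviation is not separable across slots, so I cannot simply quote the local--to--global reduction. I expect the crux to be re--deriving the adjacent--swap inequality with the continuation products carried along---exploiting the telescoping relations $C_{a+1}=c_a\,C_a$ and $\Lambda_{a+1}=\lambda_a\Lambda_a$---and then arguing that, for this particular efficient profile, no distant upward jump can beat the adjacent comparison. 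Securing the no--overbidding bound $\hat v_a\le v_a$ simultaneously with the ordering and the indifference constraints is the remaining point that needs care.
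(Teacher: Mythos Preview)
Your plan tries to replicate the efficient VCG outcome as a GSP equilibrium for an \emph{arbitrary} instance, via a locally--envy--free bid construction in the style of Edelman--Ostrovsky--Schwarz/Varian. The paper does something far more modest: after noting that no equilibrium can exceed the VCG optimum (so the ratio is always $\ge 1$), it simply exhibits one fully symmetric instance---$v_a=q_a=c_a=1$ for all $a$, $\lambda_s=1$ for all $s<K$, and truthful bids $\hat v_a=v_a$---in which the GSP allocation $\theta=(1,\dots,K)$ is optimal and truthful reporting is a Nash equilibrium (no agent gains by lowering her bid). That single example is the whole proof.

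So your approach is genuinely different and strictly more ambitious. It also runs into the very obstacle you flag: under APDC a unilateral upward deviation changes which ads precede the deviator, so the continuation factor $C_\theta(a)$ is not held fixed across candidate slots, and the usual local--to--global envy--free argument does not transfer. You would have to control non--adjacent upward jumps where the set of predecessors---and hence the deviator's own click probability---changes, and it is not clear that the telescoping identities you mention are enough to close this. Whether every APDC instance admits an efficient GSP equilibrium is not something the paper establishes (or needs), so your route is both harder and at real risk of not going through. If the aim is to match the paper's result, the single concrete instance with all parameters equal to $1$ suffices and avoids all of these difficulties.
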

		\begin{restatable}{lem}{gsprevpos}
			\label{lem:gsp rev pos}
			The PoS of the revenue in the GSP when users follow APDC is 0  (even without overbidding).
		\end{restatable}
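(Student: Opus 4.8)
The plan is to exhibit a single APDC instance in which the VCG mechanism collects strictly positive revenue, yet the GSP admits a social-welfare-optimal (hence stability-relevant) Nash equilibrium, using no overbidding, whose revenue is exactly $0$. Since the price of stability of the revenue compares the revenue of the stable (efficient) GSP equilibrium against the VCG benchmark, such an instance drives the ratio $(\text{stable equilibrium revenue})/(\text{VCG revenue})$ to $0/(\text{positive})=0$, which is the required direction: the equilibrium revenue is the numerator and collapses to $0$, while the positive benchmark stays in the denominator.

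For the construction I would take $K=N=2$ with unit qualities $q_1=q_2=1$, values $v_1>v_2>0$, continuation probabilities $c_1\in(0,1]$ and $c_2=0$, and factorized prominence $\lambda_1\in(0,1)$ (so $\Lambda_1=1$ and $\Lambda_2=\lambda_1$). First I would check that the efficient allocation places ad $1$ in slot $1$ and ad $2$ in slot $2$: it yields $\textrm{SW}=v_1+v_2\lambda_1 c_1$, whereas the swapped allocation gives only $v_2+v_1\lambda_1 c_2=v_2$ because $c_2=0$. Then I would compute the VCG payments in this allocation: ad $1$ pays the harm it inflicts on ad $2$, namely $v_2-v_2\lambda_1 c_1=v_2(1-\lambda_1 c_1)$, while ad $2$ pays $0$ (removing it does not change ad $1$'s click-through rate). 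Hence the VCG revenue equals $v_2(1-\lambda_1 c_1)>0$, since $\lambda_1 c_1<1$.

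Next I would display the stable equilibrium: ad $1$ reports $\hat v_1=v_1$ and ad $2$ reports $\hat v_2=0$. These reports sort as $\hat v_1>\hat v_2$, so the GSP allocation coincides with the efficient one and the social-welfare optimum is attained. The GSP prices are $p_1=q_2\hat v_2=0$ and $p_2=0$ (because $K=N$), so the GSP revenue is $0$. I would verify this is a no-overbidding Nash equilibrium: ad $1$ gets utility $v_1$ in slot $1$ versus $v_1\lambda_1 c_2=0$ in slot $2$, so it does not deviate; ad $2$ obtains $v_2\lambda_1 c_1\ge 0$ in slot $2$, while reaching slot $1$ would require reporting above $v_1>v_2$, i.e.\ overbidding, and even then would yield $v_2-q_1\hat v_1<0$. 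Thus the efficient outcome is sustained with zero revenue, and the price of stability of the revenue is $0/(v_2(1-\lambda_1 c_1))=0$.

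The crux — and the step I would argue most carefully — is the equilibrium selection behind the price of stability. Other efficient equilibria exist in which ad $2$ reports any value in $[0,v_2]$ (without overbidding it is always confined to slot $2$), and these yield positive revenue up to $q_2 v_2$; what makes the revenue collapse to $0$ is precisely that ad $2$, occupying the last slot, is indifferent over its own report, since its own price is always $p_2=0$ independently of $\hat v_2$. Consequently stability — which pins down only that the allocation maximizes social welfare — imposes no positive lower bound on the revenue. I would therefore emphasize that the zero-revenue profile is a genuine social-welfare-optimal equilibrium (matching the regime of the $\textrm{PoS}=1$ result for welfare), which is exactly what certifies $\textrm{PoS}=0$ for the revenue against the strictly positive VCG benchmark, and that the entire argument uses no overbidding.
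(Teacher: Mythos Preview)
Your construction and your equilibrium analysis are fine, but you have the \emph{direction} of the ratio reversed, and consequently your example does not prove the lemma.

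In this paper, the PoA and PoS are measured as $(\text{VCG--APDC value})/(\text{GSP equilibrium value})$. This is visible from Lemma~\ref{lem:gsp rev poa}: the proof there exhibits a Nash equilibrium with GSP revenue $0$ and strictly positive VCG revenue, which makes the ratio unbounded. Your instance does exactly the same thing (GSP revenue $0$, VCG revenue $v_2(1-\lambda_1 c_1)>0$), so it is another proof that the PoA of the revenue is unbounded --- not that the PoS is $0$. For the PoS we take the \emph{best} equilibrium, i.e.\ the one \emph{maximizing} the GSP revenue, and the claim $\textrm{PoS}=0$ says that this ratio can be driven to $0$: one needs an instance where the VCG--APDC revenue is $0$ while some GSP Nash equilibrium (without overbidding) collects strictly positive revenue. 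Your own discussion actually flags the problem: you note that other efficient equilibria of your instance yield revenue up to $q_2 v_2>0$, so in your instance the PoS ratio is not $0$ at all.

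The paper's proof goes the other way. It takes $N=K=2$, $q_1=q_2=1$, $v_1=1$, $v_2=1/3$, $c_1=1$, $c_2=1/2$, $\lambda_1=1$, with truthful bids. The GSP places ad~$1$ first, charges ad~$1$ the price $q_2\hat v_2=1/3$, and collects revenue $1/3$; this profile is a no--overbidding Nash equilibrium. Under VCG with APDC the same allocation is optimal, but ad~$1$'s externality on ad~$2$ is $\bar v_2-\lambda_1 c_1 \bar v_2=0$ (because $\lambda_1 c_1=1$) and ad~$2$ pays $0$, so the VCG revenue is $0$. Hence $\textrm{PoS}\le 0/(1/3)=0$. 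To fix your argument you need an instance of this type, where the APDC externalities make the VCG payments vanish while the second--price rule of the GSP still extracts something.
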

		
		Proofs of the previous lemmas can be found in \SupplementalMaterial{}. Lemma~\ref{lem:gsp sw poa ovb} and \ref{lem:gsp sw poa no ovb} together show a huge dispersion of the social welfare of the allocations associated with GSP equilibria: it may be arbitrarily larger than the social welfare of the APDC. Lemma~\ref{lem:gsp rev poa} and \ref{lem:gsp rev pos} show that this situation gets even worse when we turn our attention onto the auctioneer's revenue, with the revenue of GSP equilibria being arbitrarily smaller or larger than that of the APDC.
		
	\subsection{VCG with PDC analysis}
		The PDC model is very similar to the APDC: the difference lies in the fact that $c_a = 1$ for each ad $a \in \mathcal{A}$. Payments are calculated according to the rules attaining the VCG mechanism, thus providing a IR, WBB, DSIC mechanism under the assumption that users experience ad fatigue due only to the positions of the slots and not to the allocated ads. When the user's behavior is affected by ad continuation probabilities, the above mentioned properties do not necessarily apply anymore. Indeed, the following lemma holds:
		\begin{restatable}{lem}{vcgpdnotir}
			\label{lem:vcgpd not ir}
			The VCG with PDC  is neither  IR nor DSIC, when users follow APDC.
		\end{restatable}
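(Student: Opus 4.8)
Here is how I would approach the proof. The plan is to refute both properties with a single explicit APDC instance, since one counterexample suffices. First I would pin down exactly what the mechanism does: its allocation function $f$ maximizes the \emph{PDC} social welfare, which—because $c_a=1$ forces $C_\theta(a)=1$—is just the assignment obtained by sorting the ads by decreasing $\bar v_a$ and filling the slots top--down; its payment $p_a$ is the VCG externality computed in the same fictitious PDC world. The crucial observation is a bookkeeping mismatch between what the mechanism charges and what the agent actually gets: under truthful reports the realized expected payment equals the PDC externality $p_a$, whereas the realized value is $v_a\cdot\textrm{CTR}_{f(\mathbf v)}(a)$ with the \emph{true} APDC click--through rate, which is strictly smaller than the value the mechanism's PDC model predicts whenever an ad with $c<1$ is placed above $a$. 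Isolating this mismatch is the core of the argument.

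For the IR violation I would take $K=2$ slots with $\lambda_1=1$ (so $\Lambda_1=\Lambda_2=1$) and three ads of unit quality with values $v_1=2$, $v_2=1$, $v_3=0.9$, and set $c_1=\tfrac12$. Sorting by $\bar v_a$, the PDC--optimal allocation places ad $1$ in slot $1$, ad $2$ in slot $2$, and leaves ad $3$ unallocated. Removing ad $2$ lets ads $1$ and $3$ fill the two slots, so the VCG externality of agent $2$ is $p_2=\bar v_3\,\Lambda_2=0.9$. Its realized value, however, is $v_2\,\textrm{CTR}_{f(\mathbf v)}(2)=\bar v_2\,\Lambda_2\,c_1=0.5$, because ad $1$ sits above it with continuation $c_1=\tfrac12$. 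Hence agent $2$'s realized utility is $0.5-0.9<0$, so a truthful agent strictly prefers to abstain, and the mechanism is not IR.

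For the DSIC violation I would reuse the same instance. With the other reports fixed, agent $2$ reporting truthfully obtains utility $-0.4$, but by underreporting any $\hat v_2<0.9$ (say $\hat v_2=0$) it drops below ad $3$ in the sort, becomes unallocated, and secures utility $0>-0.4$. Thus truthful reporting is not dominant and the mechanism is not DSIC. I would then verify that all parameters are admissible ($c_1,\lambda_1,q_a\in[0,1]$ and $\lambda_2=0$) and that the stated allocations are genuinely PDC--optimal under the chosen tie--breaking.

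The main obstacle is conceptual rather than computational: correctly identifying which click--through rate enters each term of the realized utility—namely that the expected payment is the PDC externality while the realized value is driven by the true APDC $\textrm{CTR}$, so that the continuation probability of the ad above silently erodes the agent's payoff without lowering its charge. Once this accounting is fixed, the counterexample collapses to the elementary arithmetic above.
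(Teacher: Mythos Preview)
Your proposal is correct and follows essentially the same approach as the paper: exhibit a single instance in which an agent's PDC--computed VCG externality exceeds her realized APDC value, yielding negative truthful utility (IR fails) and a profitable deviation to underbid and drop out (DSIC fails). The paper's instance is slightly more streamlined---it takes $N>K\ge 2$ identical ads with $v_a=q_a=1$, $c_a=0$, and $\lambda_s=1$, so that any agent placed in slot~$2$ pays~$1$ but receives value~$0$---whereas your instance uses distinct values and $c_1=\tfrac12$; but the mechanism of the argument is the same, and your treatment of the DSIC deviation is in fact more explicit than the paper's.
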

		We study the inefficiency of Nash equilibria, VCG with PDC  not being truthful.
		\begin{restatable}{lem}{vcgpdswpoaovb}
			\label{lem:vcgpd sw poa ovb}
			The PoA of the social welfare in the VCG with PDC when users follow APDC is unbounded.
		\end{restatable}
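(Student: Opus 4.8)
The plan is to exhibit a parametric family of APDC instances together with an overbidding Nash equilibrium of the VCG-with-PDC game whose allocation is ruined by a continuation-blocking effect that the PDC-based mechanism cannot see, while the efficient (VCG based on APDC) benchmark stays bounded away from zero; exhibiting a single bad equilibrium suffices, since it lower-bounds the worst-case ratio that defines the PoA. Concretely, I would take $K=N=2$ with $\lambda_1 = 1-\eta$ (so $\Lambda_1 = 1$ and $\Lambda_2 = 1-\eta$), two ads with $q_1=q_2=1$, continuation probabilities $c_1=c_2=0$, and values $\bar{v}_1 = \varepsilon$, $\bar{v}_2 = 1$, where $\eta = \varepsilon/2$ and $\varepsilon \to 0$. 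The candidate equilibrium has ad $1$ reporting an arbitrarily large value and ad $2$ reporting truthfully, so that the mechanism's PDC-optimal allocation (which ranks ads by reported $\bar{v}$) places ad~$1$ in slot~$1$ and ad~$2$ in slot~$2$.

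I would then compute both welfares. Because $c_1 = 0$, ad~$2$ in slot~$2$ is never reached, so the realized APDC social welfare is $\textrm{SW} = \bar{v}_1 \Lambda_1 = \varepsilon$, whereas the efficient allocation (ad~$2$ in slot~$1$) yields $\textrm{SW} = \bar{v}_2 \Lambda_1 = 1$; hence the ratio is $1/\varepsilon \to \infty$. It then remains to certify that the proposed profile is a Nash equilibrium under the true (APDC) utilities.

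The crux, and the main obstacle, is this equilibrium verification, which rests on two features of the VCG-with-PDC rule. First, each agent's payment equals the externality it imposes on the others and is therefore independent of its own report, so the winner of slot~$1$ can overbid at no cost to itself while inflating the price any challenger must pay to seize slot~$1$ (the winner's reported value times $\Lambda_1 - \Lambda_2 = \eta$) to an arbitrarily large amount; this is precisely what traps ad~$2$ in slot~$2$. Second, the per-click charge $p_a/\textrm{CTR}$ implies that an ad sitting behind a zero-continuation ad pays nothing, since it is never clicked, but also earns nothing, which—because both continuation probabilities are zero—makes slot~$1$ strictly preferable to slot~$2$ for the low-value ad and pins it on top. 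I would check that ad~$1$'s payment $\bar{v}_2(\Lambda_1-\Lambda_2) = \eta < \varepsilon$ leaves its utility $\varepsilon - \eta > 0$, that no unilateral re-report of either agent is profitable, and—since the mechanism is not IR (Lemma~\ref{lem:vcgpd not ir})—that the equilibrium never drives a participating agent below the zero utility it could guarantee by dropping out. The careful bookkeeping of the PDC-versus-APDC click-through rescaling in each candidate deviation is the step most likely to hide a sign error, so I would carry it out explicitly rather than by analogy with the standard (correctly specified) VCG argument.
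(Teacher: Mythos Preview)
Your proposal is correct and follows essentially the same approach as the paper's proof: both construct a two-ad, two-slot instance in which a low-value ad with continuation probability zero overbids into slot~1, while the VCG-with-PDC externality payment deters the high-value ad from outbidding it; the realised APDC welfare is then driven to zero (or to $\varepsilon$) while the efficient benchmark stays at~1. The only differences are cosmetic: the paper takes $v_1=0$, $\hat v_2=0$, $\lambda_1=1/2$, $\hat v_1=4$ so that the equilibrium welfare is exactly~0 in a single fixed instance (no limiting argument needed and both agents sit at zero utility), whereas you keep $v_1=\varepsilon>0$, let ad~2 bid truthfully, and tune $\lambda_1=1-\varepsilon/2$ so that ad~1's externality payment $\eta<\varepsilon$ leaves it strictly positive utility, then send $\varepsilon\to 0$.
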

		\begin{restatable}{lem}{vcgpdswpoanoovb}
			\label{lem:vcgpd sw poa no ovb}
			The PoA of the social welfare in the VCG with PDC when users follow APDC is $\ge K$, in the restricted case the agents do not overbid.
		\end{restatable}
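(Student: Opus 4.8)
My plan is to establish the bound by exhibiting, for every $K$, a family of instances (parametrised by two vanishing quantities $\epsilon,\eta>0$) together with an explicit non--overbidding bid profile that I will verify to be a Nash equilibrium of the VCG--with--PDC mechanism and whose true (APDC) social welfare is a factor tending to $K$ below the optimum. Since the worst equilibrium welfare is at most the welfare of this one equilibrium, exhibiting it lower--bounds the price of anarchy of the instance by $\mathrm{SW}(\theta^\star)/\mathrm{SW}(\text{NE})$. The single phenomenon I want to exploit is that the allocation rule of VCG with PDC ignores the continuation probabilities (it behaves as if $c_a=1$ for every $a$), so it can be induced to place on top an ad whose real continuation probability is $0$, which then truncates the cascade and zeroes the click--through rate of everything below it.

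Concretely I would take $N=K$ ads with unit qualities, and slots with prominences $\Lambda_1>\Lambda_2>\cdots>\Lambda_K$ chosen strictly decreasing but all inside $[1-\eta,1]$; since $\Lambda_1$ is an empty product this forces $\Lambda_1=1$, and it only requires $\lambda_1,\dots,\lambda_{K-1}$ slightly below $1$, leaving $\lambda_K=0$ as prescribed. Ad $1$ is the \emph{blocker}: $v_1=1$ and $c_1=0$. Each ad $a\in\{2,\dots,K\}$ has $v_a=1-\epsilon$ and $c_a=1$. The profile I analyse is $\hat v_1=1$ and $\hat v_a=0$ for $a\ge2$, which is admissible under no overbidding. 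Because ad $1$ is the unique strictly positive effective bid and the prominences strictly decrease, the PDC--optimal allocation puts ad $1$ in slot $1$ and the remaining ads (all tied at bid $0$, hence interchangeable and irrelevant to what follows) in the slots beneath.

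The core of the argument is the equilibrium verification. Since ad $1$ is the only agent with a positive bid, every VCG--with--PDC payment is $0$: deleting any single agent leaves the best attainable PDC welfare of the remaining agents equal to its value in the chosen allocation (namely $\Lambda_1$ if ad $1$ survives, and $0$ otherwise), so the externality each agent imposes is null. Utilities therefore reduce to $v_a\cdot\mathrm{CTR}_\theta(a)$. Ad $1$ attains $\mathrm{CTR}=\Lambda_1=1$ in slot $1$; any admissible change of its bid can only move it to a less prominent slot beneath ads of continuation $1$, strictly lowering $\Lambda_{\mathrm{slot}_\theta(1)}$ and hence its utility, while overbidding is forbidden, so slot $1$ is a best response. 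Each ad $a\ge2$ currently earns $\mathrm{CTR}=0$ because $c_1=0$ kills the cascade immediately below the blocker; to gain a positive click--through rate it would have to be ranked above ad $1$, which is impossible since no overbidding caps its effective bid at $q_a(1-\epsilon)=1-\epsilon<1$. Every admissible deviation thus keeps it below the blocker with zero CTR and, being a move among zero--value bidders, still zero payment, so it cannot strictly improve. Hence the profile is a non--overbidding Nash equilibrium with true social welfare exactly $v_1\Lambda_1=1$.

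It then remains to lower--bound the optimum and take limits. Placing the $K-1$ continuation--$1$ ads in slots $1,\dots,K-1$ and the blocker in slot $K$ keeps the cascade intact, so the ad in slot $s$ attains $\mathrm{CTR}=\Lambda_s$ and the welfare is $(1-\epsilon)\sum_{s=1}^{K-1}\Lambda_s+\Lambda_K$. Consequently the price of anarchy of the instance is at least this quantity, which tends to $K$ as $\eta\to0$ and $\epsilon\to0$, yielding $\mathrm{PoA}\ge K$. I expect the delicate step to be precisely the equilibrium check: one must confirm that the mismatch between the PDC--based allocation/payments and the APDC--based true utilities never rewards a deviation — concretely, that the blocked low--value ads are genuinely trapped beneath the zero--continuation ad by the no--overbidding cap, and that driving their bids to $0$ is a best response that keeps all VCG payments equal to $0$ (so that being blocked costs them nothing and leaves them with no incentive to move).
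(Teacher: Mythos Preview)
Your construction is correct and exploits the same phenomenon as the paper---a zero--continuation ``blocker'' ad sitting in slot~$1$ kills the cascade---but the paper's own proof is just a one--line pointer to the instance of Lemma~\ref{lem:gsp sw poa no ovb}: $N>K$, all $v_a=q_a=1$, $c_1=0$ and $c_a=1$ for $a\ge2$, all $\lambda_s=1$, and everyone bids truthfully $\hat v_a=1$. Your execution differs in two deliberate ways. First, you take strictly decreasing prominences in $[1-\eta,1]$ rather than all equal to~$1$; this makes the PDC--optimal allocation unique and removes any dependence on tie--breaking (in the paper's instance every PDC allocation has reported welfare~$K$, so putting ad~$1$ on top is purely a tie--breaking choice). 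Second, you drive the non--blocker bids to~$0$ rather than~$1$; this forces every VCG--with--PDC payment to vanish, so the equilibrium check collapses to a pure CTR comparison and you sidestep the question of whether an agent with APDC click--through rate~$0$ is still charged its positive PDC--based externality (in the paper's instance each allocated agent has PDC externality~$1$, which under a lump--sum reading of $p_a$ would give the blocked agents utility~$-1$ and a profitable deviation to bidding~$0$). The price you pay is a two--parameter limit rather than an exact ratio of~$K$; what you buy is a self--contained equilibrium verification that does not lean on tie--breaking conventions or on how payments are realised when the true CTR is zero.
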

		\begin{restatable}{lem}{vcgpdrevpoa}
			\label{lem:vcgpd rev poa}
			The PoA of the revenue in the VCG with PDC when users follow APDC  is unbounded (even without overbidding).
		\end{restatable}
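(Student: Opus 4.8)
The plan is to exhibit a single small APDC instance on which the VCG with PDC admits a \emph{no--overbidding} Nash equilibrium collecting zero revenue, while the benchmark VCG based on APDC collects strictly positive revenue; the ratio of the two is then infinite, which already witnesses an unbounded PoA (for a worst--case measure it suffices to display one equilibrium with bad revenue). Since the restricted no--overbidding case implies the unrestricted one, I would only treat the former.

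I would take $N=K=2$, with two ads $A,B$ such that $q_A v_A > q_B v_B$ and all continuation/prominence factors strictly below $1$. The two useful structural facts of the VCG with PDC here are: (i) the agent placed in slot $2$ pays nothing, because removing it leaves the slot--$1$ agent untouched, so its PDC externality is null; and (ii) the payment of the slot--$1$ agent equals the PDC externality it imposes, namely $q_B \hat v_B(1-\lambda_1)$, which is \emph{proportional to the report of the slot--$2$ agent}. Now consider the profile in which $A$ reports $\hat v_A=v_A$ and $B$ reports $\hat v_B=0$. Because $q_A v_A > q_B v_B$, under no overbidding $B$ cannot raise its PDC--score above $A$'s, hence $B$ is locked into slot $2$ and its true utility $v_B\,\textrm{CTR}_\theta(B)=v_B q_B \lambda_1 c_A$ is independent of its own report; $B$ is therefore content to bid $0$, and $A$, already winning slot $1$ at zero payment, has no improving move either. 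This is a no--overbidding Nash equilibrium, and by (i)--(ii) its total revenue is exactly $0$.

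For the benchmark I would compute the VCG based on APDC, which is truthful, so agents report $v_A,v_B$ and the allocation maximises the true APDC social welfare. Whichever of the two orderings is optimal, the winner of slot $1$ imposes a strictly positive APDC externality on the other ad ($v_B q_B(1-\lambda_1 c_A)$ or $v_A q_A(1-\lambda_1 c_B)$, respectively), so the benchmark revenue is strictly positive. Dividing the positive benchmark revenue by the zero equilibrium revenue yields an unbounded PoA.

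The main obstacle is certifying the no--overbidding equilibrium rather than the arithmetic. The conceptual point to get right is that a VCG payment is an \emph{externality}, so it is controlled not by the payer's own report but by the reports of the agents it displaces; the slot--$2$ agent, being locked in place by the no--overbidding constraint and indifferent across all reports that keep it there, can drive those displaced reports (and hence the collected payments) to zero. I would have to verify carefully that (a) the locking condition $q_A v_A> q_B v_B$ really forbids $B$ from reaching slot $1$ without overbidding, (b) neither agent has a profitable deviation within the no--overbidding strategy space, and (c) the two payment rules (PDC for the equilibrium, APDC for the benchmark) are applied consistently to the same instance. If one prefers a strictly positive equilibrium revenue tending to $0$, the same instance with $\hat v_B=\varepsilon\to 0$ and a mild scaling furnishes a family whose PoA diverges.
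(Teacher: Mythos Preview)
Your proposal is correct and takes a genuinely different route from the paper. The paper's instance has $N=K\ge 2$ \emph{identical} agents ($v_i=q_i=c_i=1$, $\lambda_s=1/2$) all bidding truthfully, and asserts that the VCG with PDC charges every bidder zero while the VCG with APDC charges strictly positive amounts. Your instance instead uses two \emph{distinct} agents with $q_Av_A>q_Bv_B$ and has the weaker agent bid $\hat v_B=0$: the no--overbidding cap locks $B$ into slot~$2$, $B$'s true APDC utility there is independent of its report, and by reporting $0$ it annihilates $A$'s PDC externality payment $q_B\hat v_B(1-\lambda_1)$. The equilibrium verification you sketch is complete, and the APDC benchmark revenue is $q_Bv_B(1-\lambda_1c_A)>0$ (or the symmetric expression if the APDC optimum swaps the order), so the ratio is indeed infinite.

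Your construction has two advantages worth noting. First, it isolates the economic mechanism at work---the bottom agent controls the top agent's externality payment and, being indifferent, can drive it to zero---which is the right intuition and mirrors the GSP revenue construction in Lemma~\ref{lem:gsp rev poa}. Second, it avoids a delicate point in the paper's symmetric instance: with $N=K$ identical agents, removing agent~$i$ leaves $K-1$ agents for $K$ slots, and the standard Clarke pivot for the agent in slot $s<K$ is $\Lambda_s-\Lambda_K>0$, not zero; moreover, since every $c_i=1$ there, the APDC and PDC models coincide on that instance, so one would expect identical revenues. Your example sidesteps this entirely. The only detail to make explicit in a final write--up is that $v_B>0$ and $q_B>0$, so that the APDC benchmark revenue is strictly positive.
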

		\begin{restatable}{lem}{vcgpdswpos}
			\label{lem:vcgpd sw pos}
			The PoS of the social welfare in the VCG with PDC when users follow APDC is 1.
		\end{restatable}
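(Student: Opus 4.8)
The plan is to establish the claim by exhibiting, for every instance, a Nash equilibrium of the VCG--with--PDC mechanism whose true (APDC) social welfare equals the APDC optimum $\mathrm{SW}(\theta^\star)$. Since the welfare of any equilibrium is at most $\mathrm{SW}(\theta^\star)$, producing one equilibrium that attains it makes it a best equilibrium and forces the PoS to be $1$. First I would record how the mechanism behaves: because it treats the model as PDC (i.e.\ it acts as if $c_a=1$), on a reported profile $\hat{\mathbf v}$ it maximizes $\sum_a \hat v_a q_a\,\Lambda_{\mathrm{slot}_\theta(a)}$, so it ranks the ads by the products $\hat v_a q_a$ and fills the slots top--down; moreover, when the non--allocated ads report $0$ (and hence impose no externality), its VCG payments collapse to the closed form $p_{a_s}=\sum_{j>s}\hat v_{a_j}q_{a_j}\,(\Lambda_{j-1}-\Lambda_j)$ in expectation.

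Next I would fix an APDC--optimal allocation $\theta^\star$, which may be taken gap--free since shifting ads up past an empty slot never decreases $\mathrm{SW}$, placing ads $a_1,\dots,a_m$ in slots $1,\dots,m$. The idea is to induce $\theta^\star$ through a possibly over--reported profile: let the allocated agents report values whose products satisfy $\hat v_{a_1}q_{a_1}>\cdots>\hat v_{a_m}q_{a_m}>0$ in the order prescribed by $\theta^\star$, and let every other agent report $0$. The mechanism then outputs $\theta^\star$ on the top $m$ slots, while the remaining slots contribute nothing to the APDC welfare (otherwise filling them would strictly improve $\theta^\star$, contradicting optimality), so the realized welfare is exactly $\mathrm{SW}(\theta^\star)$. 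It remains only to pick the magnitudes of the reports so that the profile is an equilibrium.

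The hard part is precisely this equilibrium verification, because the VCG--with--PDC payments ignore the APDC continuation factors $C_\theta(a)$, so an allocated agent may be tempted to move toward a better slot ``for free'' in terms of payment while gaining clicks. I would reduce the analysis to the adjacent deviations of an allocated agent $a_s$: its refusal to move up one slot lower--bounds the report immediately above it by some $U_s$, and its refusal to move down one slot upper--bounds the report immediately below it by some $D_s$, where $U_s,D_s$ are explicit in the data $\bar v$, $c$, and $\Lambda$. Consequently each report $\hat v_{a_s}q_{a_s}$ is squeezed between a lower bound coming from its lower neighbour's no--move--up condition and an upper bound coming from its upper neighbour's no--move--down condition. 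The key leverage is that the APDC--optimality of $\theta^\star$ is itself an adjacent--exchange inequality, and that inequality rearranges exactly into $D_s\ge U_{s+1}$, which is the compatibility relation that renders the two families of bounds mutually consistent; from it I would assemble a strictly decreasing supporting profile and then dispatch the two remaining deviations, namely an allocated agent dropping out (individual rationality of the chosen reports) and a zero--reporting agent bidding itself in (its APDC marginal contribution is non--positive by optimality, hence below its VCG price). The main obstacle, where essentially all the work concentrates, is proving the \emph{simultaneous} feasibility of this whole system of supporting--bid inequalities and extending the adjacent conditions to arbitrary target slots via a single--crossing argument; I would also have to treat separately the degenerate flat--prominence case $\Lambda_s=\Lambda_{s+1}$, in which the PDC payments price position not at all, by breaking the mechanism's ties in favour of $\theta^\star$.
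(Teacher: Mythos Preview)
Your approach is \emph{far} more ambitious than what the paper actually does. The paper's proof of this lemma is a one--liner: it simply reuses the setting of Lemma~\ref{lem:gsp sw pos} (the GSP PoS lemma), namely the single benign instance with $v_a=q_a=c_a=1$ for all $a$ and $\lambda_s=1$ for all $s$, observes that truthful reporting is a Nash equilibrium of VCG--with--PDC there, and that the resulting allocation is also APDC--optimal. That is all: the paper exhibits \emph{one} instance in which the best equilibrium attains the APDC optimum, and combines this with the trivial bound $\mathrm{PoS}\ge 1$ to conclude.

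By contrast, you set out to prove the \emph{universal} statement that for \emph{every} instance there is an equilibrium realising $\mathrm{SW}(\theta^\star)$, via an explicit supporting--bid construction around an APDC--optimal $\theta^\star$. That is a genuinely different (and much stronger) argument. If it could be completed it would subsume the paper's proof, but as you yourself flag, the heart of the matter---the simultaneous feasibility of the whole system of lower/upper bounds on the supporting bids, the extension from adjacent to arbitrary target slots, and the degenerate case $\Lambda_s=\Lambda_{s+1}$---is left as ``the main obstacle'' and not actually carried out. So relative to the paper your proposal is both overkill and, at present, incomplete: the paper never attempts an instance--independent equilibrium construction, and nothing in its proof requires the adjacent--exchange/single--crossing machinery you outline.

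In short: the paper treats ``$\mathrm{PoS}=1$'' by pointing to a single trivial instance (identical to the GSP case), whereas you are sketching a general supporting--bids existence theorem. If your goal is to match the paper, the one--instance argument suffices; if your goal is the stronger universal claim, be aware that the feasibility step you identify as hard is indeed the crux and is not addressed anywhere in the paper.
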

		\begin{restatable}{lem}{vcgpdrevposovb}
			\label{lem:vcgpd rev pos ovb}
			The PoS of the revenue  in the VCG with PDC when users follow APDC is 0.
		\end{restatable}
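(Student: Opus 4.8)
The plan is to exploit the single structural flaw of the VCG with PDC: its payments are computed as if $c_a = 1$ for every ad, so the externality charged to a high ad is proportional to the \emph{reported} value of the ad sitting just below it, with no continuation--probability discount. When overbidding is allowed, a lower ad can therefore inflate the payment of the ad above it without bound, while the true VCG--APDC revenue on the same instance stays fixed. I would build a one--parameter family of instances on which some Nash equilibrium extracts a revenue that is an arbitrarily large multiple of the VCG--APDC benchmark, so that the benchmark--to--equilibrium revenue ratio tends to $0$, which is exactly $\mathrm{PoS}=0$.

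Concretely, take $K=2$ slots and $N=2$ ads $A,B$ with $q_A=q_B=1$, prominence $\lambda_1=\lambda\in(0,1)$ (so $\Lambda_1=1$, $\Lambda_2=\lambda$), fixed continuation probabilities $c_A,c_B\in(0,1)$, and values $v_A=M$, $v_B=1$ with $M\to\infty$. For large $M$ the efficient allocation places $A$ in slot~$1$ and $B$ in slot~$2$, and the VCG--APDC benchmark revenue is the externality $A$ imposes on $B$, namely $R^\ast=v_B(\Lambda_1-\Lambda_2 c_A)=1-\lambda c_A$, a fixed positive constant ($B$ pays $0$, since removing it leaves $A$ in slot~$1$ unchanged). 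Now consider the profile in which both agents report $\hat v_A=\hat v_B=M\,\frac{1-\lambda c_B}{1-\lambda}$ (well above their true values, so overbidding is essential), with $A$ winning slot~$1$ by tie--breaking. Because payments are evaluated in the PDC model, $A$ is charged $\hat v_B(\Lambda_1-\Lambda_2)=\hat v_B(1-\lambda)=M(1-\lambda c_B)$ while $B$ is charged $0$, giving VCG revenue $M(1-\lambda c_B)\to\infty$.

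The main work---and the step I expect to be delicate---is verifying that this profile is a Nash equilibrium under the true APDC click behaviour. Here $A$'s actual utility is $v_A\cdot 1-\hat v_B(1-\lambda)=M\lambda c_B\ge 0$, while its only meaningful deviation is to undercut $\hat v_B$ and fall to slot~$2$, which yields $v_A\,\Lambda_2 c_B=M\lambda c_B$; the chosen $\hat v_B$ is exactly the largest report for which $A$ is (weakly) unwilling to do so, which is where the factor $\frac{1-\lambda c_B}{1-\lambda}$ comes from. Symmetrically, $B$ earns $v_B\,\Lambda_2 c_A=\lambda c_A$ in slot~$2$ and pays nothing, whereas climbing to slot~$1$ by outbidding $A$ would cost it the PDC externality $\hat v_A(1-\lambda)=M(1-\lambda c_B)$, making that deviation sharply unprofitable precisely because $\hat v_A$ is large (again overbidding). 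Once these two inequalities are checked, the equilibrium revenue is $M(1-\lambda c_B)$ and the ratio $R^\ast/\big(M(1-\lambda c_B)\big)\to 0$, so $\mathrm{PoS}=0$. I would close by noting that overbidding is genuinely necessary: with $\hat v_B\le v_B=1$ the VCG--PDC revenue is capped at $1-\lambda<R^\ast$, so the ratio stays bounded away from $0$---consistent with the statement being confined to the overbidding regime.
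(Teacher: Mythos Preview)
Your argument is correct, but it is considerably more elaborate than the paper's. The paper exploits a different, and in a sense dual, trick: instead of driving the VCG--PDC equilibrium revenue to infinity while keeping the APDC benchmark fixed, it simply makes the APDC benchmark revenue equal to~$0$ on a single small instance. Concretely, with $N=K=2$, $q_1=q_2=1$, $\lambda_1=\tfrac12$, $c_1=1$, $c_2=\tfrac12$, $v_1=1$, $v_2=0$, the APDC mechanism places ad~$1$ first and charges it the externality on ad~$2$, which is $v_2(1-\lambda_1 c_1)=0$; hence the APDC revenue is exactly~$0$. At the reported profile $\hat v_1=1$, $\hat v_2=\tfrac12$ (note the overbid by ad~$2$), VCG--PDC also places ad~$1$ first, charges it $\hat v_2(1-\lambda_1)=\tfrac14$, and the profile is easily checked to be a Nash equilibrium under APDC clicks. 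The ratio is therefore $0/\tfrac14=0$ on a single instance, with no limiting family needed.

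What each approach buys: the paper's construction is shorter and yields an exact~$0$ ratio directly, with an equilibrium that is strict for both agents (no tie--breaking, no indifference). Your construction, on the other hand, is more quantitative---it shows that the equilibrium revenue can exceed the APDC benchmark by an arbitrary multiplicative factor even when the benchmark is strictly positive---and it makes the mechanism of overbidding--driven payment inflation very explicit. The one minor wrinkle in your version is the reliance on tie--breaking at $\hat v_A=\hat v_B$ together with exact indifference of~$A$; this is harmless (take $\hat v_A$ an $\varepsilon$ above $\hat v_B$, or $\hat v_B$ an $\varepsilon$ below $M\frac{1-\lambda c_B}{1-\lambda}$, and everything survives), but it is worth stating the perturbation explicitly.
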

		\begin{restatable}{lem}{vcgpdrevposnoovb}
			\label{lem:vcgpd rev pos no ovb}
			The PoS of the revenue  in the VCG with PDC when users follow APDC is $\le 1$, in the restricted case the agents do not overbid.
		\end{restatable}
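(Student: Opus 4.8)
The plan is to read the statement as a claim about the revenue-maximal equilibrium: writing $R^\ast$ for the revenue collected by the truthful (dominant-strategy) outcome of the reference VCG based on APDC, the price of stability of the revenue is $R^\ast$ divided by the largest revenue attainable at a no-overbidding Nash equilibrium of the VCG with PDC (users behaving as in APDC). Hence ``$\le 1$'' is equivalent to the existence of a no-overbidding equilibrium of the VCG with PDC whose revenue is at least $R^\ast$. This should be contrasted with Lemma~\ref{lem:vcgpd rev pos ovb}: once overbidding is permitted the externality payments computed by the mechanism are no longer capped by the true values and can be driven to infinity, sending the ratio to $0$; the content here is that capping each report at its true value still leaves enough room to match the benchmark.

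First I would fix the instance and make $R^\ast$ explicit. Under the reference mechanism the agents report truthfully, the allocation is the APDC-optimal $\theta^\ast$, and each allocated agent pays the usual VCG externality, so that $R^\ast$ is the sum of these externality terms evaluated with the true continuation probabilities. I would then take as candidate equilibrium the no-overbidding profile that reproduces the efficient allocation $\theta^\ast$ as the output of the VCG with PDC; such a profile exists and is an equilibrium by the construction behind Lemma~\ref{lem:vcgpd sw pos}. The key point is to evaluate the revenue actually collected at this profile: because the VCG with PDC fixes the per-click price as $p_a^{\mathrm{PDC}}/\textrm{CTR}(a)$ using the PDC click-through rates (i.e.\ with every $c_a$ set to $1$), while a click occurs with the true APDC rate, the expected payment of agent $a$ equals $p_a^{\mathrm{PDC}}\cdot C_{\theta^\ast}(a)$. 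Summing over the allocated ads gives the realized revenue, which I would compare term by term with $R^\ast$.

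The main obstacle is exactly this comparison, together with the verification that the profile survives as an equilibrium once bids are forbidden to exceed the true values. On the equilibrium side, upward deviations are excluded by the no-overbidding restriction, so I only need to rule out profitable downward deviations; this follows from the monotonicity of the VCG allocation and from the fact that lowering a report either leaves the allocation (hence the utility) unchanged or demotes the ad to a strictly less prominent slot. On the revenue side, the difficulty is that the two mechanisms use different click-through rates and, for the externality terms, different optimal allocations of the remaining ads; I expect to handle this by exploiting $c_a\le 1$ (so the PDC rates dominate the APDC ones) to argue that the per-click prices computed under PDC, even after being discounted by the realized factors $C_{\theta^\ast}(a)$, are no smaller than the APDC externalities summed in $R^\ast$. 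Should a clean term-by-term domination fail in full generality, a concrete small instance exhibiting a no-overbidding equilibrium with revenue equal to $R^\ast$ already certifies $\mathrm{PoS}\le 1$, mirroring the instance-based arguments used for the neighbouring lemmas.
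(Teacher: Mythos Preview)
Your fallback---exhibiting a single instance with a no-overbidding equilibrium whose revenue matches $R^\ast$---is precisely what the paper does, and it is all that the lemma requires. The paper takes $N>K\ge 2$ with $v_a=q_a=c_a=1$ for every ad and $\lambda_s=1/2$ for every slot, and the truthful profile; since every $c_a=1$, PDC and APDC coincide on this instance, truthful bidding is a (no-overbidding) equilibrium of the VCG with PDC, and the two revenues are literally the same.

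Your primary route, by contrast, has real gaps. First, the appeal to ``the construction behind Lemma~\ref{lem:vcgpd sw pos}'' does not give you a no-overbidding profile that implements the APDC-optimal $\theta^\ast$ under VCG with PDC in a general instance: that lemma is itself proved by pointing to a single symmetric instance, not by a general construction. Second, your equilibrium argument is inconsistent: saying ``upward deviations are excluded by the no-overbidding restriction'' is valid only if the candidate profile is truthful, but the truthful profile under VCG with PDC yields the PDC-optimal allocation, not $\theta^\ast$; you cannot simultaneously have truthful reports, exclude all upward deviations, and guarantee that the induced allocation is the APDC optimum. Third, even granting the allocation, the revenue comparison you sketch---PDC externalities discounted by the realized factors $C_{\theta^\ast}(a)$ dominating the APDC externalities term by term---does not follow from $c_a\le 1$ alone, because the two mechanisms compute externalities against \emph{different} counterfactual optima.

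So drop the general argument and go straight to the instance; the cleanest choice is exactly the paper's, where setting all $c_a=1$ collapses APDC to PDC and makes every step trivial.
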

		
	Proofs of the previous lemmas can be found in \SupplementalMaterial{}. Again, we see a huge dispersion, both in terms of social welfare and of revenue, among the different equilibria of VCG with PDC.

\section{Algorithms}
	\subsection{\textsc{dominated--ads} algorithm}
		We present an algorithm meant to reduce the size of the problem (i.e., the number of ads), without any loss in terms of social welfare or revenue of the optimal allocation. The central observation for our algorithm is that, under certain circumstances, given two ads $a,b$  with parameters $(\bar{v}_a, c_a)$ and $(\bar{v}_b, c_b)$ respectively, it is possible to establish \emph{a priori} that, if in an optimal allocation $b$ is allocated to a slot, then ad $a$ is allocated in a slot preceding that of $b$; whenever this is the case we say that ad $a$ ``dominates'' ad $b$. As an example, consider two ads $a$ and $b$, satisfying the condition $(\bar{v}_a > \bar{v}_b) \land (c_a > c_b)$: in accordance with intuition, a simple exchange argument shows that $a$ dominates $b$. A weaker sufficient condition for deciding whether $a$ dominates $b$ is given in the following lemma and proved in \SupplementalMaterial{}.
		\begin{restatable}{lem}{dominads}
			\label{lem:chc condition}
			Let $\lambda_{\max}=\max\{\lambda_s:s\in \mathcal{K}\}$, and $B$ an upper bound%
			\footnote{
				More precisely, it is enough that $B$ is a (weak) upper bound of the quantity
				\[
					\tilde{B} = \max_i \{\lambda_i \cdot \textsc{alloc}(i + 1, K)\},
				\]
				where $\textsc{alloc}(i, K)$ is the value of the optimal allocation of the problem having as slots the set $\mathcal{K}_i = \{i, \dots, K\} \subseteq \mathcal{K}$.
			} of the value of the optimal allocation; also, let $\mathcal{D} = [0, \lambda_{\max}] \times [0, B]$. Given two ads $a,b$ with parameters $(\bar{v}_a, c_a)$ and $(\bar{v}_b, c_b)$, consider the affine function $w_{a,b}: \mathcal{D} \rightarrow \mathbb{R}$, defined as
			\begin{equation*}
				w_{a,b}(x, y) = \det\begin{pmatrix}
					x & -y & 1\\
					1  & \bar{v}_b & c_b\\
					1  & \bar{v}_a & c_a
				\end{pmatrix}
			\end{equation*}
			If the minimum of $w_{a,b}$ over $\mathcal{D}$ is greater than 0, then $a$ dominates $b$.
		\end{restatable}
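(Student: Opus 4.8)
The plan is to prove domination by a local exchange argument: assuming $\min_{\mathcal D} w_{a,b} > 0$, I will show that any optimal allocation $\theta$ in which $b$ is used but $a$ does not precede $b$ can be strictly improved, a contradiction. First I would expand the determinant along its first row to record that $w_{a,b}$ is the affine function
\begin{equation*}
	w_{a,b}(x,y) = (\bar{v}_b c_a - \bar{v}_a c_b)\,x + (c_a - c_b)\,y + (\bar{v}_a - \bar{v}_b).
\end{equation*}
Because $w_{a,b}$ is affine and $\mathcal D$ is a box, its minimum is attained at a vertex, so the hypothesis is really a check at the four corners; but for the proof all that matters is that $w_{a,b}(x,y)>0$ at \emph{every} point of $\mathcal D$. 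Two configurations must then be ruled out for a given optimal $\theta$ with $b\in\textrm{ads}(\theta)$: (i) $a\notin\textrm{ads}(\theta)$; and (ii) $a$ is allocated in some slot $t$ strictly after $s=\textrm{slot}_\theta(b)$.

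In both cases I would modify $\theta$ locally and evaluate the change in $\textrm{SW}$ after factoring out the common reach $P_s = \Lambda_s\cdot\prod_{k<s}c_{\textrm{ad}_\theta(k)}$ of slot $s$, which multiplies every affected term. For case (i) I replace $b$ by $a$ in slot $s$ and keep the downstream ads fixed; only the continuation factor feeding the tail changes from $c_b$ to $c_a$, and collecting terms gives a gain of $P_s\cdot w_{a,b}(0,\,\lambda_s R)$, where $R$ is the social welfare of the sub-allocation on slots $s+1,\dots,K$ normalized so that slot $s+1$ has prominence $1$. For case (ii) I swap $a$ and $b$; the crucial bookkeeping is that the contribution of the slots beyond $t$ is multiplied by $c_a c_b$ both before and after the swap and hence cancels, the slot-$s$ ad contributes the constant $(\bar{v}_a-\bar{v}_b)$, the middle block (slots $s+1,\dots,t-1$) contributes the coefficient of $(c_a-c_b)$, and the swapped ad now in slot $t$ contributes the coefficient of $(\bar{v}_b c_a - \bar{v}_a c_b)$. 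The net change is $P_s\cdot w_{a,b}(x,y)$ with
\begin{equation*}
	x=\Bigl(\prod_{k=s}^{t-1}\lambda_k\Bigr)\Bigl(\prod_{s<k<t}c_{\textrm{ad}_\theta(k)}\Bigr), \qquad y = M,
\end{equation*}
where $M$ is $\lambda_s$ times the normalized value of the middle block.

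It remains to verify that these evaluation points lie in $\mathcal D$, so that positivity of $w_{a,b}$ on $\mathcal D$ forces a strictly positive gain (here I assume $P_s>0$; a slot of zero reach contributes nothing and is a harmless degenerate case). For the $x$-coordinate every factor lies in $[0,1]$ and the product contains $\lambda_s$, so $x\le \lambda_s\le\lambda_{\max}$; in case (i) one simply has $x=0$. For the $y$-coordinate, both $M$ and $\lambda_s R$ have the form ``$\lambda_s$ times the value of a feasible sub-allocation on a subset of $\{s+1,\dots,K\}$'', hence are bounded by $\lambda_s\cdot\textsc{alloc}(s+1,K)\le\tilde B\le B$ — which is exactly why the footnote's $\tilde B$ is the right quantity to upper bound. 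Since $(x,y)\in\mathcal D$ and $P_s>0$, the gain is strictly positive, contradicting optimality; thus every optimal allocation using $b$ places $a$ before $b$, i.e.\ $a$ dominates $b$.

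I expect the main obstacle to be the case-(ii) bookkeeping: tracking the reach factors precisely so that the tail contribution cancels and the surviving terms assemble \emph{exactly} into the three entries of the determinant, and then confirming the domain membership $y\le B$ through the definition of $\tilde B$. The case-(i) computation and the affine expansion are comparatively routine once this template is set up.
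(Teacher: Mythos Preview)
Your proposal is correct and follows essentially the same exchange argument as the paper. The only cosmetic difference is that the paper handles your two cases uniformly by introducing a fictitious slot $K{+}1$ with $\lambda_K=0$ (so that ``$a$ not allocated'' becomes ``$a$ in slot $K{+}1$''), which collapses case~(i) into case~(ii) with $x=0$; your separate treatment and the paper's unified one produce the same evaluation points $(x,y)\in\mathcal D$ and the same domain check against $\tilde B$.
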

		We will use the notation $a \prec b$ to denote that ad $a$ dominates ad $b$, in the sense of Lemma~\ref{lem:chc condition}. Note that $\prec$ defines a partial order over the set of ads. Since $w_{a,b}$ is an affine function defined on a convex set, it must attain a minimum on one of the corner points of $\mathcal{D}$, hence the following holds:
		\begin{lem}
			\label{lem:discard four points}
			If the four numbers $w_{a,b}(0,0)$, $w_{a,b}(0, B)$, $w_{a,b}(\lambda_{\max}, 0)$, $w_{a,b}(\lambda_{\max}, B)$ are all positive, then $a\prec b$.
		\end{lem}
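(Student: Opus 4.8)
The plan is to reduce the four-point condition directly to the hypothesis of Lemma~\ref{lem:chc condition}, exploiting two facts: that $w_{a,b}$ is an affine function of $(x,y)$, and that $\mathcal{D}$ is an axis-aligned rectangle whose only vertices are precisely the four listed corner points.

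First I would verify that $w_{a,b}$ is genuinely affine on $\mathcal{D}$. Expanding the determinant along its first row gives
\begin{equation*}
w_{a,b}(x,y) = (\bar{v}_b c_a - \bar{v}_a c_b)\,x + (c_a - c_b)\,y + (\bar{v}_a - \bar{v}_b),
\end{equation*}
which is manifestly of the form $\alpha x + \beta y + \gamma$ with coefficients $\alpha, \beta, \gamma$ depending only on the fixed parameters of ads $a$ and $b$. In particular, $w_{a,b}$ is affine (indeed linear in each of $x$ and $y$ separately).

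Next I would observe that $\mathcal{D} = [0,\lambda_{\max}] \times [0,B]$ is a nonempty compact convex rectangle whose set of extreme points is exactly $\{(0,0),(0,B),(\lambda_{\max},0),(\lambda_{\max},B)\}$, i.e.\ the four corners named in the statement. The general principle is then that an affine function on such a polytope attains its minimum at an extreme point: writing any $(x,y)\in\mathcal{D}$ as a convex combination $\sum_i \mu_i p_i$ of the four corners $p_i$ and invoking affinity, $w_{a,b}(x,y) = \sum_i \mu_i\, w_{a,b}(p_i) \ge \min_i w_{a,b}(p_i)$. Hence the minimum of $w_{a,b}$ over $\mathcal{D}$ coincides with the smallest of its four corner values.

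Finally, if the four numbers $w_{a,b}(0,0)$, $w_{a,b}(0, B)$, $w_{a,b}(\lambda_{\max}, 0)$, $w_{a,b}(\lambda_{\max}, B)$ are all strictly positive, their minimum --- and therefore $\min_{\mathcal{D}} w_{a,b}$ --- is strictly positive, so the hypothesis of Lemma~\ref{lem:chc condition} is satisfied and $a \prec b$ follows at once. I do not expect any substantive obstacle: the only points requiring care are confirming that the determinant really expands to an \emph{affine} (not merely continuous) function, and that the four listed points genuinely exhaust the vertices of the rectangle $\mathcal{D}$. Once affinity is established, the corner-minimization principle does all the remaining work.
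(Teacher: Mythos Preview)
Your argument is correct and is essentially the same as the paper's: the paper observes just before the lemma that $w_{a,b}$ is affine on the convex rectangle $\mathcal{D}$ and hence attains its minimum at a corner, from which the statement follows immediately via Lemma~\ref{lem:chc condition}. You have simply spelled out the determinant expansion and the convex-combination step in more detail.
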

		We define the $\emph{dominators}$ of an ad $a$ as the set $\textrm{dom}(a) = \{b \in \mathcal{A}: b\prec a\}$. The following lemma is central to our algorithm, as it expresses a sufficient condition for discarding an ad from the problem:
		\begin{restatable}{lem}{discard}
		 	\label{lem:chc discard condition}
			If $|\emph{\textrm{dom}}(a)| \ge K$, then ad $a$ can be discarded, as it will never be chosen for an allocation.
		\end{restatable}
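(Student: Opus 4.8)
The plan is to argue by contradiction through a pigeonhole principle on the slots preceding $a$. Suppose that, contrary to the claim, there is an optimal allocation $\theta$ in which $a$ is allocated, say to slot $s=\textrm{slot}_{\theta}(a)$. I will show that this forces more ads into the slots $\{1,\dots,s-1\}$ than these slots can physically hold.

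First I would recall the semantics of the domination relation fixed by Lemma~\ref{lem:chc condition}: writing $b\prec a$ for $b\in\textrm{dom}(a)$, the relation guarantees that in an optimal allocation, whenever $a$ is allocated to a slot, the dominator $b$ is itself allocated to a strictly earlier slot. Since in $\theta$ the ad $a$ occupies slot $s$, this property fires for every $b\in\textrm{dom}(a)$ simultaneously --- it is a statement about the single allocation $\theta$ --- so each of the at least $K$ dominators of $a$ is allocated in $\theta$ to some slot in $\{1,\dots,s-1\}$.

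Next I would invoke the structural constraint that an allocation places at most one ad per slot and at most one slot per ad, so the map $b\mapsto\textrm{slot}_{\theta}(b)$ restricted to $\textrm{dom}(a)$ is injective. Hence $\{1,\dots,s-1\}$ must contain at least $|\textrm{dom}(a)|\ge K$ distinct slots. But $s\le K$ because there are only $K$ slots in total, so $|\{1,\dots,s-1\}|=s-1\le K-1<K\le|\textrm{dom}(a)|$, a contradiction. Therefore no optimal allocation allocates $a$; the set of optimal allocations is the same whether or not $a$ is present in the instance, so $a$ may be discarded with no loss in the optimum.

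The counting itself is routine; the only point that needs care is the one flagged above, namely that the positional guarantee of $\prec$ must be read as a property holding for the fixed optimal allocation $\theta$ and applied to all $\ge K$ dominators at once, rather than as a family of statements each witnessed by a possibly different optimal allocation. Provided the domination relation of Lemma~\ref{lem:chc condition} is of this ``for every optimal allocation'' form --- as the strict inequality $\min_{\mathcal{D}} w_{a,b}>0$ suggests, any violating configuration being strictly improvable and hence non-optimal --- the argument goes through with no further obstacle.
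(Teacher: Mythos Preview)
Your argument is correct and is precisely the intended one. The paper does not spell out a proof of this lemma---it is treated as immediate from the definition of domination---and your pigeonhole count on the slots $\{1,\dots,s-1\}$ is the natural justification. Your caveat about reading ``$b\prec a$'' as a ``for every optimal allocation'' guarantee is well placed and is indeed how the paper intends it: the proof of Lemma~\ref{lem:chc condition} shows that whenever the dominated ad sits weakly above its dominator (including the case where the dominator is unallocated, handled via a fictitious slot $K+1$), swapping strictly increases the social welfare, so no optimal allocation can exhibit that configuration.
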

		 This suggests a straightforward algorithm to determine the set of safely deletable ads, as streamlined in Algorithm~\ref{algo:chc}.
			\begin{figure}[t]
				\begin{algorithm}[H]
				  \small
				  \caption{\small \textsc{dominated--ads}}
				  \label{algo:chc}
				  \begin{algorithmic}[1]
				    \Procedure{dominated--ads}{ads, slots}
				    \State Determine $\mathcal{D}$, as defined in Lemma \ref{lem:chc condition}
				    \State For each ad $a$, compute $|\textrm{dom}(a)|$
				    \State Discard all ads $a$ having $|\textrm{dom}(a)| \ge K$
				    \EndProcedure
				  \end{algorithmic}
				\end{algorithm}
			\end{figure}
		\ifshowsupplementalmaterial
		In \SupplementalMaterial{} we show two methods, respectively named $\textsc{const--}\lambda$ and $\textsc{decouple}$, for calculating a ``good'' upper bound $B$ that is needed in Line 2 in order to determine $\mathcal{D}$. The first method (\textsc{const--}$\lambda$) has a computational complexity of $O(NK)$ time, while \textsc{decouple} runs in $O(N + K\log K)$ time and makes use of the FFT algorithm \cite{cooley1965algorithm}. Both methods require linear, i.e. $O(N + K) = O(N)$ memory. Of course, when $K$ is small enough, both algorithms can be run and the smallest upper bound can be taken as the value of $B$.
		\fi
		
		A na\"ive implementation of Line 3 tests every ad $a$ against all the other ads, keeping track of the number of dominators of $a$.
		\ifshowsupplementalmaterial
		This leads to an easily implementable $O(N^2)$ algorithm suitable for instances with $\approx \numprint{1,000}$ ads, but is not applicable in real time when $N$ grows. In \SupplementalMaterial{} we show that it is possible to compute the number of dominators of an ad $a$ in $O(\log N)$ amortized time using the dynamic fractional cascading technique \cite{mehlhorn1990dynamic}.
		Hence, the following lemma holds:

		\begin{lem}
			The \textsc{dominated--ads} algorithm runs in $O(N\log N)$ or $O(N(K + \log N))$ time, depending on the employed upper bounding strategy, and $O(N)$ memory.
		\end{lem}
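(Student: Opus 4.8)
The plan is to charge the running time and memory to the three substantive lines of Algorithm~\ref{algo:chc} separately and then combine the totals, noting that the two stated time bounds correspond exactly to the two choices of upper--bounding strategy in Line~2. The two easy lines are Line~2 and Line~4. The domain $\mathcal{D}=[0,\lambda_{\max}]\times[0,B]$ is fixed once $\lambda_{\max}=\max_{s}\lambda_s$ and $B$ are known: computing the former costs $O(K)$ time, while the latter is produced either by \textsc{const--}$\lambda$ in $O(NK)$ time or by \textsc{decouple} in $O(N+K\log K)$ time, each within $O(N)$ memory. Line~4 is a single linear scan that compares each count $|\textrm{dom}(a)|$ against $K$ and deletes the dominated ads, costing $O(N)$ time and $O(N)$ memory.

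The remaining work, and the genuine obstacle, is Line~3: computing $|\textrm{dom}(a)|$ for every ad without incurring the naive $O(N^2)$. By Lemma~\ref{lem:discard four points}, deciding whether $a\prec b$ reduces to evaluating the affine map $w_{a,b}$ at the four corners of $\mathcal{D}$ and testing their signs, a constant--time test per pair. To beat quadratic time I would expand the determinant defining $w_{a,b}$ into its affine form in $(x,y)$, so that each of the four corner conditions becomes a half--plane constraint relating the parameters $(\bar v_a,c_a)$ of a candidate dominator to $(\bar v_b,c_b)$; after a suitable change of coordinates, counting the dominators of a fixed ad becomes a planar dominance--counting query. I would then invoke, as a black box, the appendix result that such counts can be answered in $O(\log N)$ amortized time via dynamic fractional cascading, after a build that stores only a constant amount of data per ad and therefore uses $O(N)$ memory. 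Summed over all $N$ ads, Line~3 costs $O(N\log N)$ time and $O(N)$ memory.

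Finally I would add the three contributions. When Line~2 uses \textsc{decouple}, the hypothesis $K\le N$ gives $K\log K=O(N\log N)$, so the total time collapses to $O(N\log N)$; when Line~2 uses \textsc{const--}$\lambda$, the total is $O(NK+N\log N)=O(N(K+\log N))$. In both cases the peak memory is the maximum over the three phases, namely $O(N)$, which establishes the claim. The only step that requires care is the reduction in Line~3, as it is the sole place where the $O(N^2)$ barrier must be broken; everything else is routine accounting.
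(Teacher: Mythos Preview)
Your proposal is correct and mirrors the paper's argument: decompose by the three lines of Algorithm~\ref{algo:chc}, use the stated costs of \textsc{const--}$\lambda$ and \textsc{decouple} for Line~2, invoke the appendix's $O(\log N)$--amortized dominance count via dynamic fractional cascading for Line~3, and then combine exactly as you do. Your informal sketch of the Line~3 reduction is slightly off---the four corner tests naturally yield a 4D dominance problem, which the appendix then splits into two 2D problems using the sign of $c_a-c_b$---but since you explicitly treat that step as a black box this does not affect the accounting.
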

	
		Finally notice that iterating the algorithm could lead to successive simplifications of the ads set, as the upper bound $B$ provided by $\textsc{const--}\lambda$ and $\textsc{decouple}$ may decrease in successive runs of \textsc{dominated--ads}. However, the maximum number of iterations is $N$, since, if no ad has been discarded, then $B$ does not change and no further ad can be discarded in future iterations.
		\fi

	\subsection{\textsc{colored--ads} algorithm}
		We present an algorithm that can determine an optimal allocation in time polynomial in $N$ and exponential in $K$. As the number of slots is generally small (typically $\le 10$), this algorithm is suitable for real--world applications. The key insight for the algorithm is that the problem of finding the allocation maximizing the social welfare can be cast to an instance of the well--known problem of finding a maximal $K$--vertex weighted simple path in an un undirected graph. Efficient algorithms for the latter problem can therefore be used to solve our problem.
		
		We introduce the definition of a \emph{right--aligned allocation}.
		\begin{defn}
			\label{def:right aligned allocation}
			Given an ordered sequence $S: a_1, \dots, a_n$ of ads, we say that the \emph{right--aligned allocation} of $S$ is the allocation mapping the ads in $S$ to the last $n$ slots, in order, leaving the first $K - n + 1$ slots vacant.
		\end{defn}
		
		Consider the complete undirected graph $\mathcal{G}$ having the available ads for vertices; every simple path $\pi: a_1, \dots, a_n$ of length $n \le K$ in $\mathcal{G}$ can be uniquely mapped to the right--aligned allocation of $\pi$, and \emph{vice versa}. Following this correspondence, we define the \emph{value} of a simple path in $\mathcal{G}$ as the value of the corresponding right--aligned allocation. In order to prove that the problem of finding a maximal $K$--vertex weighted simple path in $\mathcal{G}$ is equivalent to that of finding an optimal allocation, it is sufficient to prove that there exists (at least) one optimal allocation leaving no slot vacant, i.e. using all of the $K$ slots. To this end we introduce the following lemma, whose proof is in \SupplementalMaterial{}.
		\begin{restatable}{lem}{optalloclem}
			\label{lem:nonempty slots}
			In at least one optimal allocation all the available slots are allocated.
		\end{restatable}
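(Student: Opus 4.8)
The plan is to begin from an arbitrary optimal allocation and to transform it, in two phases and without ever decreasing the social welfare, into one that occupies every slot; since each phase preserves optimality, the resulting full allocation will witness the claim. Both phases rest on three elementary monotonicity facts about the factors appearing in each summand of $\textrm{SW}$. First, $\bar{v}_a = q_a v_a \ge 0$ for every ad. Second, since each $\lambda_k \in [0,1]$, the prominence $\Lambda_s = \prod_{k<s}\lambda_k$ is non-increasing in $s$, so relocating an ad to a smaller-index slot can only increase its prominence. Third, since each $c_a \in [0,1]$, every continuation factor $C_\theta(a) = \prod_{s<\textrm{slot}_\theta(a)} c_{\textrm{ad}_\theta(s)}$ is non-negative and, crucially, depends only on which ads lie strictly above $a$.

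For the \emph{compression} phase, I would take an optimal $\theta$ occupying slots $s_1 < \dots < s_m$ and define $\theta'$ by moving the ad of slot $s_i$ to slot $i$, for each $i$, so that the occupied slots become the contiguous block $1, \dots, m$ in the same top-to-bottom order. Since the relative order of the ads is untouched, the set of ads above any allocated ad is unchanged, so $C_{\theta'}(a) = C_\theta(a)$; and since $s_i \ge i$ (they are distinct increasing positive integers), no prominence decreases. Hence every summand of $\textrm{SW}$ weakly increases, $\textrm{SW}(\theta') \ge \textrm{SW}(\theta)$, and $\theta'$ is still optimal. For the \emph{filling} phase, if $m < K$ then, because $K \le N$, some ad $b$ is still unused; I append it to slot $m+1$. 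This adds the single non-negative term $\bar{v}_b\,\Lambda_{m+1}\,C_{\theta'}(b)$ and, as $b$ sits below every previously allocated ad, it enters no other ad's continuation factor and changes no prominence, so the social welfare does not decrease and the allocation remains optimal while gaining one occupied slot. Iterating until $m = K$ produces an optimal allocation with all slots allocated.

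I expect the only delicate point to be checking that neither operation lowers the continuation factors of the ads sitting below those being moved or inserted. This is precisely why the compression phase shifts the whole occupied block uniformly, preserving every ``lies above'' relation and hence every $C_\theta(\cdot)$, and why the filling phase only ever appends at the very bottom, where there is nothing below to perturb; the positional part of the argument is then immediate from $\lambda_k \le 1$.
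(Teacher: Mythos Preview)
Your proof is correct and follows essentially the same approach as the paper: compress the occupied slots upward (you do this in one pass, the paper does it one gap at a time by shifting the block below the last empty slot up), then append unused ads to the now-empty bottom slots, observing that both operations can only weakly increase the social welfare. Your write-up is in fact more explicit than the paper's about why the continuation factors are preserved under compression and unaffected by appending at the bottom.
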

		
		The problem of finding a maximal $K$--vertex weighted simple path in $\mathcal{G}$ can be solved by means of the \emph{color coding} technique~\cite{ColorCoding}. The basic idea is as follows: in a single iteration, vertices are assigned one of $K$ random colors; then, the best path visiting every color exactly once is found using a dynamic programming approach; finally, this loop is repeated a certain amount of times, depending on the type of algorithm (randomized or derandomized). In the randomized version, the probability of having missed the best $K$--path decreases as $e^{-R/{e^K}}$, where $R$ is the number of iterations. Therefore, using randomization, we can implement a $O((2e)^K N)$ compute time and $O(2^K + N)$ memory space algorithm, as streamlined in Algorithm~\ref{algo:cc}. In the derandomized version, the algorithm is exact and requires $O((2e)^K K^{O(\log K)} N \log N)$ time when paired with the derandomization technique presented in \cite{naor1995splitters}.
		
		\ifshowsupplementalmaterial\else
		\begin{figure}[t]
		\fi
				\begin{algorithm}[H]
				  \small
				  \caption{\small \textsc{colored--ads}}
				  \label{algo:cc}
				  \algblockdefx{Repeat}{EndRepeat}[1]{\textbf{repeat} #1}{\textbf{end repeat}}
				  \algtext*{EndRepeat}
				  \begin{algorithmic}[1]
				    \Procedure{colored-ads}{ads, slots}
						\Repeat{$e^K \log 2$ \textbf{times}}\Comment{50\% success probability}
							\State Assign random colors $\in \{1, \dots, K\}$ to $\mathcal{G}$'s vertices
							\hspace*{27px}{\textbf{note}: each color must be used at least once}
							\vspace{2mm}\newline
							\Comment{
								\textcolor{black!70!white}{We now construct a memoization table $\textsc{memo}[\mathcal{C}]$, reporting, for each color subset $\mathcal{C}$ of $\{1,\dots,K\}$ the maximum value of any simple path visiting colors in $\mathcal{C}$ exactly once.}\vspace{2mm}
							}
							\State $\textsc{memo}[\emptyset] \gets 0$
							\State $\mathcal{P} \gets $ powerset of $\{1,\dots,K\}$, sorted by set size
							\For{$\mathcal{C} \in \mathcal{P}-\{\emptyset\}$, in order}
								\State $\tilde{\lambda} \gets \lambda_{K-|\mathcal{C}|+1}$
								\For {\textbf{each} color $c\in\mathcal{C}$}
									\For {\textbf{each} vertex (ad) $a$ in $\mathcal{G}$ of color $c$}
										\State $\textsc{value} \gets \bar{v}_a + c_a\,\tilde{\lambda} \cdot\textsc{memo}[\mathcal{C}-\{c\}]$
										\State $\textsc{memo}[\mathcal{C}] \gets \max\{\textsc{memo}[\mathcal{C}], \textsc{value}\}$
									\EndFor
								\EndFor
							\EndFor
						\EndRepeat
						\State \textbf{return} $\textsc{memo}[\{1,\dots,K\}]$
				    \EndProcedure
				  \end{algorithmic}
				\end{algorithm}
		\ifshowsupplementalmaterial\else
			\end{figure}
		\fi
			
		Note that Algorithm~\ref{algo:cc} is just a simplified, randomized and non--parallelized sketch of the algorithm we test in the experimental section of this paper; also, for the sake of presentation, in Algorithm~\ref{algo:cc} we only return the value of the optimal allocation and not the allocation itself. We also considered the idea of using $1.3K$ different colors, as suggested by the work of \cite{huffner2008algorithm}, but we found out that for this particular problem the improvement is negligible.
		
		We conclude this subsection with some remarks about the above algorithm. First, we remark that, given the nature of the operations involved, the algorithm proves to be efficient in practice, with only a small constant hidden in the big--oh notation. Furthermore, it is worth to note that the iterations of the main loop (Lines 2 to 15) are independent; as such, the algorithm scales well horizontally in a parallel computing environment. Second, we point out an important economic property of the algorithm, that makes \textsc{colored--ads} appealing: it allows the design of truthful mechanisms when paired with VCG--like payments. While this is obviously true when the algorithm is used in its derandomized exact form, it can be proven that the truthfulness property holds true even when only a partial number of iterations of the main loop is carried out. This easily follows from the fact that the algorithm is maximal--in--its--range, searching for the best allocation in a range that does not depend on the reports of the agent. This implies that it is possible to interrupt the algorithm after a time limit has been hit, without compromising the truthfulness of the mechanism. This leads to approximate algorithms that offer convenient time--approximation trade--offs, as long as $K$ is small.
		\ifshowsupplementalmaterial
		We experimentally investigate the approximation ratio of these algorithms as a function of the number of performed iterations of the main loop in \SupplementalMaterial{}.
		\fi

	\subsection{\textsc{sorted--ads} approximate algorithm}
		While the general problem of finding an optimal allocation is difficult, polynomial time algorithms are easy to derive---as we show below---when we restrict the set of feasible allocations to those respecting a given total order $\prec_{\textrm{ads}}$ defined on the ads set. This suggests this simple approximation algorithm: first, $T$ total orders $\prec_{\textrm{ads},1}, \dots, \prec_{\textrm{ads},T}$ over the ads set are chosen; then, the optimal allocation satisfying $\prec_{\textrm{ads},i}$ is computed, for each $i$; finally, the value of the optimal allocation for the original unrestricted problem is approximated with the best allocation found over all the $T$ orders.
		The number of total orders $T$ is arbitrary, with more orders obviously producing higher approximation ratios.
		\ifshowsupplementalmaterial
		An empirical study on the time--approximation tradeoff curve of \textsc{sorted--ads} is available in \SupplementalMaterial{}.
		\fi
		
		In order to find the optimal allocation respecting the total order $\prec_{\textrm{ads}}$, we propose a simple $O(NK)$ time dynamic programming algorithm, which we name \textsc{sorted--ads}, described in Algorithm~\ref{algo:sorted}.
			\begin{figure}
				\begin{algorithm}[H]
				  \small
				  \caption{\small \textsc{sorted--ads}}
				  \label{algo:sorted}
				  \begin{algorithmic}[1]
				    \Procedure{sorted-ads}{ads, slots, $\prec_{\textrm{ads}}$}
					\State Sort the ads according to $\prec_{\textrm{ads}}$, so that ad $1$ is the minimum ad w.r.t. the given order\vspace{2mm}\newline
					\Comment{
						\textcolor{black!70!white}{We now construct a memoization table $\mathcal{T}[n, k]$, reporting, for each $1\le n\le N$ and $1\le k \le K$, the value of the best allocation that uses only ads $n, \dots, N$ and slots $k, \dots, K$ (i.e. no ads get allocated to slots $1, \dots, k-1$ and $\lambda_i = 1, \forall i < k$).}\vspace{2mm}
					}
					\State $\mathcal{T}[N, k] \gets \bar{v}_N,\quad k=1,\dots,K$\Comment{Base case}
					\For{$n = N - 1$ \textbf{downto} $1$}
						\For{$k = 1, \dots, K$}
							\If{$k < K$}
								\State $\textsc{value} \gets \bar{v}_n + \lambda_k\,c_n\,\mathcal{T}[n + 1, k + 1]$
								\State $\mathcal{T}[n, k] \gets \max\{\textsc{value}, \mathcal{T}[n + 1, k]\}$
							\Else 
								\State $\mathcal{T}[n, k] \gets \max\{\bar{v}_n, \mathcal{T}[n + 1, K] \}$
							\EndIf
						\EndFor
					\EndFor
					\State \textbf{return} $\mathcal{T}[1, 1]$
				    \EndProcedure
				  \end{algorithmic}
				\end{algorithm}
			\end{figure}
		The idea behind the algorithm is to find, for each $n = 1,\dots, N$ and $k = 1,\dots,K$, the value $\mathcal{T}[n,k]$ of the best allocation for the problem having $\mathcal{A}_n = \{1,\dots, n\} \subseteq \mathcal{A}$ as ads set and $\mathcal{K}_k = \{k, \dots, K\} \subseteq \mathcal{K}$ as slots set. The values of $\mathcal{T}[n,k]$ can be computed inductively, noticing that the associated subproblems share the same optimal substructure.
		As before, in Algorithm~\ref{algo:sorted} we only show how to find the value of the optimal allocation, and not the allocation itself.
		\ifshowsupplementalmaterial
		Also, for the sake of simplicity, the algorithm we present uses $O(NK)$ memory, but we remark that it is possible to bring the required memory down to $O(N + K)=O(N)$ while letting fast (i.e. $O(NK)$ time) reconstruction of the optimal allocation, using a technique similar to that presented in \cite{hirschberg1975linear}. See \SupplementalMaterial{} for further details.
		\fi
		
		\ifshowsupplementalmaterial
		We end this subsection with the analysis of some economical and theoretical properties of the algorithm. From the economical standpoint, we note that, if the total orders used do not depend on the reported types of the agents, \textsc{sorted--ads} is maximal--in--its--range, leading thus  to a truthful mechanism when paired with VCG--like payments. Furthermore, the resulting mechanism requires polynomial time both for the allocation and the payments. From the theoretical standpoint, we remark that it is possible to prove a $1/2$ approximation ratio for the \textsc{sorted--ads} algorithm in particular settings, e.g. when all the $\lambda$'s are constant (see \SupplementalMaterial{}). The proof in the general case, though, is left here as an open conjecture.
		\else
		We end this subsection with the analysis of some economical and theoretical properties of the algorithm. We note that, if the total orders used do not depend on the reported types of the agents, \textsc{sorted--ads} is maximal--in--its--range, leading thus  to a truthful mechanism when paired with VCG--like payments. Furthermore, the resulting mechanism requires polynomial time both for the allocation and the payments.
		\fi


\section{Experimental evaluation}

		For a better comparison of the results, we adopted the same experimental setting used in~\cite{aamas2013} and given to us by the authors. We briefly describe it, details can be found in the original paper. The experimental setting is based on \emph{Yahoo! Webscope A3} dataset. Each bid is drawn from a truncated Gaussian distribution, where the mean and standard deviation are taken from the dataset, while quality is drawn from a beta distribution. The values of $\lambda_s$ of the first 10 slots are $\{1.0,0.71,0.56,0.53,0.49,0.47,0.44,0.44,0.43,0.43\}$. We considered two scenarios, one having $K = 5$ and one having $K = 10$. In both cases we let $N\in \{50, 60, \dots, 100\} \cup \{200,300, \dots, 1000\}$. For each pair $(K,N)$, 20 instances were generated. We implemented our algorithms in the C++11 language and executed them on the OSX 10.10.3 operating system. The main memory was a 16GB 1600MHz DDR3 RAM, while the processor was an Intel Core i7--4850HQ CPU. We compiled the source with GNU \texttt{g++} version 4.9.1. Parallelization was achieved using OpenMP version 4.0.
		\ifshowsupplementalmaterial
		Please refer to \SupplementalMaterial{} for a more thorough analysis and for the complete boxplot version of the following plots.
		\fi

	\subsection{\textsc{dominated--ads} algorithm}

		We study the average number of ads that survive \textsc{dominated--ads} in Figure~\ref{fig:chc dimen}. The upper bounding strategy needed in Lemma~\ref{lem:chc condition} was implemented using a $O(NK)$ time algorithm.
		
		\begin{figure}[H]
			\centering\includegraphics[width = 0.94\linewidth]{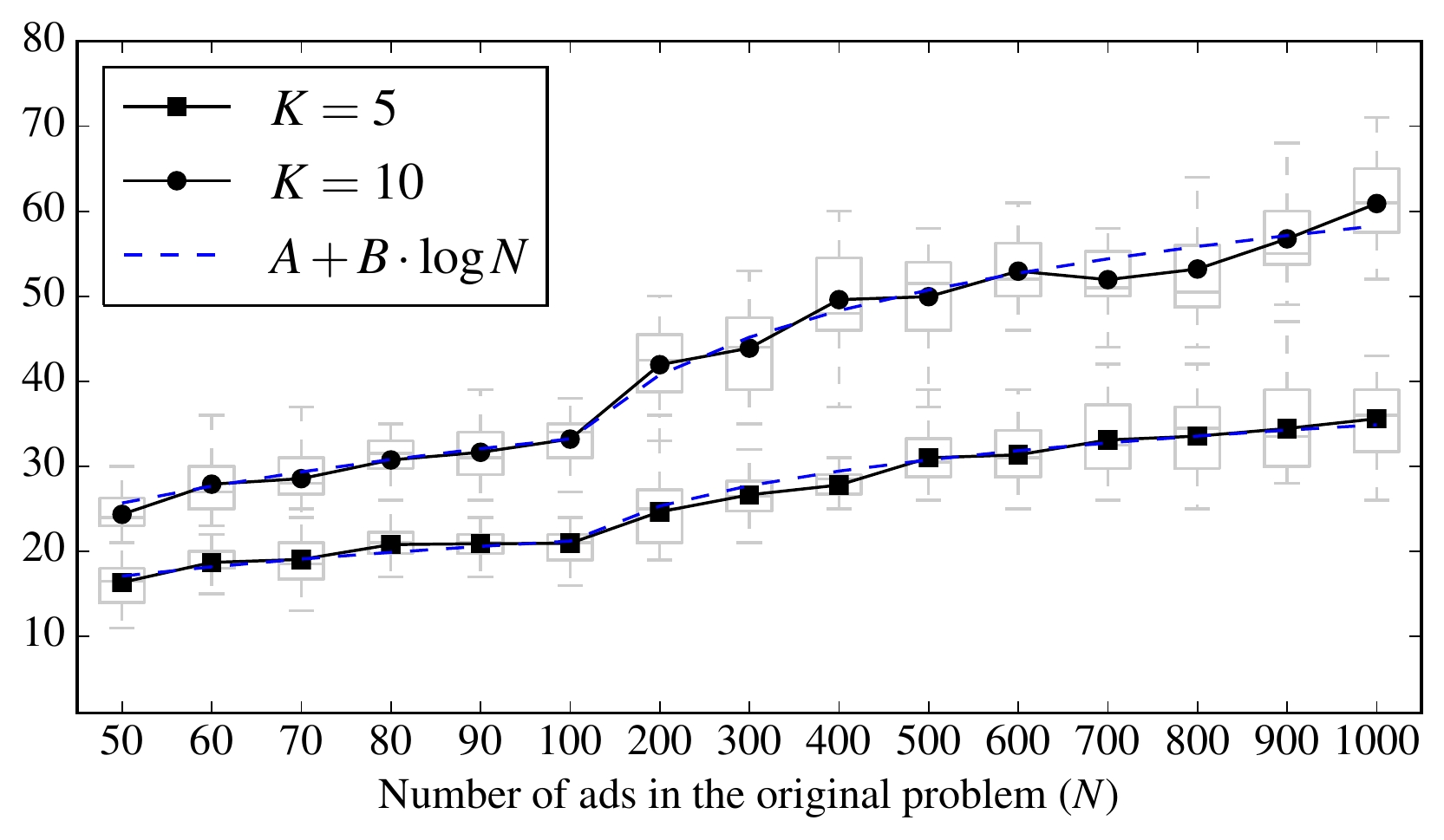}
			\caption{Average number of ads after running the \textsc{dominated--ads} algorithm.\label{fig:chc dimen}}
		\end{figure}
		
		 Notice that the number of removed ads already considerable when $N$ is small (e.g., $N \approx 50$), and that the prune ratio increases dramatically as $N$ grows (for instance, the prune ratio is approximately $96\%$ on average when $N = \numprint{1,000}$ and $K = 5$).
		Experiments show that the number of surviving ads is of the form $\tilde{N} = A + B\cdot\log N$ for suitable values\footnote{For $K = 5$, we have $A \approx -6.1, B \approx 5.9$ and $R^2 > 0.98$, where $R^2$ is the coefficient of determination of the fit.}$^,$\footnote{For $K=10$, we have $A \approx -16.9, B \approx 10.9$ and $R^2 > 0.98$.} of $A$ and $B$.
		
		In Figure~\ref{fig:chc time} we report the average running time for \textsc{dominated--ads}. The  graph shows that the running time depends quadratically on the original problem size $N$, but it remains negligible (around 20ms) even when $N = \numprint{1,000}$.
				
		\begin{figure}[H]
			\centering\includegraphics[width = 0.94\linewidth]{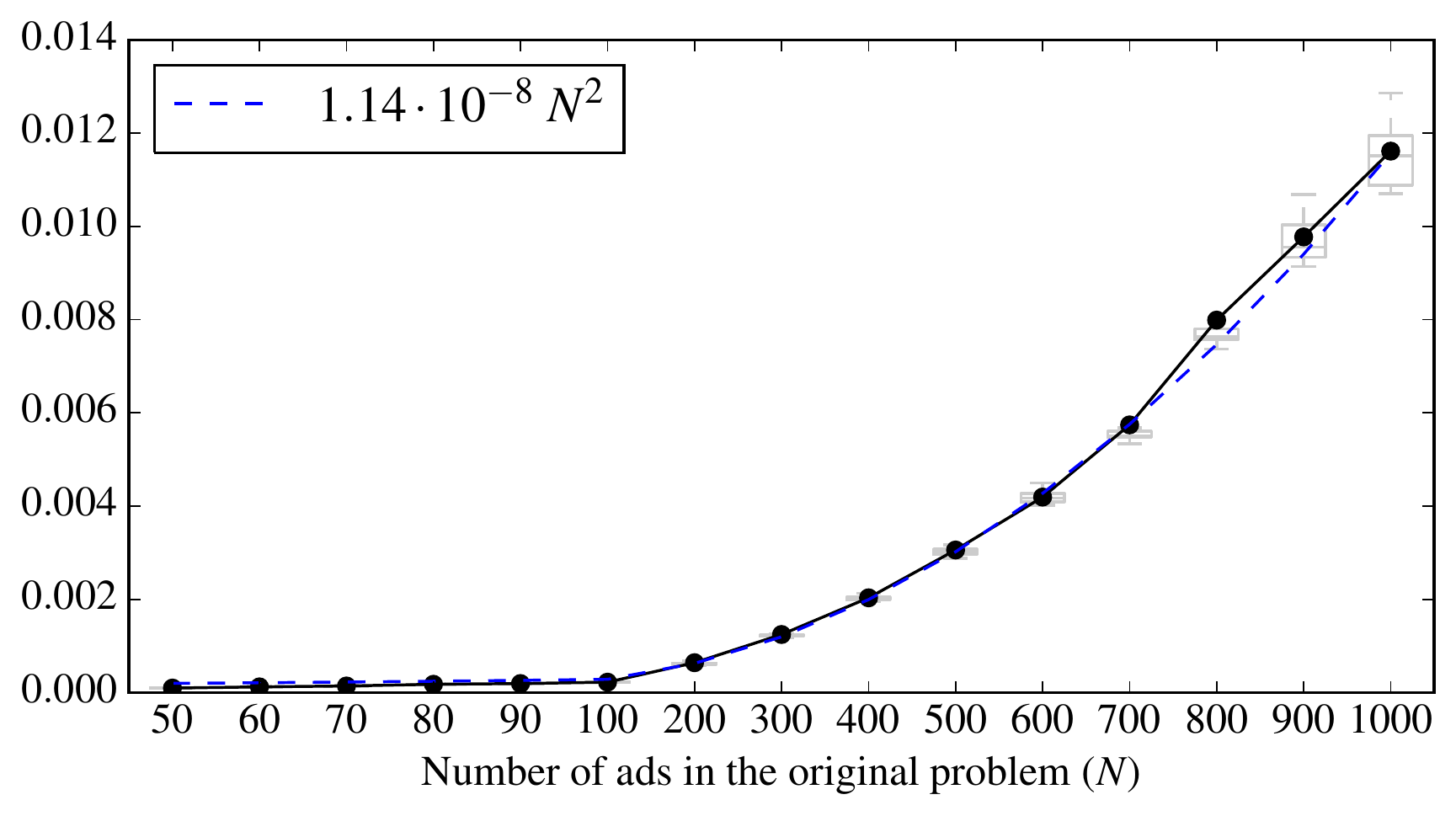}
			\caption{Avg. \textsc{dominated--ads} run. time [s]. This is independent of $K$ and grows as $\approx 1.14\cdot 10^{-8} N^2$ ($R^2 > 0.99$).\label{fig:chc time}}
		\end{figure}
			
	\subsection{\textsc{colored--ads} algorithm}
			Figure~\ref{fig:cc time} shows the average running time for \textsc{colored--ads}, implemented in its randomized version. The number of iterations was set to $e^K\log 2$, i.e. 50\% probability of finding the optimal allocation. The algorithm is run on the reduced instances produced by \textsc{dominated--ads}. As a result of the shrinking process, we point out that the running times for $N=50$ and $N=\numprint{1,000}$ are comparable. We also remark that, in order for \textsc{colored--ads} to be applicable in real--time contexts when $K = 10$, some sort of hardware scaling is necessary, as a running time of $\approx0.5$ seconds per instance is likely to be too expensive for many practical applications. When $K = 5$, though, no scaling is necessary, and the algorithm proves to be extremely efficient.

		\begin{figure}[H]
			\centering\includegraphics[width = 0.94\linewidth]{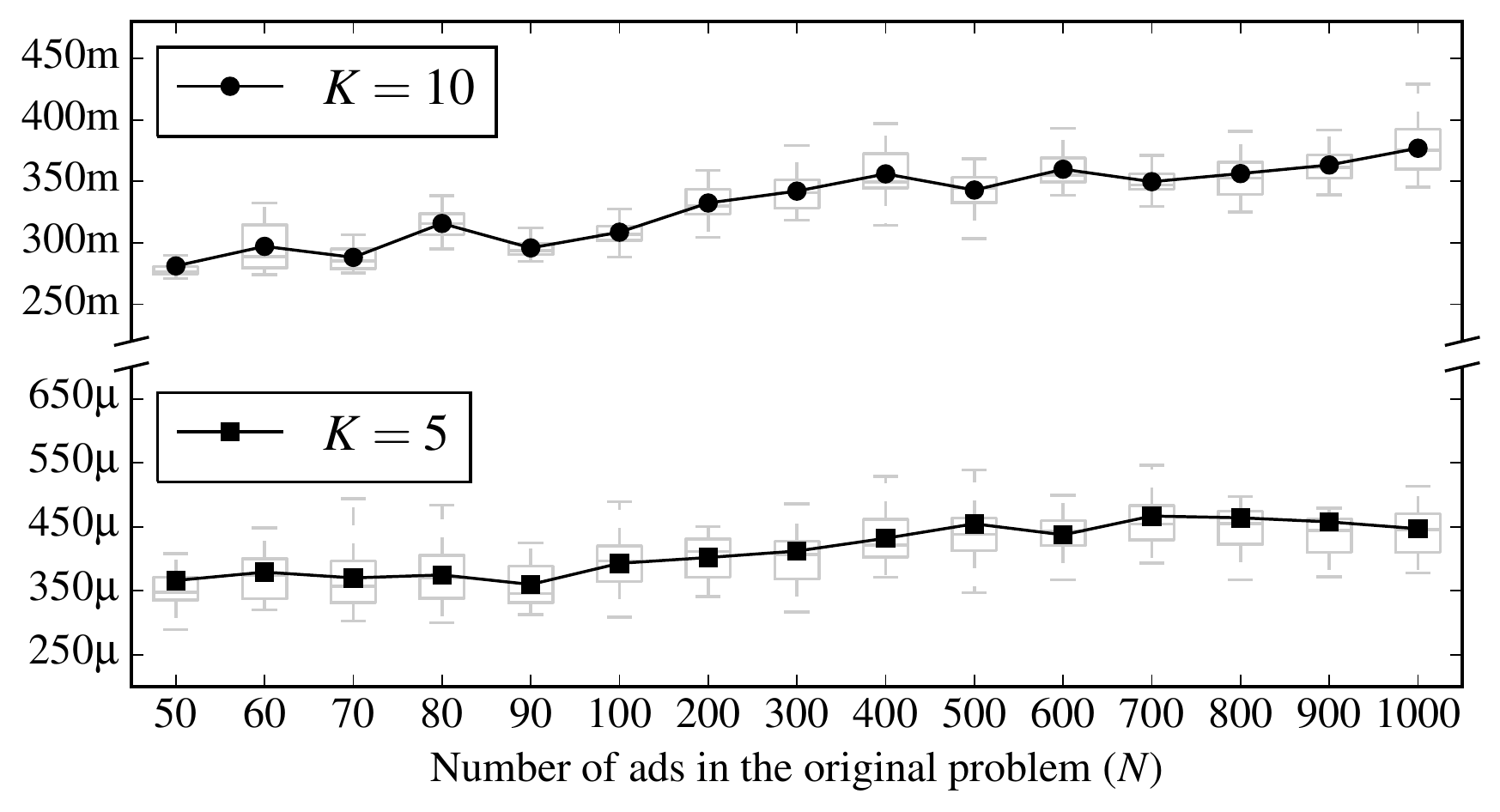}
			\caption{Avg. \textsc{colored--ads} run. time [s] ($e^K\log 2$ runs).\label{fig:cc time}}
		\end{figure}
		
	\subsection{\textsc{sorted--ads} algorithm}
		In figure~\ref{fig:sorted approx} we study the approximation ratio of \textsc{sorted-ads}. Optimal allocation values were computed using an exponential branch--and--bound algorithm, similar to that of~\cite{aamas2013}. For each shrunk problem instance, we generated $2K^3$ random orders, and used \textsc{sorted--ads} to approximate the optimal allocation. Surprisingly, even though the number of iterations is relatively low, approximation ratios prove to be very high, with values $>97\%$ in the worst case, with both medians and means always $>99\%$. Finally, in Figure~\ref{fig:sorted time} we report the corresponding computation times. These prove to be in the order of a handful of milliseconds, thus making \textsc{sorted--ads} suitable in real-time contexts even for a large number of ads.
		\begin{figure}[H]
			\centering\includegraphics[width = 0.94\linewidth]{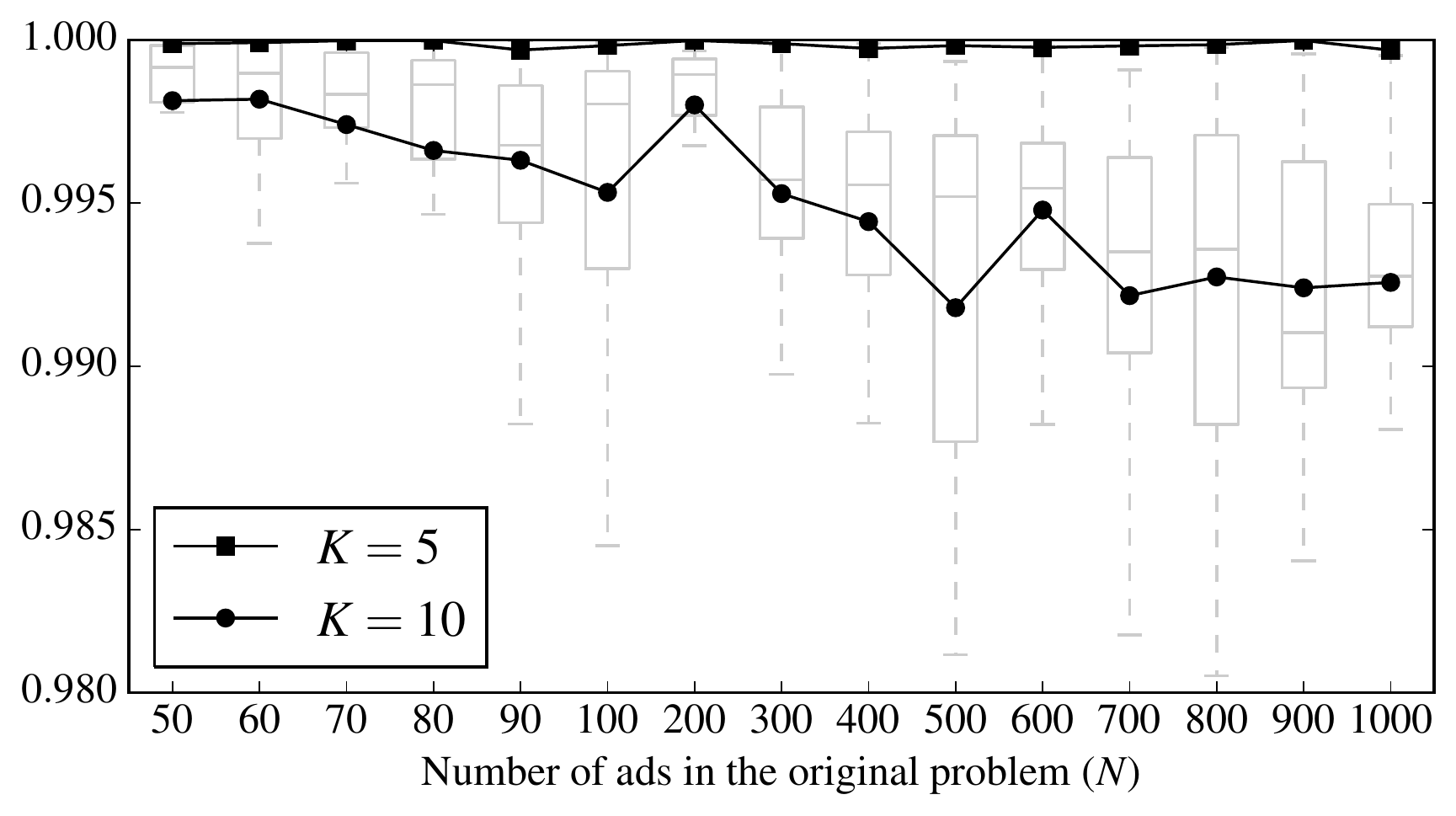}
			\caption{Avg. \textsc{sorted--ads} approx. ratio ($2K^3$ orders).\label{fig:sorted approx}}
		\end{figure}
		\begin{figure}[H]
			\centering\includegraphics[width = 0.94\linewidth]{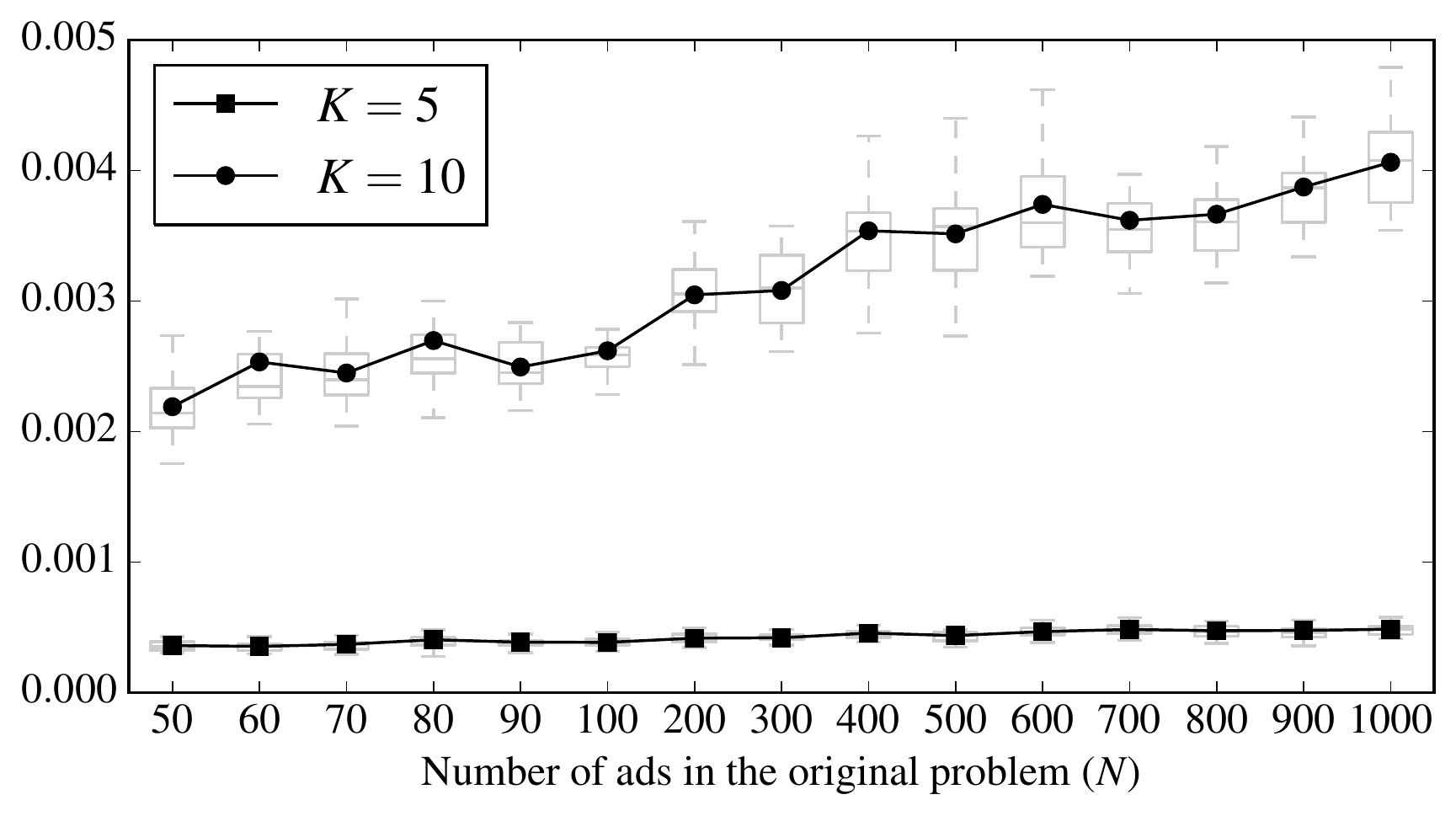}
			\caption{Avg. \textsc{sorted--ads} running time [s] ($2K^3$ orders).\label{fig:sorted time}}
		\end{figure}

\section{Conclusions and future work}
In this paper, we provide several contributions about the ad/position dependent cascade model for sponsored search auctions. This model is commonly considered the one providing the best tradeoff between accuracy and computational complexity, but no algorithm suitable for real--world applications is known. Initially, we study the inefficiency of GSP and VCG with the only position dependent cascade model w.r.t. the VCG with the ad/position dependent cascade model, analyzing PoA and PoS both of the social welfare and the auctioneer's revenue. Our results show that the inefficiency is in most cases unbounded and that the dispersion over the equilibria is extremely large, suggesting that the GSP, as well as the VCG with position dependent cascade model, presents severe limits. Subsequently, we provide three algorithms and we experimentally evaluate them with a real--world dataset. Our first algorithm reduces the size of the instances, discarding ads. Empirically, the number of non--discarded ads is logarithmic. Our second algorithm finds the optimal allocation with high probability, requiring a short time (less than 1ms when $K = 5$ even with $\numprint{1,000}$ ads), but too long for real--time applications when $K=10$. Our third algorithm approximates the optimal allocation in very short time (less than 5ms even with $\numprint{1,000}$ ads) providing very good approximations ($>$0.98). This shows that the ad/dependent cascade model can be effectively used in practice. \ifshowsupplementalmaterial
Furthermore, we prove that our approximation algorithm has a theoretical bound of $1/2$ in restricted domains and we leave open the proof that the bound holds in general settings.
\fi

In future, we will extend our analysis and our algorithms to models in which there is a contextual graph among the ad, and in which there are multiple slates.

\bibliography{citations}
\bibliographystyle{aaai}

\ifshowsupplementalmaterial
\clearpage\newpage
\begin{appendices}
	\section{Appendix}
	\vspace{.5cm}
	\section{Inefficiency of GSP and VCG with PDC}
	
	\subsection{GSP analysis}
	\gspnotir*
	\begin{proof}
		Consider a problem with $N > K \ge 2$ having:
		\begin{itemize}[nolistsep]
			\item ${v}_1 = \cdots = {v}_N = 1$,
			\item $q_1 = \cdots = q_N = 1$,
			\item $c_1 = \cdots = c_N = 0$,
			\item $\lambda_1 = \cdots = \lambda_{K-1} = 1$.
		\end{itemize}
		In the GSP model, there exists an optimal allocation mapping ad 2 to slot 2. Therefore, agent 2 pays 1, but actually gains no value since all the $c$'s are equal to 0. Hence, agent 2 has a negative payoff even though she is acting truthfully.
	\end{proof}

	\gspswpoaovb*
	\begin{proof}
		Consider a problem with $N = K = 2$ having:
		\begin{itemize}[nolistsep]
			\item ${v}_1 = \varepsilon, {v}_2 = 1$,
			\item $q_1 = q_2 = 1$,
			\item $c_1 = \varepsilon$, $c_2 = 1 - \varepsilon$,
			\item $\lambda_1 = 1$,
			\item $\hat{v}_1 = 1/\varepsilon, \hat{v}_2 = \varepsilon^2 / 2$.
		\end{itemize}
		The utilities for the bidders are
		\[
			\begin{aligned}
				&u_1 = \bar{v}_1 - q_2 \hat{v}_2 = \varepsilon - \varepsilon^2/2,\\
				&u_2 = c_1 \lambda_1 \bar{v}_2 - 0 = \varepsilon.
			\end{aligned}
		\]
		Furthermore, the GSP allocates ad 1 before ad 2, with a  social welfare of $2\varepsilon$. The truthful strategy profile is a Nash equilibrium, as if bidder 1 made a bid $< \varepsilon/2$ her new utility would be $u'_1 = \varepsilon - \varepsilon^2 < u_1$; analogously, if bidder 2 made a bid $>1$, then her utility would be $u'_2 = 0 < u_2$.
		In the VCG with APDC model, the optimal allocation has a social welfare of $1 + \varepsilon(1-\varepsilon)$. Letting $\varepsilon\rightarrow 0$, we obtain a ratio of $\frac{2\varepsilon}{1 + \varepsilon(1-\varepsilon)}\rightarrow 0$. This concludes the proof.
	\end{proof}

	\gsppoanoovb*
	\begin{proof}
		Consider a problem with $N > K \ge 2$ having:
		\begin{itemize}[nolistsep]
			\item ${v}_1 =  \cdots = {v}_N = 1$,
			\item $q_1 = \cdots = q_N = 1$,
			\item $c_1 = 0$, $c_2 = \cdots = c_N = 1$,
			\item $\lambda_1 = \cdots = \lambda_{K-1} = 1$,
			\item $\hat{v}_1 = \cdots = \hat{v}_N = 1$.
		\end{itemize}
		The allocation mapping ad $i$ to slot $i$ for each $i = 1,\dots, K$ is optimal for the GSP, resulting in a social welfare of $1$. Furthermore, the truthful strategy profile is a Nash equilibrium, since no agent would benefit from changing the reported type to a lower value. On the other hand, every optimal allocation for VCG with APDC allocates ad 1 in slot $K$, thus resulting in a social welfare of~$K$.  This concludes the proof.
	\end{proof}
	
	\gsprevpoa*
	\begin{proof}
		Consider a problem with $N > K \ge 2$ having:
		\begin{itemize}[nolistsep]
			\item ${v}_1 = 1$, ${v}_2 = \cdots = {v}_N = 1-\varepsilon$,
			\item $q_1 = \cdots = q_N = 1$,
			\item $c_1 = 0$, $c_2 = \cdots = c_N = 1$,
			\item $\lambda_1 = \cdots = \lambda_{K-1} = 1$,
			\item $\hat{v}_1 = 1$, $\hat{v}_2 = \cdots = \hat{v}_N = 0$.
		\end{itemize}
		The allocation mapping ad $i$ to slot $i$ for each $i = 1, \dots, K$ is optimal for the GSP. The revenue of the auctioneer is $0$, each bidder been charged a payment of 0. The above strategy profile is a Nash equilibrium, as the first bidder obviously has no incentive in decreasing its reported type, while each other bidder can be allocated in the first slot only bidding a value larger than or equal to $1$, but this would lead a payment $1> 1-\epsilon$, generating thus a strictly negative utility. The VCG with APDC admits an optimal allocation featuring ad 1 in the last slot (slot $K$). In this case, bidder 1 is charged a payment $1-\varepsilon$, leading the auctioneer to have a strictly positive revenue. This concludes the proof.
	\end{proof}
	
	\gspswpos*
	\begin{proof}
		Since the SW is calculated in the same way, independently on the allocation function, the PoS of the SW is $\ge 1$. It is then sufficient to show that in at least one circumstance the allocation of GSP and VCG with APDC coincide. To this end, consider a problem with $N > K \ge 2$ having:
				\begin{itemize}[nolistsep]
					\item ${v}_1 = \cdots = {v}_N = 1$,
					\item $q_1 = \cdots = q_N = 1$,
					\item $c_1 = \cdots = c_N = 1$,
					\item $\lambda_1 = \cdots = \lambda_{K-1} = 1$,
					\item $\hat{v}_1 = \cdots = \hat{v}_N = 1$.
				\end{itemize}
		It is easy to see that the allocation mapping ad $i$ to slot $i$ for each $i = 1, \dots, K$ is optimal with both the mechanisms. Furthermore, it is a Nash equilibrium, since no agent benefits from misreporting her true type.  This concludes the proof.
	\end{proof}
		
	\gsprevpos*
	\begin{proof}
		Consider a problem with $N = K = 2$ having:
		\begin{itemize}[nolistsep]
			\item ${v}_1 = 1$, ${v}_2 = 1/3$,
			\item $q_1 = q_2 = 1$,
			\item $c_1 = 1, c_2 = 1/2$,
			\item $\lambda_1 = 1$,
			\item $\hat{v}_1 = 1$, $\hat{v}_2 = 1/3$.
		\end{itemize}
		The GSP allocates ad $1$ before ad $2$, with a revenue of $1/3$. Notice that the above strategy profile constitutes a Nash equilibrium, as bidder 1 can only decrease its utility from $2/3$ to $1/2$ when ad $1$ is allocated below ad $2$, while bidder 2 can only decrease her utility from $1/3$ to $-2/3$ if ad $2$ is allocated in the first slot. While the APDC allocates ad $1$ before ad $2$ as well, the revenue for the system is $0$, since bidder 2 is not charged any payment (not being subject to competition for the second slot), and bidder 1 is charged $\bar{v}_2 - \lambda_1 c_1 \bar{v}_2 = 0$.  This concludes the proof.
	\end{proof}
	
	\subsection{VCG with PDC analysis}
	\vcgpdnotir*
	\begin{proof}
		Consider a problem with $N > K \ge 2$ having:
		\begin{itemize}[nolistsep]
			\item ${v}_1 = \cdots = {v}_N = 1$,
			\item $q_1 = \cdots = q_N = 1$,
			\item $c_1 = \cdots = c_N = 0$,
			\item $\lambda_1 = \cdots = \lambda_K = 1$.
		\end{itemize}
		In the VCG with PDC, there exists an optimal allocation mapping ad 2 to slot 2. Therefore, agent 2 pays 1, but actually she gains no value since all the $c$'s are equal to 0. Hence, agent 2 has a negative payoff even though she bids truthfully and the mechanism is not IR nor DSIC.  This concludes the proof.
	\end{proof}

	\vcgpdswpoaovb*
	\begin{proof}
		Consider a problem with $N = K = 2$ having:
		\begin{itemize}[nolistsep]
			\item ${v}_1 = 0, v_2 = 1$,
			\item $q_1 = q_2 = 1$,
			\item $c_1 = 0$, $c_2 = 1$,
			\item $\lambda_1 = 1/2$,
			\item $\hat{v}_1 = 4, \hat{v}_2 = 0$.
		\end{itemize}
		The utilities for the bidders are both 0 in the VCG with PDC model. The VCG with PDC allocates ad 1 before ad 2, as $q_1\hat{v}_1 + \lambda_1 q_2\hat{v}_2 > q_2\hat{v}_2 + \lambda_1 q_1\hat{v}_1$. The above strategy profile is a Nash equilibrium, as bidder 1 does not gain more by reducing her bid, while, if bidder 2 made any bid $>4$, her utility would be $\bar{v}_2 - (1-\lambda)q_1\hat{v}_1 < 0$. The social welfare produced by the VCG with PDC  is 0. On the other hand, the VCG with APDC  allocated ad 2 before ad 1, producing a strictly positive social welfare.  This concludes the proof.
	\end{proof}

	\vcgpdswpoanoovb*
	\begin{proof}
		The same setting used in the proof of Lemma~\ref{lem:gsp sw poa no ovb} applies here.
	\end{proof}

	\vcgpdrevpoa*
	\begin{proof}
		Consider a problem with $N = K \ge 2$ having:
		\begin{itemize}[nolistsep]
			\item ${v}_1 = \cdots = v_N = 1$,
			\item $q_1 = \cdots = q_N = 1$,
			\item $c_1 = \cdots = c_N = 1$,
			\item $\lambda_1 = \cdots = \lambda_{K-1} = 1/2$,
			\item $\hat{v}_1 =  \cdots =  \hat{v}_N = 1$.
		\end{itemize}
		Any allocation is optimal for the VCG with PDC, charging each bidder a payment of zero. Furthermore, the above strategy profile is a Nash equilibrium for the VCG with PDC, since in any allocation the utility of each bidder is the same. Thus, the revenue of the auctioneer with VCG with PDC is 0. On the other hand, in the VCG with APDC, all the bidders except the one allocated in slot $K$ are charged a strictly positive payment. This concludes the proof.
	\end{proof}

	\vcgpdswpos*
	\begin{proof}
		The same setting used in the proof of Lemma~\ref{lem:gsp sw pos} applies here.
	\end{proof}

	\vcgpdrevposovb*
	\begin{proof}
		Consider a problem with $N = K = 2$ having:
		\begin{itemize}[nolistsep]
			\item ${v}_1 = 1, v_2 = 0$,
			\item $q_1 = q_2 = 1$,
			\item $c_1 = 1$, $c_2 = 1/2$,
			\item $\lambda_1 = 1/2$,
			\item $\hat{v}_1 = 1, \hat{v}_2 = 1/2$.
		\end{itemize}
		Both VCG with PDC and VCG with APDC allocate ad 1 before ad 2. With VCG with PDC, bidder 1 has a utility of $3/4$, while bidder 2 has a utility of $0$. Notice that the above strategy profile is a Nash equilibrium of the VCG with PDC: if bidder 1 made a bid $<1/2$, her utility would be $1/4$; if bidder 2made a bid $>1$, her utility would be $-1/2$. The revenue for VCG with PDC is $1/4$, while that of VCG with APDC is $0$.  This concludes the proof.
	\end{proof}

	\vcgpdrevposnoovb*
	\begin{proof}
		Consider a problem with $N > K \ge 2$ having:
		\begin{itemize}[nolistsep]
			\item ${v}_1 = \cdots = v_N = 1$,
			\item $q_1 = \cdots = q_N = 1$,
			\item $c_1 = \cdots = c_N = 1$,
			\item $\lambda_1 = \cdots = \lambda_{K-1} = 1/2$,
			\item $\hat{v}_1 = \cdots = \hat{v}_N = 1$.
		\end{itemize}
		Notice that the strategy is a Nash equilibrium of the VCG with PDC, as no bidder can overbid, and no bidder benefits from lowering her reported type. Furthermore, since all the ad continuation probabilities are equal to 1, the revenues of the two systems are equal, thus proving the lemma.  This concludes the proof.
	\end{proof}

\section{\textsc{dominated--ads} algorithm}
	\dominads*
	\begin{proof}
		Without loss of generality, we introduce a fictitious slot, say $K+1$, with $\lambda_K=0$, representing the case in which an ad is not displayed. By simple calculations, it can be observed that, given an allocation $\theta$ and two allocated ads $a,b$ such that $b$ is allocated above $a$ (and $a$ may be allocated in slot $K+1$), swapping $a$ and $b$ leads to an improvement of the social welfare when the following condition holds:
		\[
		\overline{v}_a + c_a D + \overline{v}_b c_a E > \overline{v}_b + c_b D + \overline{v}_a c_b E
		\]
		where:
		\begin{itemize}[nolistsep, itemsep=2mm]
		\item $D = \displaystyle\frac{\sum\limits_{s\in\{\textrm{slot}_\theta(b),\ldots, \textrm{slot}_\theta(a)\}}q_{\textrm{ad}_\theta(s)}v_{\textrm{ad}_\theta(s)}C_\theta(\textrm{ad}_\theta(s))\Lambda_s}{C_\theta(b) \Lambda_{\textrm{slot}_\theta(b)}}$,
		\item $E = \displaystyle\frac{C_\theta(a)\Lambda_{\textrm{slot}_\theta(a)}}{C_\theta(b)\Lambda_{\textrm{slot}_\theta(b)}}$.
		\end{itemize}
		Notice that both $D$ and $E$ do not depend on the parameters ($v,q,c$) of $a$ and $b$.
		
		Ad $a$ will be always swapped with ad $b$ independently of $\theta$ (i.e., independently of the slots in which they are allocated and of how the other ads are allocated) and therefore ad $a$ dominates ad $b$, if the above inequality holds for every feasible value of $D$ and $E$. More precisely, $D$ is lower bounded by 0 and upper bounded by the value of the best allocation (i.e., $B$), while $E$ is lower bounded by 0 and upper bounded by $\lambda_{\max}$. Rearranging the above inequality and replacing $D$ with $y$ and $E$ with $x$, we obtain: 
		\[
		x (\overline{v}_b c_a - \overline{v}_a c_b) + y (c_a - c_b) + \overline{v}_a - \overline{v}_b = w_{a,b}(x,y).
		\]
This concludes the proof.
	\end{proof}
	
	\subsection{\textsc{const--$\lambda$} algorithm}
		The idea behind the \textsc{const--$\lambda$} algorithm is that an upper bound for $\textsc{alloc}(k, K)$ is the optimal allocation of the problem where all the $\lambda$ values are replaced with $\lambda_{\max}$. This value can be (exactly) computed in $O(NK)$ time using $\textsc{sorted-ads}$, by sorting on decreasing $\bar{v}/(1-\lambda_{\max}\,c)$ 
	\subsection{\textsc{decouple} algorithm}
		The idea behind the \textsc{decouple} algorithm is that
		\[
			\begin{array}{ll}
				\textsc{alloc}(k, K) &\ge \textsc{ub}_\text{decouple}(k)\\[2mm]
				&= \bar{v}^*_1 + \bar{v}^*_2 \, c^*_1\, \lambda_k + \cdots \\
				&\quad +\ \bar{v}^*_{K-k+1}\, (c^*_1 \cdots c^*_{K-k})\,( \lambda_k \cdots \lambda_{K-1})
			\end{array}
		\]
	where the $\bar{v}^*$ values represent the $\bar{v}$ values of the ads, ordered in decreasing order (i.e. $\bar{v}^*_1$ is the maximum among all the $\bar{v}$ values); same for the $c^*$ values. Notice that we can easily compute $\bar{v}^*_1, \dots, \bar{v}^*_K$ in $O(K \log K + N)$ time using a linear selection algorithm (e.g. introselect) and then sorting the topmost $K$ values with an efficient sort algorithm (the same holds for the $c^*$ values).
	
	In order to efficiently compute $\tilde{B}$, we need to compute the values $\textsc{ub}_\text{decouple}(1), \dots, \textsc{ub}_\text{decouple}(K)$ in time sub-quadratic in $K$, as in the worst case $K = N$. We define
	\[
		a_k = \bar{v}^*_k\,(c^*_1 \cdots c^*_{k-1}),  \quad b_k = (\lambda_1 \cdots \lambda_{k-1}).
	\]
	In particular, $a_1 = \bar{v}^*_1$ and $b_1 = 1$. Notice that we can assume without loss of generality that $b_k \neq 0,\ k = 1,\dots,K$. We also let
	\[
		c_k = b_k\,\textsc{ub}_\text{decouple}(k).
	\]
	
	It is easy to prove that the vector $\mathbf{c} = (c_K, \dots, c_1)$ is the (discrete) convolution of vectors $\mathbf{a} = (a_1, \dots, a_K)$ and $\mathbf{b} = (b_K,\dots, b_1)$. Indeed, for every $k \in \{1,\dots,K\}$:
	\[
		\begin{aligned}
			(\mathbf{a} * \mathbf{b})_k &= \sum_{i = 1}^{k} \mathbf{a}_{i}\,\mathbf{b}_{k-i+1} = \sum_{i = 1}^{k} a_i\,b_{K-k+i}\\
										&= \sum_{i = 1}^{k} \bar{v}^*_i\,(c^*_1 \cdots c^*_{i-1})\,(\lambda_1\cdots\lambda_{K-k+i-1})\\
										&= b_{K-k+1}\textsc{ub}_{\text{decouple}}(K-k+1)\\
										&= c_{K-k+1}\\
										&= \mathbf{c}_{k}.
		\end{aligned}
	\]
	
	This means that we can determine $\mathbf{c}$ in $O(K \log K)$ time using the FFT algorithm \cite{cooley1965algorithm}. Once we have $\mathbf{c}$, it's trivial to compute $\textsc{ub}_\text{decouple}(k)$ for $k = 1,\dots, K$ in $O(K)$ time. The final complexity for the algorithm is then $O(K\log K + N)$.
	
	\subsection{Fast dominance computation}
		In order to show how to compute $|\text{dom}(a)|$ for each ad $a$, we begin with this simple lemma:
		\begin{lem}
			\label{lem:w total order}
			For any fixed $(x, y) \in \mathcal{D}$, the order $\prec_{(x,y)}$ over the set of ads, defined as $$a \prec_{(x,y)} b \ \Longleftrightarrow\ w_{a,b}(x,y) > 0,$$ is a total order\footnote{We assume a deterministic tie-breaking condition is used to handle the case when $w_{a,b}(x,y) = 0$. Hence, from now on we will assume without loss of generality that $w_{a,b}(x,y) \neq 0$ for all $(x,y)\in\mathcal{D}$ and $a,b \in \mathcal{A}$}.
		\end{lem}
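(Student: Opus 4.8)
The plan is to reduce the sign of $w_{a,b}(x,y)$ to the comparison of a single scalar key attached to each ad, after which the three defining properties of a (strict) total order become transparent. I would start from the explicit expansion already obtained in the proof of Lemma~\ref{lem:chc condition}, namely
\[
w_{a,b}(x,y) = x(\bar{v}_b c_a - \bar{v}_a c_b) + y(c_a - c_b) + (\bar{v}_a - \bar{v}_b).
\]
From this expression antisymmetry is immediate: exchanging the roles of $a$ and $b$ negates every term, so $w_{b,a}(x,y) = -w_{a,b}(x,y)$. Combined with the tie-breaking convention of the footnote (which lets us assume $w_{a,b}(x,y)\neq 0$), this already gives that for each pair exactly one of $a\prec_{(x,y)}b$ or $b\prec_{(x,y)}a$ holds, i.e. totality together with antisymmetry.

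The crucial step is to factor $w_{a,b}$ as a $2\times 2$ determinant of per-ad quantities. Setting
\[
\alpha(a) = \bar{v}_a + y\,c_a, \qquad \beta(a) = 1 - x\,c_a,
\]
a short expansion, in which the mixed term $xy\,c_a c_b$ cancels, shows that
\[
w_{a,b}(x,y) = \alpha(a)\,\beta(b) - \alpha(b)\,\beta(a).
\]
Equivalently, attaching to each ad the vector $p_a = (\alpha(a),\beta(a))$, the value $w_{a,b}(x,y)$ is exactly the planar cross product $p_a \times p_b$.

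I would then observe that on the domain $\mathcal{D} = [0,\lambda_{\max}]\times[0,B]$ every such vector lies in the closed first quadrant: indeed $\bar{v}_a\ge 0$, $c_a\in[0,1]$ and $y\ge 0$ force $\alpha(a)\ge 0$, while $x\le\lambda_{\max}\le 1$ and $c_a\le 1$ give $x c_a\le 1$, hence $\beta(a) = 1 - x c_a \ge 0$. For nonzero vectors confined to a cone of aperture strictly less than $\pi$, the sign of the cross product coincides with the comparison of their polar angles, which is a bona fide total order. Concretely, $a\prec_{(x,y)}b \Longleftrightarrow \alpha(a)\beta(b) > \alpha(b)\beta(a) \Longleftrightarrow \kappa(a) > \kappa(b)$, where $\kappa(a) = \alpha(a)/\beta(a)\in[0,+\infty]$ is the (extended-real) slope of $p_a$; transitivity of $\prec_{(x,y)}$ is then inherited from transitivity of $>$ on the extended reals, and the three order axioms are complete.

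The step I expect to require the most care is the boundary case $\beta(a)=0$, which can genuinely occur when $\lambda_{\max}=1$ and $c_a=1$ (a realistic regime, since $\lambda_1 = 1$ in the experiments). There the naive ratio $\alpha(a)/\beta(a)$ is undefined, so I would either adopt the extended-real key $\kappa(a)\in[0,+\infty]$, which places all ads with $\beta(a)=0$ at the top (consistently tie-broken among themselves), or, equivalently, argue directly through the angular/cross-product formulation, which treats axis-aligned vectors uniformly. The tie-breaking convention of the footnote is precisely what excludes degenerate zero vectors and parallel pairs, guaranteeing that the relation is strict and well defined everywhere on $\mathcal{D}$.
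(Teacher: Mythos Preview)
Your argument is correct and is essentially the same as the paper's: both factor $w_{a,b}(x,y)$ as $(\bar v_a+c_a y)(1-c_b x)-(\bar v_b+c_b y)(1-c_a x)$ and then read off a per-ad scalar key whose comparison reproduces the sign of $w_{a,b}$. The paper divides through by the $\alpha$-terms to obtain the key $(1-c_a x)/(\bar v_a+c_a y)$ and stops there; you instead take $\kappa(a)=\alpha(a)/\beta(a)$ and, more carefully, treat the boundary $\beta(a)=0$ via the cross-product/extended-real formulation, which the paper glosses over.
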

		\begin{proof}
			We have, by the properties of the determinant:
			\[
			\begin{aligned}
				w_{a,b}(x, y) &= \det\begin{pmatrix}x & -y & 1\\1 & \bar{v}_b & c_b\\1 & \bar{v}_a & c_a \end{pmatrix}\\
							  &= \det\begin{pmatrix}
							  			x & -y& 1\\
							  			1-c_b\,x & \bar{v}_b + c_b\,y & 0\\
							  			1-c_a\,x & \bar{v}_a + c_a\,y & 0
							  \end{pmatrix}\\
							  &= (1-c_b\,x)(\bar{v}_a + c_a\,y) - (1-c_a\,x)(\bar{v}_b+c_b\,y)
			\end{aligned}
			\]
			Hence, $w_{a,b}(x, y) > 0$ if, and only if,
			\[
				\frac{1-c_b\,x}{\bar{v}_b + c_b\,y} > \frac{1-c_a\,x}{\bar{v}_a + c_a\,y}
			\]
			
			As we assumed $x$ and $y$ to be fixed parameters, we see that $\prec_{(x,y)}$ defines a total order over the set of ads.
		\end{proof}

		Lemma~\ref{lem:w total order} and Lemma~\ref{lem:discard four points} together prove that the domination partial order, $\prec$, is the intersection of the four total orders $\prec_{(0,0)}$, $\prec_{(0,B)}$, $\prec_{(\lambda_{\max}, 0)}$ and $\prec_{(\lambda_{\max}, B)}$. This immediately suggests the idea of working with ranks; in particular, we define $\rho_{(x,y)}(a)$ as the rank of ad $a$ in the total order $\prec_{(x,y)}$, and introduce the 4--dimensional vector
		\[
			\mathbf{rk}(a) = \begin{pmatrix}\rho_{(0,0)}(a) \\ \rho_{(0,B)}(a) \\ \rho_{(\lambda_{\max}, 0)}(a) \\ \rho_{(\lambda_{\max}, B)}(a)\end{pmatrix}.
		\]
		
		The condition of Lemma~\ref{lem:discard four points} is then equivalent to the following:
		\begin{lem}
			\label{lem:cw rank}
			Given two ads $a$ and $b$, $\mathbf{rk}(a) < \mathbf{rk}(b)$ iff $a \prec b$.
		\end{lem}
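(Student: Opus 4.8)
The plan is to derive the statement as a short chain of equivalences, drawing entirely on the two structural facts already in hand: the characterization of $\prec$ as an intersection of total orders, and the monotonicity of rank functions on a total order.

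First I would record the reduction established just before the lemma. By Lemma~\ref{lem:w total order} each $\prec_{(x,y)}$ is a total order, and by Lemma~\ref{lem:discard four points} together with the affinity of $w_{a,b}$ over the rectangle $\mathcal{D}$ (which forces its minimum onto one of the four corners), the domination relation is exactly the intersection of the four corner orders:
\[
a \prec b \quad\Longleftrightarrow\quad a \prec_{(x,y)} b \ \text{ for each } (x,y)\in\{(0,0),(0,B),(\lambda_{\max},0),(\lambda_{\max},B)\}.
\]
This is the content of the sentence preceding the lemma, so no new work is needed here beyond quoting it.

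Next I would invoke the elementary property of ranks. Since each $\prec_{(x,y)}$ is a total order, its rank function $\rho_{(x,y)}$ is a strictly order-preserving bijection onto $\{1,\dots,|\mathcal{A}|\}$, whence, for any two ads,
\[
a \prec_{(x,y)} b \quad\Longleftrightarrow\quad \rho_{(x,y)}(a) < \rho_{(x,y)}(b).
\]
The tie-breaking assumption in the footnote of Lemma~\ref{lem:w total order} guarantees the order is strict and the ranks are well defined. Conjoining this equivalence across the four corners and reading off the definition of $\mathbf{rk}$, the four scalar inequalities $\rho_{(x,y)}(a) < \rho_{(x,y)}(b)$ are precisely the componentwise inequality $\mathbf{rk}(a) < \mathbf{rk}(b)$; composing this with the reduction above gives $a \prec b \iff \mathbf{rk}(a) < \mathbf{rk}(b)$, as claimed.

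The argument is essentially bookkeeping, so there is no genuine obstacle. The only point deserving care is the direction convention for the ranks: one must confirm that ``dominating at $(x,y)$'', i.e.\ $w_{a,b}(x,y)>0$, corresponds to a \emph{smaller} rank. This is consistent because $w_{b,a}(x,y)=-w_{a,b}(x,y)$ makes $\prec_{(x,y)}$ antisymmetric and hence a bona fide linear order, whose rank function is monotone in the intended direction; fixing this convention once makes the four componentwise comparisons line up correctly with the corresponding $\prec_{(x,y)}$ relations.
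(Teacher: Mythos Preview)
Your argument is correct and matches the paper's approach exactly: the paper treats this lemma as an immediate restatement of the preceding observation that $\prec$ is the intersection of the four corner total orders, and your proof simply spells out the same two steps (intersection of four total orders, then rank monotonicity) in more detail than the paper bothers to. There is nothing to add.
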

		Here $<$ denotes component--wise comparison. Using Lemma~\ref{lem:cw rank}, we see that
		\[
			\text{dom}(a) = \{b \in \mathcal{A}: \mathbf{rk}(b) < \mathbf{rk}(a)\}
		\]
		In other words, we can cast the problem of counting the number of ads dominating $a$ to a geometric dominance counting problem in four dimensions. This problem is very well-studied in the literature, with many results showing algorithms computing dominance counts for every point in the point set in $O(N\ \text{poly\,log}\,N)$ time. For instance, we could use a 4--dimensional range tree \cite{Bentley1979244} to calculate $|\text{dom}(a)|$ for every ad $a$ in batch in $O(N\log^4 N)$ time.
		
		To lower the exponent of the $\log$ factor, we exploit the structure of the problem a bit further. We define the two sets:
		\[
			\begin{array}{c}
				\text{dom}^-(a) = \{b\in\mathcal{A}: b \prec a\ \land\ c_b \le c_a\}\\
				\text{dom}^+(a) = \{b\in\mathcal{A}: b \prec a\ \land\ c_b > c_a\}
			\end{array}
		\]
		It is trivial to verify that $\text{dom}^-(a) \cup \text{dom}^+(a) = \text{dom}(a)$ and $\text{dom}^-(a) \cap \text{dom}^+(a) = \emptyset$, i.e. that $\text{dom}^-(a)$ and $\text{dom}^+(a)$ are a partition of $\text{dom}(a)$. Hence,
		\[
			|\text{dom}(a)| = |\text{dom}^-(a)| + |\text{dom}^+(a)|.
		\]
		
		Notice that
		\[
			\frac{\partial}{\partial y}\,w_{a,b}(x,y) = c_a - c_b
		\]
		This means that when $c_b \le c_a$, $w_{(a,b)}(x, y) \ge w_{(a,b)}(x, 0)$, for all $(x,y)\in\mathcal{D}$. Analogously, when $c_b > c_a$, we have $w_{(a,b)}(x, y) \ge w_{(a,b)}(x,B)$ for all $(x,y)\in\mathcal{D}$. Therefore, it makes sense to introduce the following restricted ranks:
		\[
			\mathbf{rk}^-(a) = \begin{pmatrix}
				\rho_{(0,0)}(a)\\
				\rho_{(\lambda_{\max}, 0)}(a)		
			\end{pmatrix},\ 
			\mathbf{rk}^+(a) = \begin{pmatrix}
				\rho_{(0,B)}(a)\\
				\rho_{(\lambda_{\max}, B)}(a)		
			\end{pmatrix}
		\]
		We formalize the observation above in the next Lemma:
		\begin{lem}
			\label{lem:restricted ranks}
			Given two ads $a$ and $b$:
			\begin{itemize}
				\item $b \in \text{\emph{dom}}^-(a) \Longleftrightarrow \mathbf{rk}^-(b) < \mathbf{rk}^-(a)$;
				\item $b \in \text{\emph{dom}}^+(a) \Longleftrightarrow \mathbf{rk}^+(b) < \mathbf{rk}^+(a)$.
			\end{itemize}
		\end{lem}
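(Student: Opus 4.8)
The plan is to prove each of the two biconditionals by reducing the global domination test for the pair $(b,a)$ from the four corners of $\mathcal{D}$ down to just two of them, using the monotonicity of the domination polynomial in the $y$–direction. The two ingredients I would set up first are: by Lemma~\ref{lem:discard four points} applied to the ordered pair $(b,a)$, the relation $b\prec a$ is equivalent to the positivity of $w_{b,a}$ at the four corners $(0,0)$, $(0,B)$, $(\lambda_{\max},0)$, $(\lambda_{\max},B)$; and by Lemma~\ref{lem:w total order}, for any fixed corner one has $\rho_{(x,y)}(b)<\rho_{(x,y)}(a)$ exactly when $w_{b,a}(x,y)>0$. Consequently, each scalar inequality contained in $\mathbf{rk}^{-}(b)<\mathbf{rk}^{-}(a)$ (resp. $\mathbf{rk}^{+}(b)<\mathbf{rk}^{+}(a)$) is nothing but the positivity of $w_{b,a}$ at the two corners lying on the edge $y=0$ (resp. $y=B$).

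The forward direction of both claims is then immediate and requires no case analysis. If $b\in\text{dom}^{-}(a)$ (resp. $b\in\text{dom}^{+}(a)$), then in particular $b\prec a$, so $w_{b,a}$ is positive at all four corners; restricting to the two corners recorded by the corresponding restricted rank gives $\mathbf{rk}^{-}(b)<\mathbf{rk}^{-}(a)$ (resp. $\mathbf{rk}^{+}(b)<\mathbf{rk}^{+}(a)$). Note that the $c$–comparison defining membership is not even needed here, since it is already guaranteed by the side on which the count is performed.

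For the backward direction I would invoke the derivative computation $\partial_y w_{b,a}=c_b-c_a$ noted above. Within a fixed $c$–class the sign of this derivative is pinned down, so $w_{b,a}$ is monotone in $y$ and the two corners checked by the matching restricted rank are precisely those sitting on the edge $y=y^{*}$ at which $w_{b,a}$ attains its minimum along $y$. The argument then runs: the hypothesis gives $w_{b,a}>0$ at both endpoints $x=0$ and $x=\lambda_{\max}$ of that edge; since $w_{b,a}(\,\cdot\,,y^{*})$ is affine in $x$, positivity at the two endpoints forces positivity on the whole segment $[0,\lambda_{\max}]$; and since $y^{*}$ minimizes $w_{b,a}$ along $y$ for every fixed $x$, we obtain $w_{b,a}(x,y)\ge w_{b,a}(x,y^{*})>0$ on all of $\mathcal{D}$. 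Hence $w_{b,a}>0$ everywhere, i.e. $b\prec a$, and combined with the prescribed $c$–class this places $b$ in the appropriate $\text{dom}^{\mp}(a)$.

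The hard part will be the sign bookkeeping: one must verify carefully, for each $c$–class, that the edge singled out by the derivative observation is exactly the edge ($y=0$ or $y=B$) whose two corners are the ones recorded in the matching restricted rank. This is the only delicate point, because $w_{b,a}=-w_{a,b}$, so passing from the statement of the monotonicity remark (written for $w_{a,b}$) to the function $w_{b,a}$ that governs $b\prec a$ flips the direction of monotonicity and hence swaps the roles of the $y=0$ and $y=B$ edges; getting this correspondence right is where the whole reduction stands or falls. Once it is fixed, the affine reduction from an edge to its two endpoints and the monotone propagation in $y$ are routine, and I would finish by checking the degenerate case $c_b=c_a$, in which $w_{b,a}$ is constant in $y$ so the two edges coincide as tests and the claim holds under the assumed tie–breaking.
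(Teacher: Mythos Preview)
Your approach---reducing the four--corner test of Lemma~\ref{lem:discard four points} to two corners via the $y$--monotonicity of $w_{b,a}$---is exactly the argument the paper intends; the paper offers no separate proof but simply presents Lemma~\ref{lem:restricted ranks} as the formalization of the monotonicity observation preceding it. Your forward direction is fine and matches the paper.

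The gap is precisely where you yourself put the warning sign. You defer the ``sign bookkeeping'' and assert that once it is fixed the rest is routine, but if you actually carry it out the correspondence does \emph{not} line up with the lemma as stated. Since $\partial_y w_{b,a}=c_b-c_a$, for $b$ in the ``$-$'' class ($c_b\le c_a$) the map $y\mapsto w_{b,a}(x,y)$ is nonincreasing, so its minimum over $[0,B]$ is attained at $y=B$, not at $y=0$. Consequently, positivity of $w_{b,a}$ at the two $y=0$ corners---which is what $\mathbf{rk}^-(b)<\mathbf{rk}^-(a)$ encodes---does not force positivity on all of $\mathcal D$, and the backward implication of the first bullet fails. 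Concretely, take $\bar v_a=1$, $c_a=1$, $\bar v_b=2$, $c_b=0$, $\lambda_{\max}=0.4$, $B=2$: then $w_{b,a}(x,y)=1-2x-y$, which is positive at $(0,0)$ and $(0.4,0)$ but negative at $(0,2)$; hence $c_b\le c_a$ and $\mathbf{rk}^-(b)<\mathbf{rk}^-(a)$, yet $b\not\prec a$. The symmetric mismatch breaks the backward direction of the second bullet. In other words, the pairing of $\text{dom}^{\pm}$ with $\mathbf{rk}^{\pm}$ in the lemma is swapped; your sketch cannot be completed as written because the statement, read literally, is false.

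Two further remarks. First, even ignoring the swap, the biconditional cannot hold verbatim: $\mathbf{rk}^-(b)<\mathbf{rk}^-(a)$ carries no information about whether $c_b\le c_a$, so it cannot by itself place $b$ in $\text{dom}^-(a)$. You are right that the intended reading (and the only thing Algorithm~\ref{algo:fast chc} needs) is the equivalence \emph{conditional} on the $c$--class; this hypothesis should be made explicit. Second, with the labels corrected---pair $\text{dom}^-$ with $\mathbf{rk}^+$ and $\text{dom}^+$ with $\mathbf{rk}^-$---your backward argument goes through exactly as you outline: the two checked corners lie on the minimizing edge in $y$, affinity in $x$ gives positivity along that edge, and $y$--monotonicity lifts it to all of $\mathcal D$.
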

		
		Assume we have a data structure $\textsc{rt}$ supporting each of the following operations in $O(\log N)$ time:
		\begin{itemize}
			\item $\textsc{rt.insert}(\mathbf{v})$ insert the 2-dimensional point $\mathbf{v}$ in the structure; point $\mathbf{v}$ is not already present.
			\item $\textsc{rt.erase}(\mathbf{v})$ deletes the 2-dimensional point $\mathbf{v}$ from the structure; point $\mathbf{v}$ is guaranteed to be present. Furthermore, after the first call to $\textsc{sc.erase}$ is made, the structure no longer accepts calls to $\textsc{ds.insert}$.
			\item $\textsc{rt.dominance}(\mathbf{v})$ counts the number of points $\mathbf{w}$ in the structure, satisfying $\mathbf{w} < \mathbf{v}$.
		\end{itemize}
		Such a data structure exists and is well-studied in the literature. It basically consists of a 2-dimensional range tree combined with the technique of \emph{dynamic fractional cascading} \cite{mehlhorn1990dynamic}.
		
		The $\textsc{rt}$ data structure can be used to compute $|\text{dom}(a)|$ for every ad $a$ in batch in $O(N\log N)$ time, as streamlined in Algorithm~\ref{algo:fast chc}.
			\begin{figure}
				\begin{algorithm}[H]
				  \small
				  \caption{\small \textsc{fast--dominated--ads}}
				  \label{algo:fast chc}
				  \begin{algorithmic}[1]
				    \Procedure{fast--dominated--ads}{ads, slots}
				    \State Determine $\mathcal{D}$, as defined in Lemma \ref{lem:chc condition}
				    \State $\textsc{rt}^+ \gets$ \textbf{new} \textsc{rt}
				    \State $\textsc{rt}^- \gets$ \textbf{new} \textsc{rt}
				    \For{\textbf{each} ad $a\in\mathcal{A}$}
				    	\State $\textsc{rt}^-\textsc{.insert}(\mathbf{rk^-}(a))$
				    \EndFor
				    \For{\textbf{each} ad $a\in\mathcal{A}$, sorted according to decreasing $c$,}
				    	\State $|\text{dom}^+(a)| \gets \textsc{rt}^+\textsc{.dominance}(\mathbf{rk^+}(a))$
				    	\State $|\text{dom}^-(a)| \gets \textsc{rt}^-\textsc{.dominance}(\mathbf{rk^-}(a))$
				   		\State \hspace{-.2mm}$|\text{dom}(a)| \gets |\text{dom}^+(a)| + |\text{dom}^-(a)|$
				    	\State $\textsc{rt}^+\textsc{.insert}(\mathbf{rk^+}(a))$
				    	\State $\textsc{rt}^-\textsc{.erase}(\mathbf{rk^-}(a))$
				    \EndFor
				    \State Discard all ads $a$ having $|\text{dom}(a)| > K$
				    \EndProcedure
				  \end{algorithmic}
				\end{algorithm}
			\end{figure}
			
\section{\textsc{colored--ads} algorithm}
	\optalloclem*
	\begin{proof}
		Let $\theta$ be any optimal allocation for the problem at hand. If it respects the condition of the lemma, the lemma is trivially true. Otherwise, we show that it is possible to build an optimal allocation $\theta^*$ from $\theta$, respecting the condition.
		
		Let $k^*$ the maximum index of any empty slot of $\theta$. If it is the last slot, take any non allocated ad, and allocate it to $k^*$. Such an ad must exist because we always assume $N \ge K$. As the value of the new allocation is not less than that of $\theta$, we conclude that the new allocation is optimal.
		
		If, on the other hand, $k^* \neq K$, we act as follows: we move all the ads allocated in slots $k^* + 1$ to $K$ one slot ahead. This operation does not decrease the value of the allocation, and at the same time leaves slot $K$ empty. We then operate as in the previous case, filling slot $K$ with any non allocated ad.
		
		We repeat this algorithm until no slot is left vacant. 
	\end{proof}
	
	\subsection{Time--approximation tradeoffs}
		{\centering\includegraphics[width=1\linewidth]{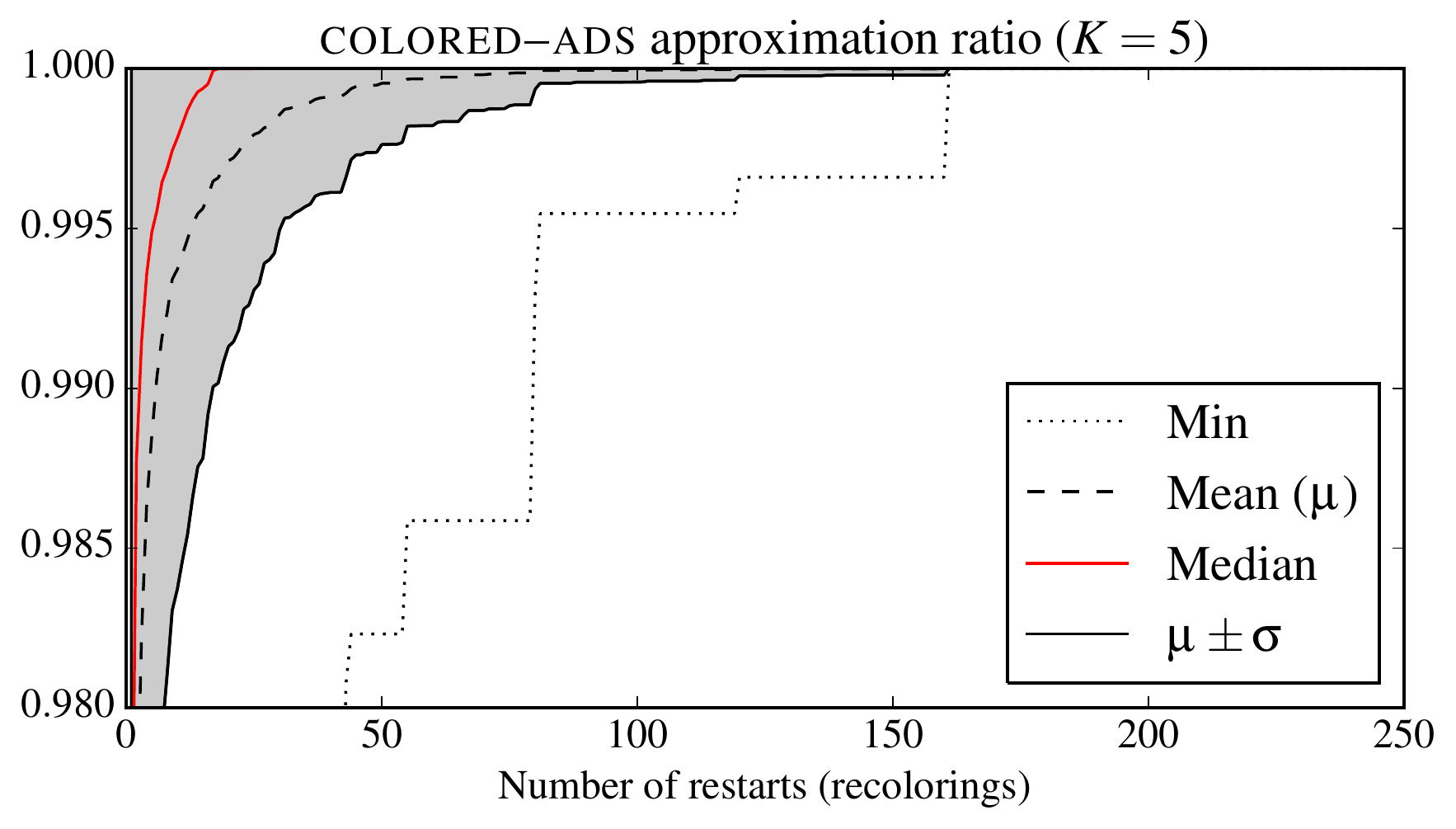}\\
		\centering\includegraphics[width=1\linewidth]{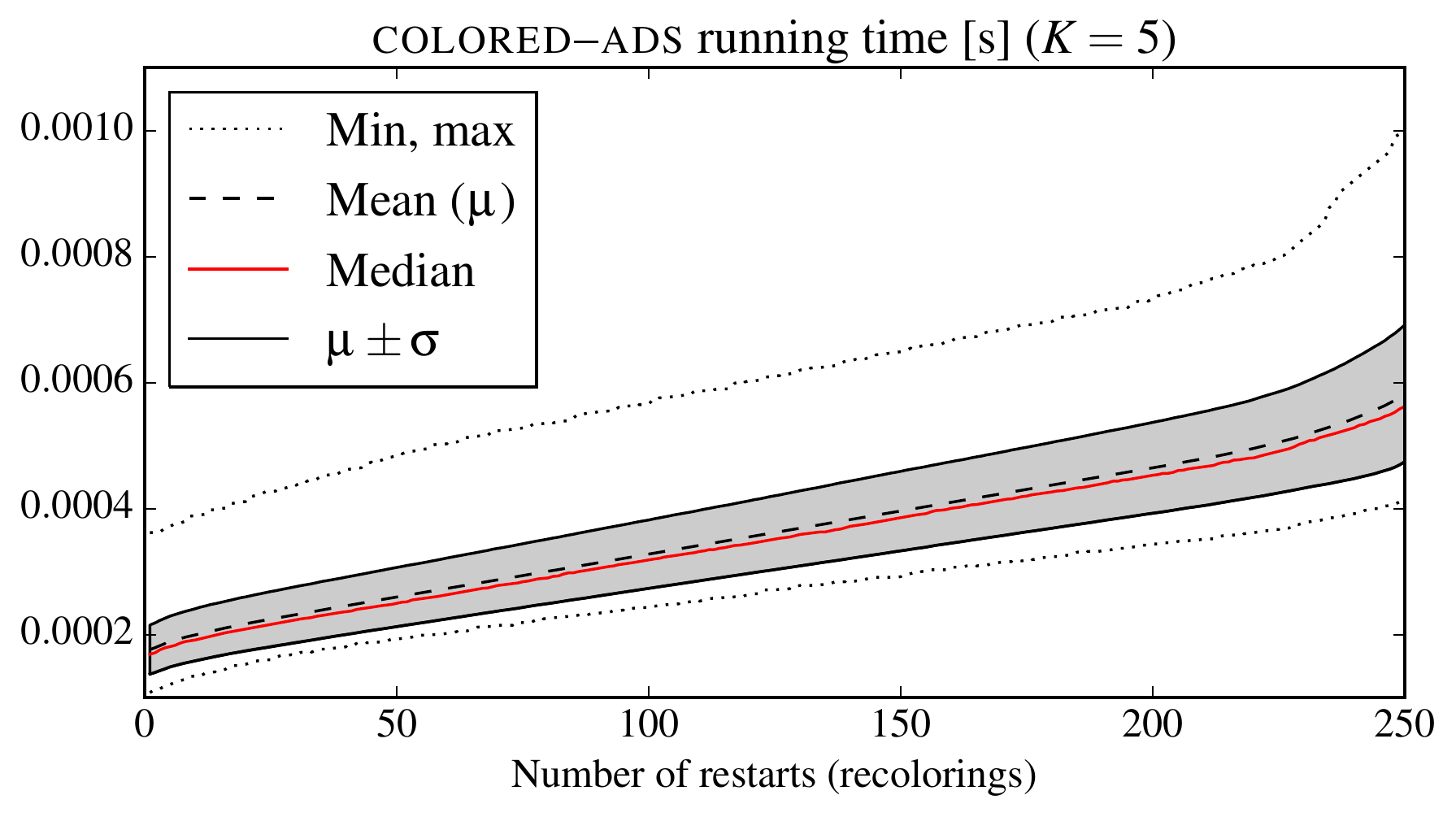}\\
		\centering\includegraphics[width=1\linewidth]{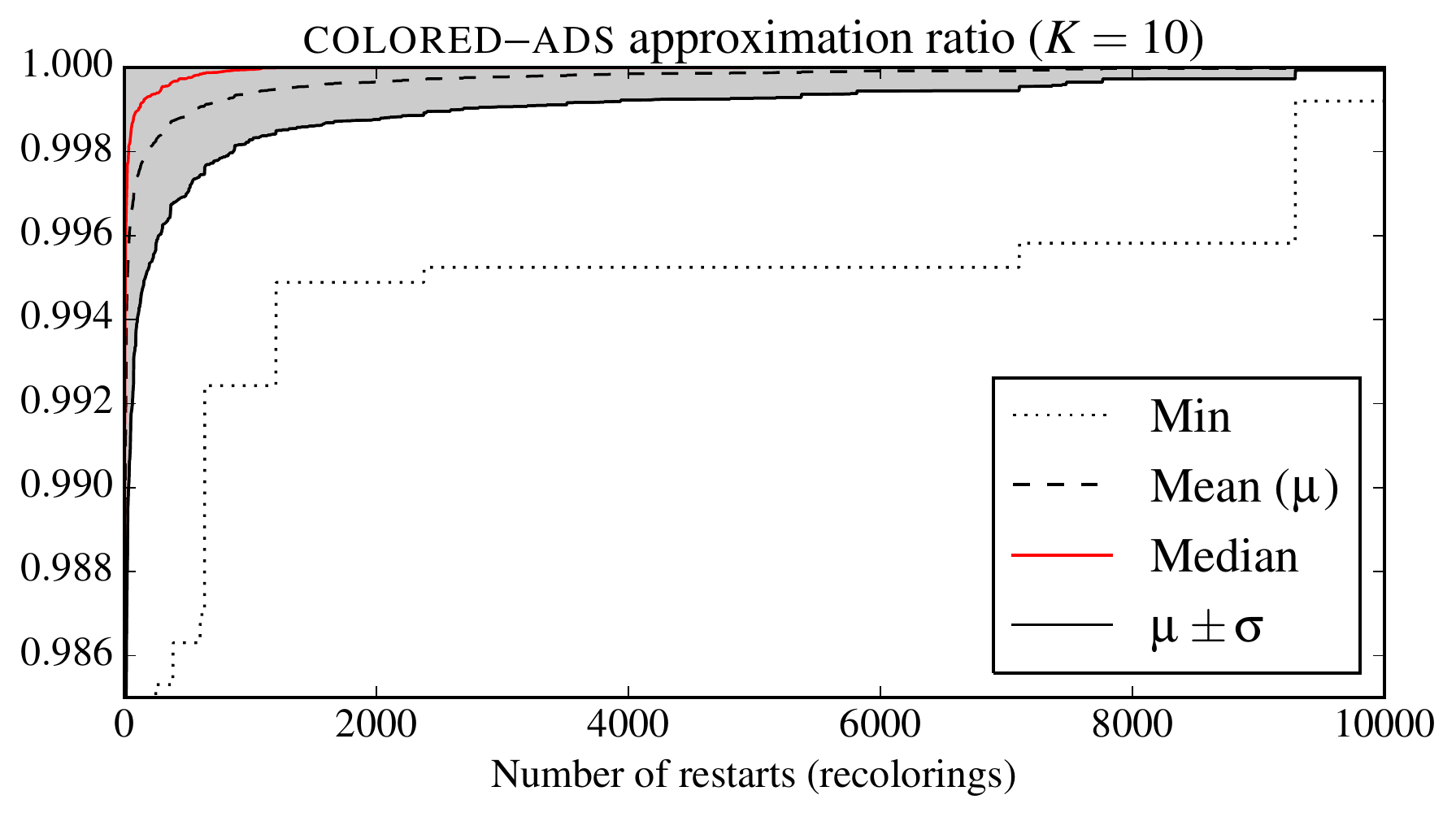}\\
		\centering\includegraphics[width=1\linewidth]{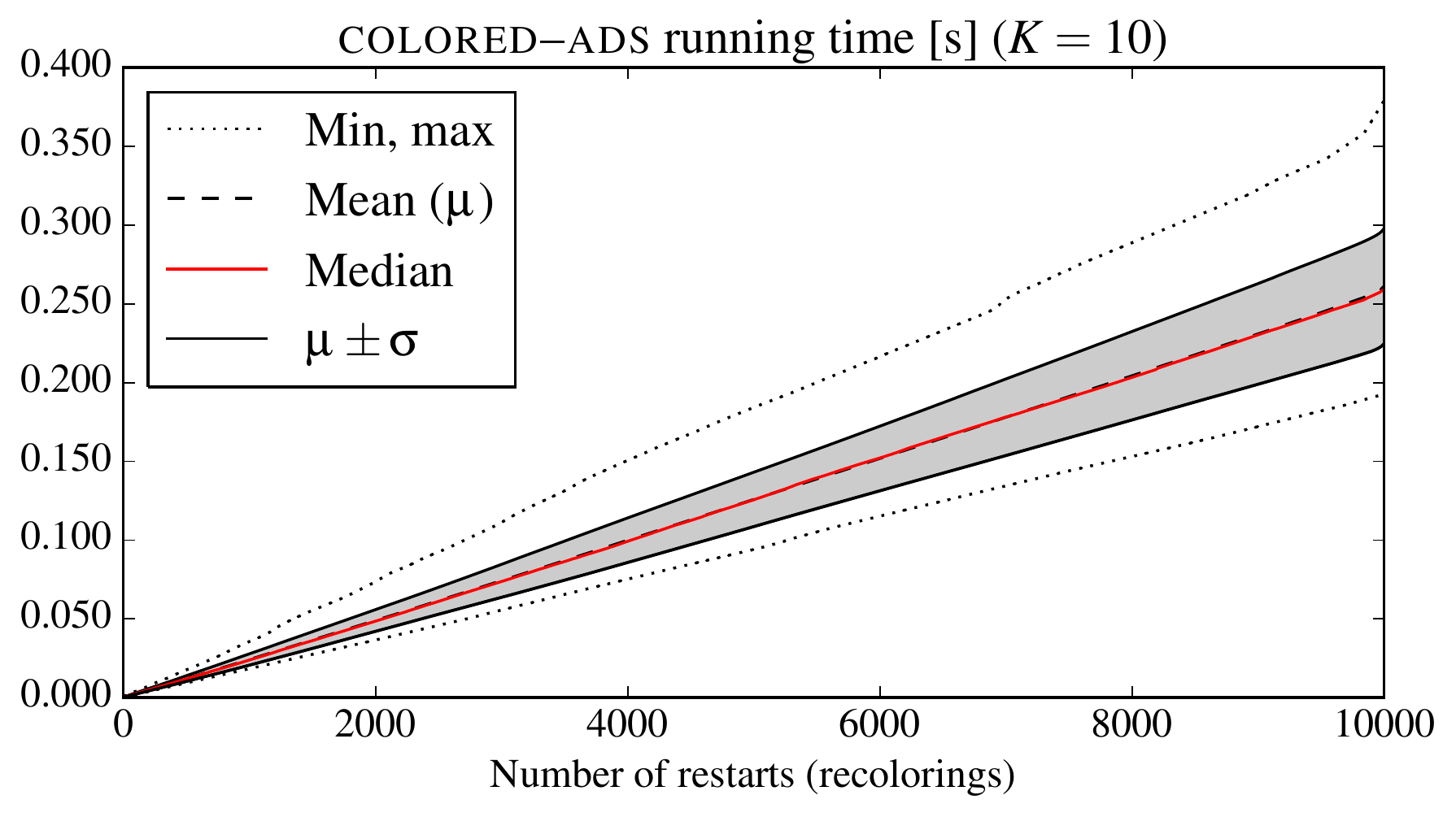}
		}
							
\section{\textsc{sorted--ads} algorithm}
	\subsection{Space--efficient algorithm}
		We modify the idea in \cite{hirschberg1975linear} to find a $O(NK)$ time, $O(N)$ memory implementation of \textsc{sorted-ads}. In particular, we need to show that there exists a $O(NK)$ time and $O(N)$ memory algorithm able to find the least slot $k^*$, respecting Definition~\ref{def:split point}.
	
	\begin{defn}
		\label{def:split point}
	 Slot $k$ is a \emph{split--point} if there exists an optimal allocation such that slots $1, \dots, k-1$ are occupied only by ads $\le N/2$ and slots $k, \dots, K$ are occupied only by ads $> N/2$.
	\end{defn}
	
	Conceptually, we scan all the possible values of $k$, one by one from $1$ up until $K$ until we find the first split--point.
	Once a tentative value for $k^*$ has been chosen, it is enough to test whether the best allocation satisfying the condition in Definition~\ref{def:split point} is an optimal allocation for the original problem. The split--point $k$ breaks the optimal allocation fulfilling the condition into two natural pieces: the first using only ads in $\{1,\dots, N/2\}$ and slots $1,\dots,k-1$, and the second piece using ads in $\{N/2 + 1, \dots, N\}$ and slots $k, \dots, K$. We let the social welfare of the first piece be $W_1$, and that of the second piece be $W_2$; also, we let $C_1$ be the product of the continuation probabilities of the ads in the first piece. The value of the optimal allocation satisfying the split--point condition is therefore
	\[
		\textrm{SW} = W_1 + \Lambda_{k-1}\,C_1\,W_2
	\]
	
	Since by definition the allocation is optimal, the second piece of the allocation must be itself an optimal allocation for the problem having $\mathcal{A} = \{N/2 + 1,\dots, N\}$, $\mathcal{K} = \{k, \dots, K\}$. Notice that we already know how to find $W_2$, for all the possible values of $k$, in $O(NK)$ time and $O(N)$ memory. We can therefore treat $W_2$ as a known constant and focus on the problem of determining the allocation of the first piece maximizing $\text{SW}$. The crucial point is that we can find the maximum of $W_1$ for each $k$ in dynamic programming. Indeed, the objective function is the same we would have in the allocation problem in which we have to allocate ads $\{1, \dots, N/2\} \cup \{X\}$ in slots $1, \dots, k$under the constraints:
		\begin{itemize}[nolistsep]
			\item $\bar{v}_X = W_2$ (notice that $W_2$ is known, while the values $c_X$ and $\lambda_k$ can be disregarded not affecting the objective function).
			\item Ad $X$ must be allocated in slot $k$.
		\end{itemize}

	\subsection{$\frac{1}{2}$--approximation ratio}

\begin{defn}
Given a total order $\prec_{\textrm{ads}}$, we define $\Theta_{\prec_{\textrm{ads}}}$ as the set of allocations $\theta$ where:
\begin{itemize}
\item for every $a,a'\in \theta$, $\emph{slot}_\theta(a)<\emph{slot}_\theta(a')$ only if $a\prec_{\textrm{ads}} a'$ (notice that the constraint is only over the allocated ads, each non--allocated ad may appear everywhere in $\prec_{\textrm{ads}}$);
\item there are no empty slots before an allocated slot.
\end{itemize}
\end{defn}

\noindent Given a $\prec_{\textrm{ads}}$, the set $\Theta_{\prec_{\textrm{ads}}}$ constitutes the range of our allocation algorithm.
\begin{defn}
We define $\theta^*_{\prec_{\textrm{ads}}}$ as the optimal allocation (maximizing the social welfare) among all the allocations in $\Theta_{\prec_{\textrm{ads}}}$.
\end{defn}

\begin{defn}
We define $\theta^*$ as the optimal allocation (maximizing the social welfare) among all the allocations in $\Theta$ and $\boldsymbol\prec^*_{\textrm{ads}}$ as $\boldsymbol\prec^*_{\textrm{ads}} = \{\prec_{\textrm{ads}}: \theta^*_{\prec_{\textrm{ads}}} = \theta^*\}$.
\end{defn}

\begin{defn}
We define $\prec_{\textrm{ads}}^N$ as: for every $a,a' \in A$, $a\prec_{\textrm{ads}}^N a'$ if and only if $\frac{\overline{v}_{a}}{1-c_a}\geq\frac{\overline{v}_{a'}}{1-c_{a'}}$ (ties are broken lexicographically). 
\end{defn}

When $\lambda_k=1$ for every $k$, we have $\prec_{\textrm{ads}}^N \in \boldsymbol\prec_{\textrm{ads}}^*$ for every value of the parameters as shown in~\cite{cascade}. 

\begin{defn}
We define $\prec_{\textrm{ads}}^R$ as the reverse of $\prec_{\textrm{ads}}^N$.
\end{defn}
We introduce a lemma that we use in our main result.

\begin{lem} 
\label{lemma:sumoveri} 
When $\lambda_k=1$ for every $k$, once $A$ is restricted to the ads appearing in $\theta^*$ and the ads are labeled such that $a$ is the $a$--th ad in the allocation, then for every $a$ we have:
\[
\frac{\overline{v}_{a}}{1-c_{a}} \geq \sum\limits_{b=a}^{K}\overline{v}_{b}\prod\limits_{h=i}^{b-1}c_{h}.
\]
\end{lem}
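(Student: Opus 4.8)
The plan is to recognize the right-hand side as the social welfare of the ``tail'' allocation that places ads $a, a+1, \dots, K$ in consecutive slots. Since $\lambda_k = 1$ forces $\Lambda_s = 1$ for every $s$, an ad $b \ge a$ sitting $b-a$ positions behind ad $a$ contributes $\overline{v}_b \prod_{h=a}^{b-1} c_h$; writing $T_a = \sum_{b=a}^{K} \overline{v}_b \prod_{h=a}^{b-1} c_h$, the claim is exactly $\overline{v}_a/(1-c_a) \ge T_a$. The entire argument then rests on a single structural fact about $\theta^*$: that the quantity $\overline{v}_a/(1-c_a)$ is non-increasing along the allocation.

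To establish this monotonicity I would use a local exchange argument exploiting the optimality of $\theta^*$. Swapping the ads in two adjacent slots $a$ and $a+1$ leaves every slot $\ge a+2$ untouched, because the continuation mass reaching slot $a+2$ is (common prefix) $\times\, c_a c_{a+1}$ before the swap and (common prefix) $\times\, c_{a+1} c_a$ after---identical, which is precisely where $\lambda_k = 1$ is needed. Hence optimality of $\theta^*$ forces $\overline{v}_a + c_a \overline{v}_{a+1} \ge \overline{v}_{a+1} + c_{a+1}\overline{v}_a$, which rearranges to $\overline{v}_a(1 - c_{a+1}) \ge \overline{v}_{a+1}(1 - c_a)$, i.e. $\overline{v}_a/(1-c_a) \ge \overline{v}_{a+1}/(1-c_{a+1})$. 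Chaining these inequalities yields $\overline{v}_b/(1-c_b) \le \overline{v}_a/(1-c_a)$ for every $b \ge a$. Equivalently, one may simply invoke the fact, recalled just above from \cite{cascade}, that $\prec_{\textrm{ads}}^N$ is an optimal order when all $\lambda_k = 1$.

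With the monotonicity in hand the estimate is a short telescoping computation. For each $b \ge a$ I bound $\overline{v}_b \le \frac{\overline{v}_a}{1-c_a}(1 - c_b)$, and writing $P_b = \prod_{h=a}^{b-1} c_h$ (so that $P_a = 1$ and $P_{b+1} = c_b P_b$) I observe that $(1-c_b)P_b = P_b - P_{b+1}$. Summing over $b$ from $a$ to $K$ telescopes to $P_a - P_{K+1} = 1 - \prod_{h=a}^{K} c_h \le 1$. Therefore
\[
T_a \;\le\; \frac{\overline{v}_a}{1-c_a}\sum_{b=a}^{K}(1-c_b)\prod_{h=a}^{b-1}c_h \;=\; \frac{\overline{v}_a}{1-c_a}\Bigl(1 - \prod_{h=a}^{K}c_h\Bigr) \;\le\; \frac{\overline{v}_a}{1-c_a},
\]
which is the desired inequality.

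The step I expect to be the most delicate is the monotonicity of $\overline{v}_a/(1-c_a)$, both in justifying that the adjacent swap leaves the remainder of the allocation unchanged (the role of $\lambda_k = 1$) and in handling the degenerate case $c_a = 1$: there the ratio is $+\infty$ and the claimed inequality holds trivially because $T_a$ is finite, so one restricts the quantitative argument to ads with $c_a < 1$, for which dividing by $(1-c_a)(1-c_{a+1}) > 0$ in the exchange step is legitimate.
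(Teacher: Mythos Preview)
Your proof is correct and rests on the same key fact as the paper's: the monotonicity of $\overline{v}_a/(1-c_a)$ along the optimal allocation when all $\lambda_k=1$ (which the paper takes from the $\prec_{\textrm{ads}}^N$ ordering of \cite{cascade}, exactly as you do). The only difference is packaging: the paper runs a backward induction on $a$, using at each step the adjacent inequality $\overline{v}_a/(1-c_a)\ge \overline{v}_{a+1}/(1-c_{a+1})$ together with the inductive hypothesis $\overline{v}_{a+1}/(1-c_{a+1})\ge T_{a+1}$ and the recursion $T_a=\overline{v}_a+c_aT_{a+1}$, whereas you bound every term at once and telescope---unrolling the paper's induction yields precisely your sum $\sum_b(1-c_b)\prod_{h=a}^{b-1}c_h=1-\prod_{h=a}^{K}c_h$.
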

\begin{proof}
The proof is by mathematical induction.

\textbf{Basis of the induction}. The basis of the induction is for slots $k$ and $k-1$. We need to prove that:

\begin{equation}\label{eq:inductionbase}
\dfrac{\overline{v}_{k-1}}{1-c_{k-1}} \geq \overline{v}_{k-1} + c_{k-1}\overline{v}_{k}.
\end{equation}

\noindent Equation~(\ref{eq:inductionbase}) can be rewritten as:

\[
\dfrac{\overline{v}_{k-1}}{1-c_{k-1}} \geq \overline{v}_{k},
\]

\noindent that holds since
\[
\frac{\overline{v}_{k-1}}{1-c_{k-1}} \geq \frac{\overline{v}_{k}}{1-c_{k}}
\]
(by definition of $\prec_{\textrm{ads}}^N$) and
\[
\frac{\overline{v}_{{k}}}{1-c_{k}} \geq \overline{v}_{{k}}
\]
(by $c_{k}\in [0,1]$).

\textbf{Inductive step}. We assume that following inequality holds

\[
\dfrac{\overline{v}_{i+1}}{1-c_{i+1}} \geq \sum\limits_{j=i+1}^{k}\overline{v}_{j}\prod\limits_{h=i+1}^{j-1}c_{h},
\]

\noindent and we want to prove

\begin{equation}\label{eq:induction2}
\dfrac{\overline{v}_{a_i}}{1-c_{a_i}} \geq \sum\limits_{j=i}^{k}\overline{v}_{a_j}\prod\limits_{h=i}^{j-1}c_{a_h}.
\end{equation}

Equation~(\ref{eq:induction2}) can be written as:

\[
\dfrac{\overline{v}_{{i}}}{1-c_{{i}}} \geq \sum\limits_{j=i+1}^{k}\overline{v}_{j}\prod\limits_{h=i+1}^{j-1}c_{h},
\]

\noindent that holds since
\[
\frac{\overline{v}_{{i}}}{1-c_{{i}}} \geq \frac{\overline{v}_{i+1}}{1-c_{i+1}}
\] (by definition of $\prec_{\textrm{ads}}^N$) and 
\[\frac{\overline{v}_{i+1}}{1-c_{i+1}}\geq \sum\limits_{j=i+1}^{k}\overline{v}_{j}\prod\limits_{h=i+1}^{j-1}c_{h}\] (by assumption of the inductive step). This completes the proof.
\end{proof}

We use the above lemma to derive the bound.

\begin{thm}\label{thm:lambda1bound0.5}
When $\lambda_{k} = 1$ for every $k$, we have \[
r=\frac{\textrm{SW}(\theta^*_{\prec_{\textrm{ads}}})}{\textrm{SW}(\theta^*)}\geq \frac{1}{2}
\] for every $\prec_{\textrm{ads}}$.
\end{thm}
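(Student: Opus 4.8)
The plan is to establish the equivalent existence statement: for any total order $\prec_{\textrm{ads}}$ there is an allocation in $\Theta_{\prec_{\textrm{ads}}}$ whose social welfare is at least $\frac12\textrm{SW}(\theta^*)$; since $\theta^*_{\prec_{\textrm{ads}}}$ is by definition the best element of $\Theta_{\prec_{\textrm{ads}}}$, this immediately yields $r\ge\frac12$. Because enlarging the pool of available ads can only increase $\textrm{SW}(\theta^*_{\prec_{\textrm{ads}}})$, I would first restrict attention to the ads actually used by $\theta^*$, labelled $1,\dots,K$ in allocation order, so that $\textrm{SW}(\theta^*)=S_1$, where $S_a:=\sum_{b=a}^{K}\bar v_b\prod_{h=a}^{b-1}c_h$ is the tail value of $\theta^*$ from position $a$ onward. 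Lemma~\ref{lemma:sumoveri} supplies the workhorse inequality $\bar v_a\ge(1-c_a)S_a$ for every $a$, which together with the recursion $S_a=\bar v_a+c_aS_{a+1}$ also gives the monotonicity $S_1\ge S_2\ge\cdots\ge S_K$ that I expect to use repeatedly.

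The core of the argument is an inductive peeling along $\prec_{\textrm{ads}}$. Let $p$ be the $\prec_{\textrm{ads}}$-minimal ad. Every allocation in $\Theta_{\prec_{\textrm{ads}}}$ either omits $p$, or places $p$ in the first slot followed (since $\lambda_k\equiv1$ makes the remaining slots a fresh $(K-1)$-slot instance, with only the $c_p$ decay carried over) by a best order-respecting allocation of the remaining ads. Writing $G$ for the optimal order-respecting value of the reduced instance $\{1,\dots,K\}\setminus\{p\}$, this gives the exact recursion $\textrm{SW}(\theta^*_{\prec_{\textrm{ads}}})=\max\{G,\ \bar v_p+c_pG\}$; the crucial feature is that leading with $p$ is always feasible because $p$ is $\prec_{\textrm{ads}}$-first. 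I would couple this with the fact that deleting $p$ from $\theta^*$ produces a new optimum $\textrm{OPT}_{-p}$ with $S_1-w_p\le\textrm{OPT}_{-p}\le S_1$, where $w_p=\bar v_p\prod_{h<p}c_h$ is the contribution of $p$ in $\theta^*$ (the upper bound being exactly where $\bar v_p\ge(1-c_p)S_p$ is invoked), and apply the induction hypothesis $G\ge\frac12\textrm{OPT}_{-p}$ to the smaller instance.

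The main obstacle is closing the recursion at the tight constant $\frac12$. Feeding the bare hypothesis $G\ge\frac12\textrm{OPT}_{-p}$ into $\max\{G,\bar v_p+c_pG\}$ is too lossy precisely when the $\prec_{\textrm{ads}}$-first ad carries a large continuation probability and the dominant ad of $\theta^*$ sits late in the order: the discard branch then falls a full $\frac12 w_p$ short of $\frac12 S_1$, while the lead branch would require $2\bar v_p-c_pw_p\ge(1-c_p)S_1$, which does not follow from $\bar v_p\ge(1-c_p)S_p$ alone because $S_p\le S_1$. I expect to resolve this with a case split on the continuation probability $c_1$ of the dominant ad: when $c_1\le\frac12$ the single-ad allocation already gives $\bar v_1\ge(1-c_1)S_1\ge\frac12 S_1$, and when $c_1>\frac12$ I would strengthen the inductive invariant to also record the solo value of the dominant ad, so that the lead branch can recover the missing $\frac12 w_p$. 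The extremal instance with two ads, $c_1\to1$, $c_2\to0$, and $\prec_{\textrm{ads}}$ the reverse of the natural order — where $\textrm{SW}(\theta^*)\to 2\bar v_1$ yet every order-respecting allocation yields at most $\approx\bar v_1$ — shows that no constant beyond $\frac12$ is attainable and pins down exactly which inequalities the induction must hold with equality.
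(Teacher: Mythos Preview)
Your approach diverges from the paper's in a structural way. The paper first argues that the worst order is the reverse $\prec_{\textrm{ads}}^R$ of the natural order (by the swap inequality from~\cite{cascade}), so it suffices to bound $\textrm{SW}(\theta^*_{\prec_{\textrm{ads}}^R})/\textrm{SW}(\theta^*)$; it then lower-bounds the numerator by the best among the concrete candidate allocations $\langle z,z-1,\dots,1\rangle$ and minimizes the resulting ratio directly, case-splitting on $z$ and repeatedly substituting the binding constraint $\bar v_h=(1-c_h)\sum_{i<h}\bar v_i\prod_{j>i}c_j$. Your induction on the ad set, peeling the $\prec_{\textrm{ads}}$-minimal ad, is a different decomposition; it is appealing because it treats all orders uniformly rather than reducing to one, but it forfeits exactly the leverage the paper exploits, namely knowing \emph{which} ad sits first.

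There is a genuine gap, and you have identified it yourself: the bare hypothesis $G\ge\tfrac12\textrm{OPT}_{-p}$ fed into $\max\{G,\bar v_p+c_pG\}$ does not close at $\tfrac12$. Your proposed remedy---split on $c_1$ and, when $c_1>\tfrac12$, ``strengthen the inductive invariant to also record the solo value of the dominant ad''---is not a proof as stated. The solo bound $G\ge\bar v_1$ is already available for free (the singleton allocation is feasible), so recording it adds nothing; what the lead branch actually needs is a multiplicative improvement of the form $2c_pG\ge c_pS_1 - (\text{term absorbable by }2\bar v_p)$, and you have not exhibited an invariant that delivers this when $p\ne 1$ and $c_1>\tfrac12$. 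Concretely, your Case~2 inequality becomes $(2-c_pC_p)\bar v_p\ge(1-c_p)\sum_{i\le p}w_i$, which does not follow from $\bar v_p\ge(1-c_p)S_p$ or from the natural-order inequalities alone. (Incidentally, the upper bound $\textrm{OPT}_{-p}\le S_1$ is trivial and does not use Lemma~\ref{lemma:sumoveri}; it is the \emph{lower} bound on the value after deletion that involves the structure.) To make your route work you would either need to specify and verify a strictly stronger inductive statement, or---much simpler---first reduce to $\prec_{\textrm{ads}}^R$ as the paper does, after which the $\prec$-minimal ad is always the natural-order-last one and your recursion collapses into exactly the paper's case analysis on $z$.
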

\begin{proof}
For the sake of presentation, we split the proof in several parts.

\textbf{Ads ranking}. Among all the possible $\prec_{\textrm{ads}}$, we can safely focus only on $\prec_{\textrm{ads}}^R$ given that this order is the worst for our problem. This property follows from the results discussed in~\cite{cascade}. More precisely, the authors show that, given any allocation $\theta$, for every $a,a'$ if $\textrm{slot}_\theta(a)<\textrm{slot}_\theta(a')$ and $\frac{\overline{v}_{a}}{1-c_{a}}>\frac{\overline{v}_{a'}}{1-c_{a'}}$ then switching $a$ with $a'$ never reduces the value of the allocation. Therefore, the social welfare with $\prec_{\textrm{ads}}^R$ is the minimum w.r.t. that one with all the $\prec_{\textrm{ads}}$. (Interestingly, when $\prec_{\textrm{ads}} \not \in \boldsymbol\prec_{\textrm{ads}}^*$, $\theta^*_{\prec_{\textrm{ads}}}$ may prescribe that the number of allocated ads is strictly smaller than the number of ads allocated in $\theta^*$.)

\textbf{Allocations set}. Given $\prec_{\textrm{ads}}^R$, we use a lower bound over the value $\theta^*_{\prec_{\textrm{ads}}^R}$ focusing only on a specific subset of allocations defined below. This subset may not contain the optimal allocation $\theta^*_{\prec_{\textrm{ads}}^R}$ and therefore  we obtain a lower bound for $r$ that may be not tight. However, in the following, we complete the result showing that the bound is tight. The allocations we focus on are of the form: $\langle z, {z-1}, \ldots , 1\rangle$ with $z\in\{1,\ldots, K\}$ where the ads are indexed on the basis on $\prec_{\textrm{ads}}^N$ (i.e., $1 \prec_{\textrm{ads}}^N 2 \prec_{\textrm{ads}}^N \ldots$). We study the value of $r$ for each possible value of $z$ under the constraints that, given $z$, it holds:
\begin{align}
\hspace{-0.15cm}\sum\limits_{i=1}^{z}\left(\overline{v}_{i}\prod\limits_{j=i+1}^{z}c_{j}\right) 	&	\geq \sum\limits_{i=1}^{z+1}\left(\overline{v}_{i}\prod\limits_{j=i+1}^{z+1}c_{j}\right)								\label{eq:condizionikplus1} \\
\vspace{-0.15cm}\sum\limits_{i=1}^{z}\left(\overline{v}_{i}\prod\limits_{j=i+1}^{z}c_{j}\right) 	&	\geq \sum\limits_{i=1}^{h}\left(\overline{v}_{i}\prod\limits_{j=i+1}^{h}c_{j}\right) 			& 	\forall h:1\leq h < z		\label{eq:condizionikminus}
\end{align}
\noindent with the exception that when $z=k$, only Constraints~(\ref{eq:condizionikminus}) are required. We notice that, for every auction instance, there always exists a value of $z$ such that the above constraints hold. This value can be found by starting from $z=1$ and increasing it until the constraints hold. 

\textbf{Ratio}. Once $z$ is fixed, a lower bound over $r$ can be obtained by using the following upper bound over the value of $\theta^*$: we compute $\theta^*$ as if $k=n$. Thus, the ratio $r$ is:
\[
r \geq \dfrac{\sum\limits_{i=1}^{z}\left(\overline{v}_{i}\prod\limits_{j=i+1}^{z}c_{j}\right)}{\sum\limits_{i=1}^{n}\left(\overline{v}_{i}\prod\limits_{j=1}^{i-1}c_{j}\right)}.
\]

\textbf{Bound with $z=1$}. In this case, we have:
\begin{align}
z							&	=	1																			\nonumber				\\
\overline{v}_{1}					&	\geq 	\overline{v}_{2}+c_2 \overline{v}_{1}													\label{eq:casok1}	\\			
r 							&	\geq \dfrac{\overline{v}_{1}}{\sum\limits_{i=1}^{n}\left(\overline{v}_{i}\prod\limits_{j=1}^{i-1}c_{j}\right)}		\label{eq:ratiokuguale1}
\end{align}
\noindent by applying Lemma~\ref{lemma:sumoveri}, we can write $r$ as:
\[
r\geq \dfrac{\overline{v}_{1}}{\sum\limits_{i=1}^{n}\left(\overline{v}_{i}\prod\limits_{j=1}^{i-1}c_{j}\right)} \geq \dfrac{\overline{v}_{1}}{\overline{v}_{1}+c_{1} \dfrac{\overline{v}_{2}}{1-c_{2}}},
\]
\noindent by using Equation~(\ref{eq:casok1}) we have:
\[
r \geq  \dfrac{\overline{v}_{1}}{\overline{v}_{1}+c_{1} \dfrac{\overline{v}_{2}}{1-c_{2}}}\geq \dfrac{\overline{v}_{1}}{\overline{v}_{1}+c_{1}\overline{v}_{1}}.
\]
\noindent The minimization of  $\frac{\overline{v}_{1}}{\overline{v}_{1}+c_{1}\overline{v}_{1}}$ can be achieved by maximizing $c_{1}$ (i.e., when $c_{1}=1$), obtaining $r\geq \frac{1}{2}$.

\textbf{Bound with $k>z>1$}. In this case, we can write $r$ as:
\begin{align*}
r&	\geq \dfrac{\sum\limits_{i=1}^{z}\left(\overline{v}_{i}\prod\limits_{j=i+1}^{z}c_{j}\right)}{\sum\limits_{i=1}^{n}\left(\overline{v}_{i}\prod\limits_{j=1}^{i-1}c_{j}\right)} 	\\																
 &	= \dfrac{\sum\limits_{i=1}^{z}\left(\overline{v}_{i}\prod\limits_{j=i+1}^{z}c_{j}\right)}{\sum\limits_{i=1}^z\left(\overline{v}_{i}\prod\limits_{j=1}^{i-1}c_{j}\right)+ \prod\limits_{j=1}^{z}c_{j}\sum\limits_{i=z+1}^{n}\left(\overline{v}_{i}\prod\limits_{j=z+1}^{i-1}c_{j}\right)}, 
\end{align*}
\noindent by applying Lemma~\ref{lemma:sumoveri}, we have:
\begin{align*}
r&\geq \dfrac{\sum\limits_{i=1}^{z}\left(\overline{v}_{i}\prod\limits_{j=i+1}^{z}c_{j}\right)}{\sum\limits_{i=1}^{z}\left(\overline{v}_{i}\prod\limits_{j=1}^{i-1}c_{j}\right)+ \prod\limits_{j=1}^{z}c_{j}\sum\limits_{i=z+1}^{n}\left(\overline{v}_{i}\prod\limits_{j=z+1}^{i-1}c_{j}\right)} \\
& \geq \dfrac{\sum\limits_{i=1}^{z}\left(\overline{v}_{i}\prod\limits_{j=i+1}^{z}c_{j}\right)}{\sum\limits_{i=1}^{z}\left(\overline{v}_{i}\prod\limits_{j=1}^{i-1}c_{j}\right)+ \prod\limits_{j=1}^{z}c_{j}\dfrac{\overline{v}_{z+1}}{1-c_{{z+1}}}},							
\end{align*}
\noindent by using Equation~(\ref{eq:condizionikplus1}) we have:
\begin{align*}
r&\geq \dfrac{\sum\limits_{i=1}^{z}\left(\overline{v}_{i}\prod\limits_{j=i+1}^{z}c_{j}\right)}{\sum\limits_{i=1}^{z}\left(\overline{v}_{i}\prod\limits_{j=1}^{i-1}c_{j}\right)+ \prod\limits_{j=1}^{z}c_{j}\dfrac{\overline{v}_{{z+1}}}{1-c_{{z+1}}}}						\\		&\geq \dfrac{\sum\limits_{i=1}^{z}\left(\overline{v}_{i}\prod\limits_{j=i+1}^{z}c_{j}\right)}{\sum\limits_{i=1}^{z}\left(\overline{v}_{i}\prod\limits_{j=1}^{i-1}c_{j}\right)+ \prod\limits_{j=1}^{z}c_{j} \sum\limits_{i=1}^{z}\left(\overline{v}_{i}\prod\limits_{j=i+1}^{z}c_{j}\right)}.
\end{align*}
\noindent We derive $r$ w.r.t. $\overline{v}_{z}$ and we show that the derivative is always positive:
\begin{align*}
\dfrac{\partial r}{\partial \overline{v}_{z}}&= 
\dfrac{\sum\limits_{i=1}^{z}\left(\overline{v}_{i}\prod\limits_{j=1}^{i-1}c_{j}\right) - \prod\limits_{j=1}^{z-1}c_{j}\sum\limits_{i=1}^{z}\left(\overline{v}_{i}\prod\limits_{j=i+1}^{z}c_{j}\right) }{\left(\sum\limits_{i=1}^{z}\left(\overline{v}_{i}\prod\limits_{j=1}^{i-1}c_{j}\right)+ \prod\limits_{j=1}^{z}c_{j} \sum\limits_{i=1}^{z}\left(\overline{v}_{i}\prod\limits_{j=i+1}^{z}c_{j}\right)\right)^2}\\
&=\dfrac{\sum\limits_{i=1}^{z}\left(\overline{v}_{i} \prod\limits_{j=1}^{i-1}c_{j} \left( 1 - \prod\limits_{j=i}^{z-1}c_{j} \prod\limits_{j=i+1}^{z}c_{j}\right) \right)}{\left(\sum\limits_{i=1}^{z}\left(\overline{v}_{i}\prod\limits_{j=1}^{i-1}c_{j}\right)+ \prod\limits_{j=1}^{z}c_j \sum\limits_{i=1}^{z}\left(\overline{v}_{i}\prod\limits_{j=i+1}^{z}c_{j}\right)\right)^2}\\
&\geq 0,
\end{align*}
\noindent given that $\left( 1 - \prod\limits_{j=i}^{z-1}c_{j} \prod\limits_{j=i+1}^{z}c_{j}\right)\geq 0$. Thus, we minimize $r$ by minimizing $\overline{v}_{z}$. Due to Constraint~(\ref{eq:condizionikminus}) with $h=z-1$, the minimum feasible value of $\overline{v}_{z}$ is:
\[
\overline{v}_{z} = (1-c_{z})\left(\sum\limits_{i=1}^{z-1}\left(\overline{v}_{i}\prod\limits_{j=i+1}^{z-1}c_{j}\right)\right),
\]
\noindent substituting $\overline{v}_{z}$ in $r$ we have:
\begin{align*}
r&\geq 
\dfrac{\sum\limits_{i=1}^{z-1}\left(\overline{v}_{i}\prod\limits_{j=i+1}^{z-1}c_{j}\right)}{\sum\limits_{i=1}^{z-1}\left(\overline{v}_{i}\prod\limits_{j=1}^{i-1}c_{j}\right)+ \prod\limits_{j=1}^{z-1}c_{j} \sum\limits_{i=1}^{z-1}\left(\overline{v}_{i}\prod\limits_{j=i+1}^{z-1}c_{j}\right)}.
\end{align*}
\noindent By applying iteratively the above steps for $\overline{v}_{h}$ with $h$ from $h=z-2$ to $h=2$, we obtain the same bound of the case with $z=1$:
\[
r \geq \dfrac
{
\overline{v}_{1}
}
{
\overline{v}_{1} + c_{1} \overline{v}_{1}
}\geq \frac{1}{2}.
\]

\textbf{Bound with $z=k$}. Differently from the previous cases, here we do not use an upper bound over $\theta^*$, but the exact value, and for the lower bound over $\theta^*_{\prec_{\textrm{ads}}^R}$ we consider only the ads appearing in $\theta^*$. In this case, we have:
\[
r \geq \dfrac{\sum\limits_{i=1}^{k}\left(\overline{v}_{i}\prod\limits_{j=i+1}^{k}c_{j}\right)}{\sum\limits_{i=1}^{k}\left(\overline{v}_{i}\prod\limits_{j=1}^{i-1}c_{j}\right)},
\]
\noindent we derive $r$ w.r.t. $\overline{v}_{k}$, obtaining: 
\begin{align*}
\dfrac{\partial r}{\partial \overline{v}_{k}}			&		=			 
\dfrac{\sum\limits_{i=1}^{k}\left(\overline{v}_{i} \prod\limits_{j=1}^{i-1}c_{j} \left( 1 - \prod\limits_{j=i}^{k-1}c_{j} \prod\limits_{j=i+1}^{k}c_{j}\right) \right)}{\left(\sum\limits_{i=1}^{k}\left(\overline{v}_{i}\prod\limits_{j=1}^{i-1}c_{j}\right)\right)^2}\\
&\geq 0,
\end{align*}
\noindent we minimize $r$ by minimizing $\overline{v}_{k}$ as:
\[
\overline{v}_{k} = (1-c_{k})\left(\sum\limits_{i=1}^{k-1}\left(\overline{v}_{i}\prod\limits_{j=i+1}^{k-1}c_{j}\right)\right),
\]
\noindent obtaining:
\[
r \geq \dfrac{\sum\limits_{i=1}^{k-1}\left(\overline{v}_{i}\prod\limits_{j=i+1}^{k-1}c_{j}\right)}{\sum\limits_{i=1}^{k-1}\left(\overline{v}_{i}\prod\limits_{j=1}^{i-1}c_{j} \left( 1  + (1-c_{k})c_{i}\prod\limits_{j=i+1}^{k-1}c_{j}^2\right)\right)}.
\]
\noindent By applying iteratively the above steps for $\overline{v}_{h}$ with $h$ from $h=k-2$ to $h=2$---the derivative is always positive---, we obtain:
\[
r \geq \dfrac{\overline{v}_{1}}{\overline{v}_{1}+ \overline{v}_{1} c_{1} \left(1-\prod\limits_{i=2}^k c_{j}\right)} = \dfrac{1}{1+ c_{1} \left(1-\prod\limits_{i=2}^k c_{j}\right)},
\]
\noindent $r$ is minimized when $c_{1}=1$ and $\prod\limits_{i=2}^k c_{j} = 0$, obtaining \[r\geq \frac{1}{2}.\]
\end{proof}
		
The extension of the above result to the case in which $\lambda_k=\lambda$ for every $k$ is straightforward. Indeed, it is sufficient to observe that in this case we can redefine the continuation probabilities as $c_i'=c_i\lambda$ for every $i$ and subsequently redefine $\lambda_k=1$ for every $k$.
	\vfill
			
\newpage
	\subsection{Time--approximation tradeoffs}
		\begin{tabular}{c}
			\includegraphics[width=1\linewidth]{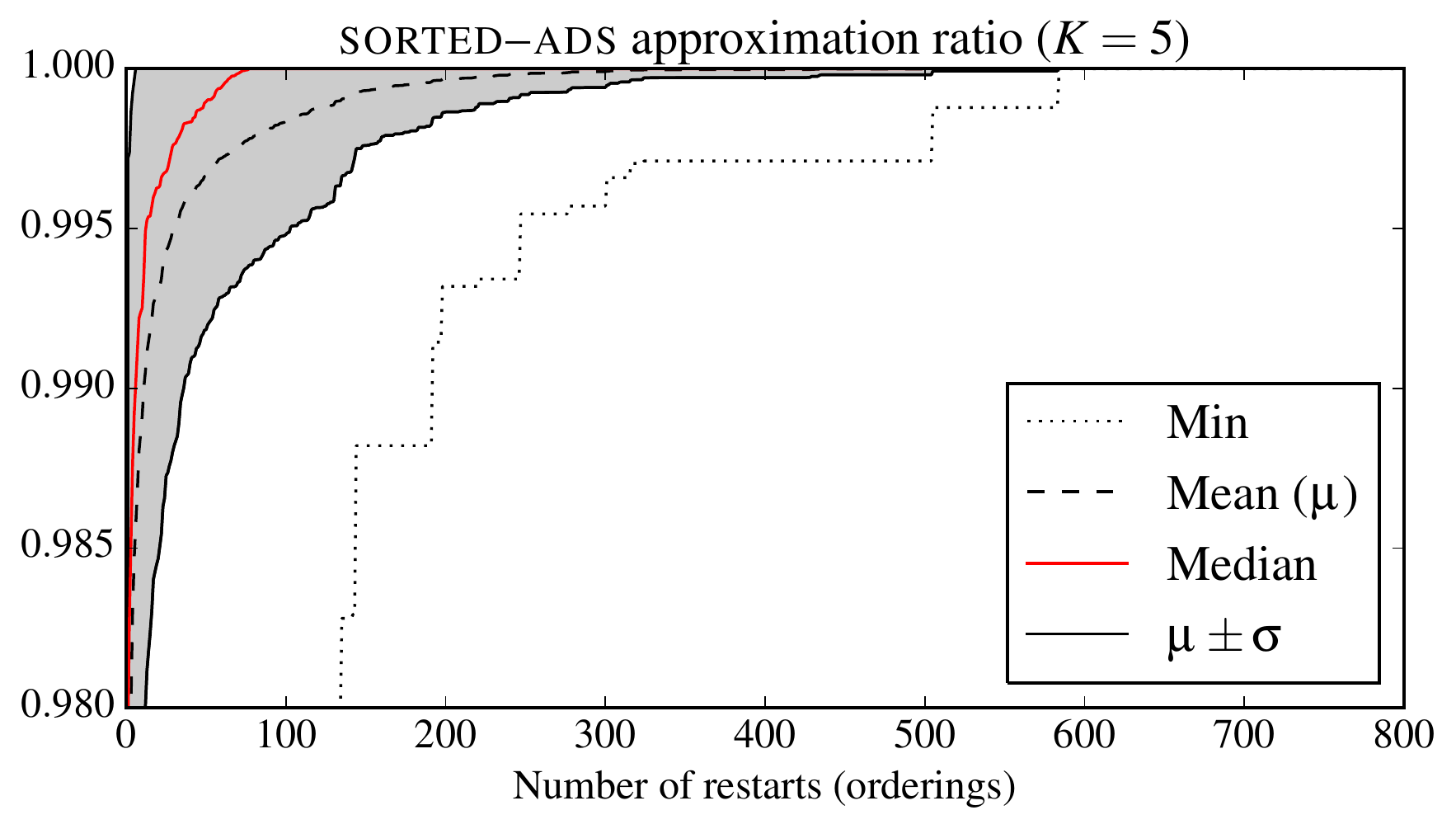}\\
			\includegraphics[width=1\linewidth]{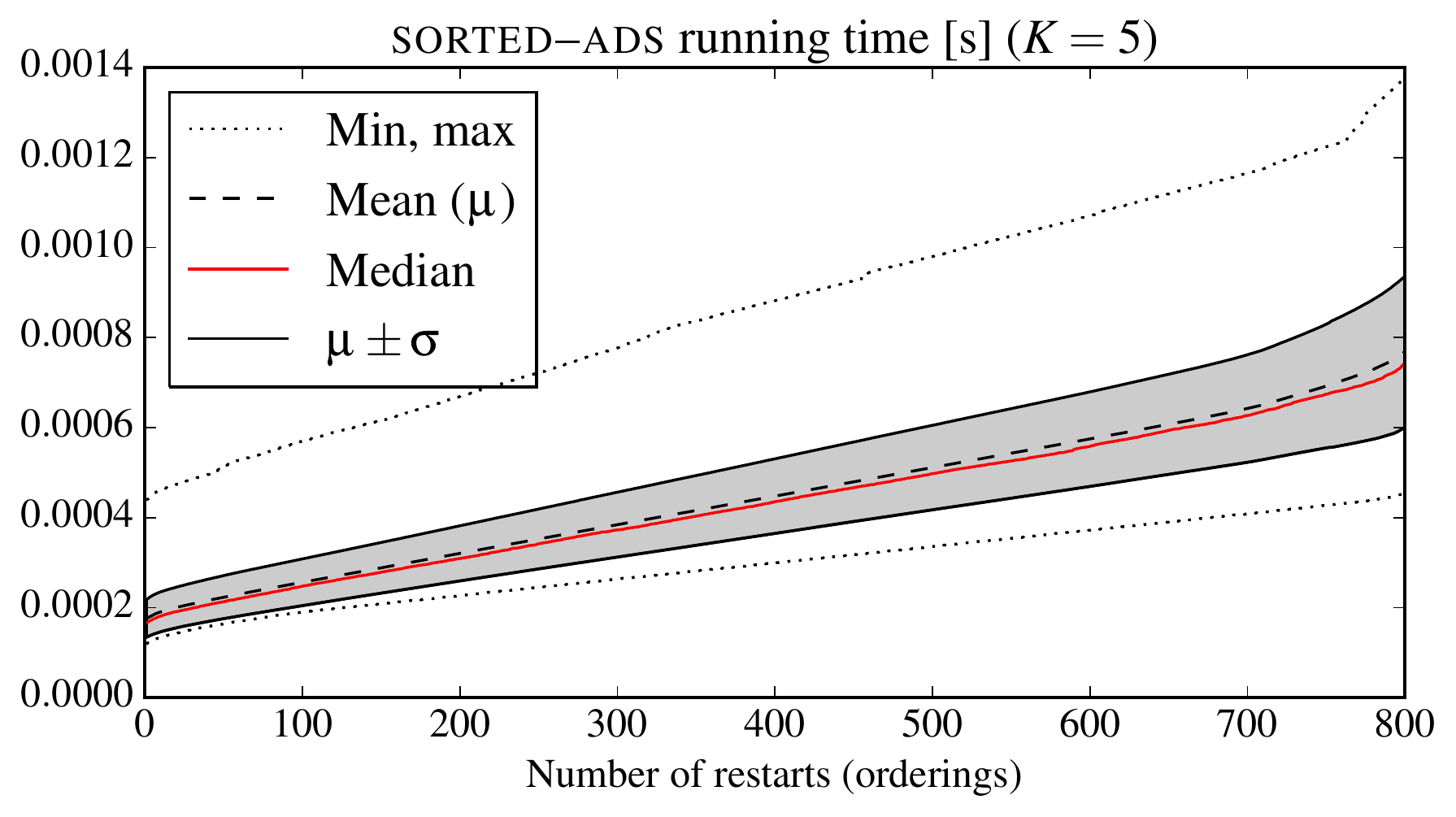}\\
			\includegraphics[width=1\linewidth]{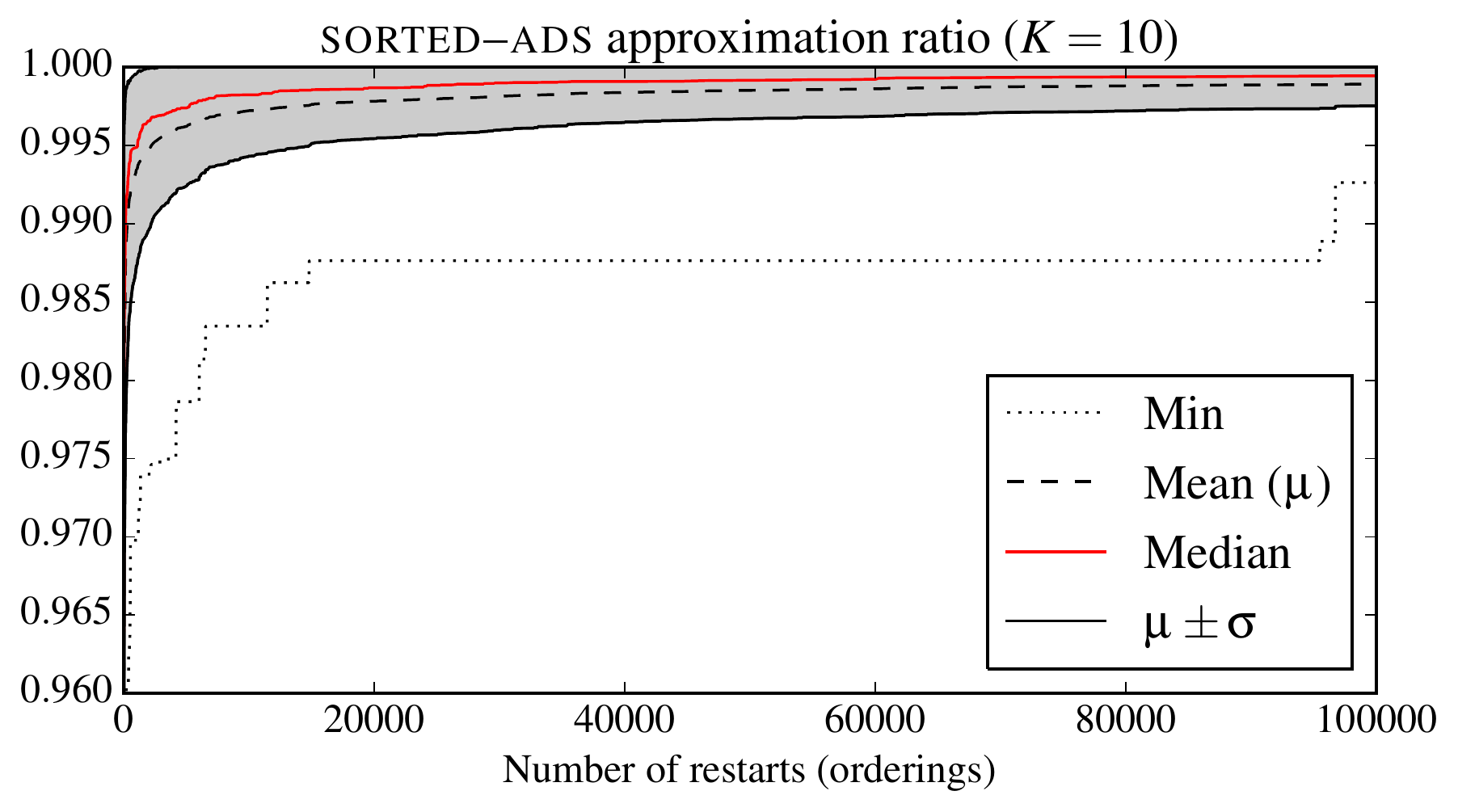}\\
			\includegraphics[width=1\linewidth]{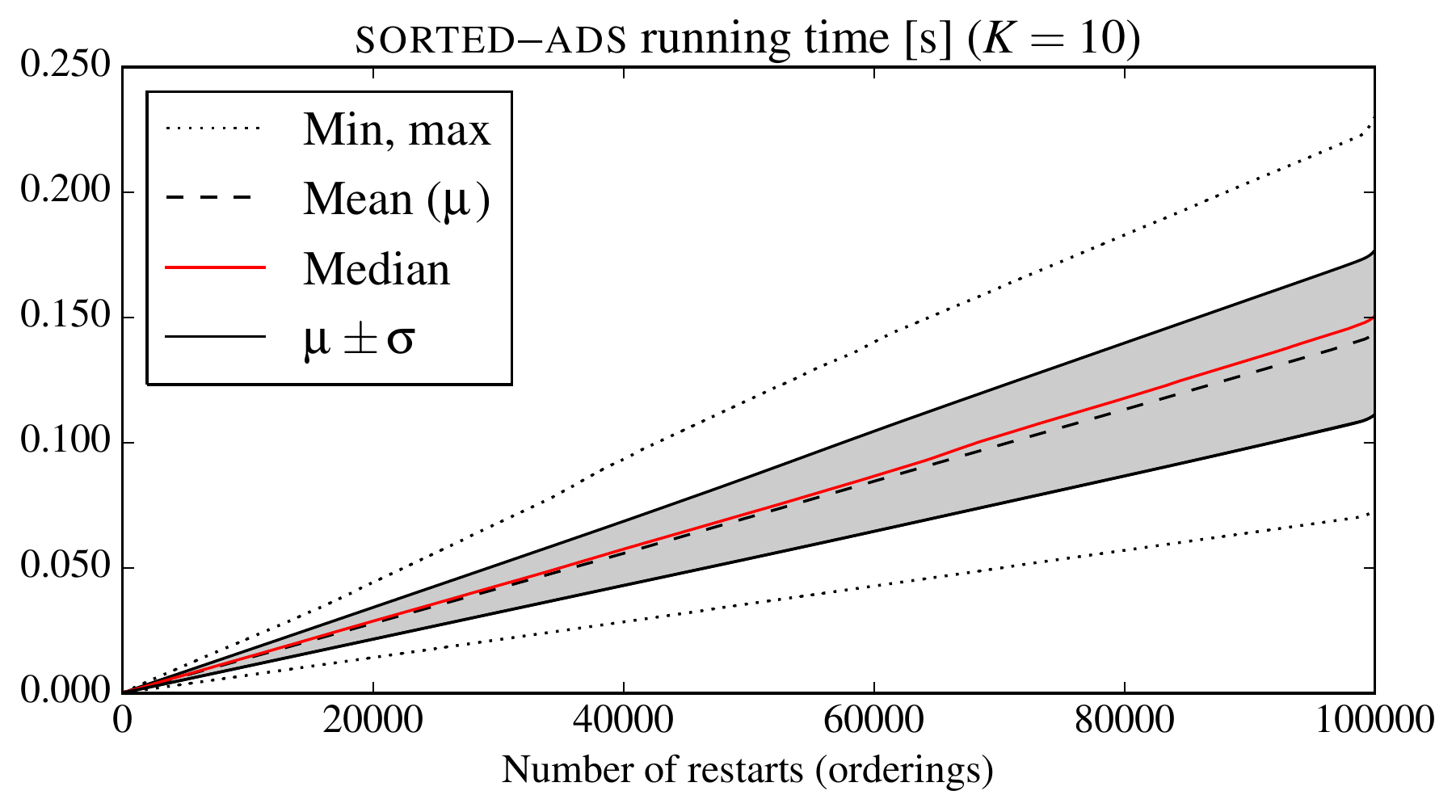}
		\end{tabular}
	
\newpage
\section{Experimental evaluation}
	We prove the statistical significance of the experimental results in the following boxplots. We follow the structure of the paper.
	\subsection{\textsc{dominated--ads} algorithm}
		\begin{tabular}{c}
			\noindent\includegraphics[width=1\linewidth]{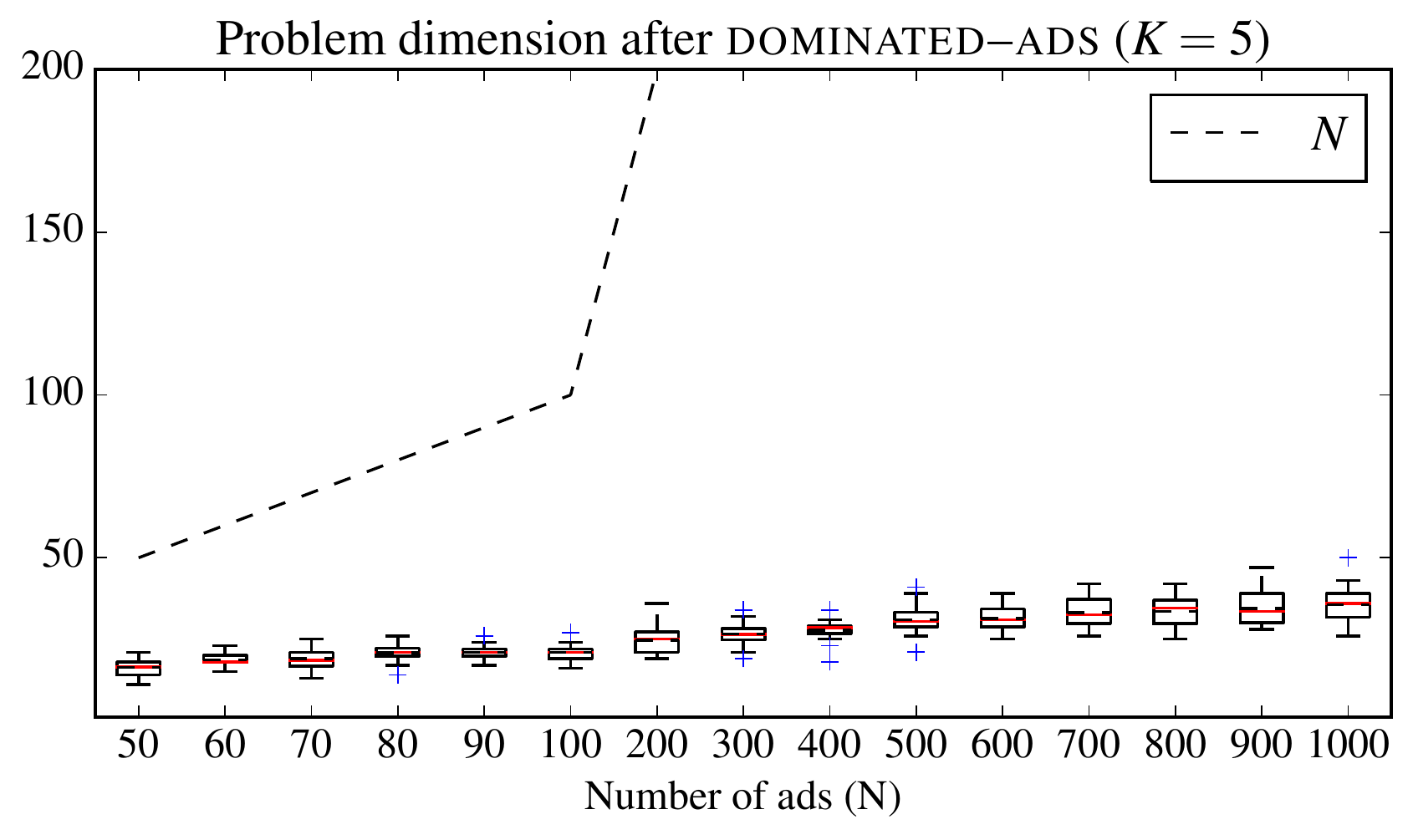}\\
			\noindent\includegraphics[width=1\linewidth]{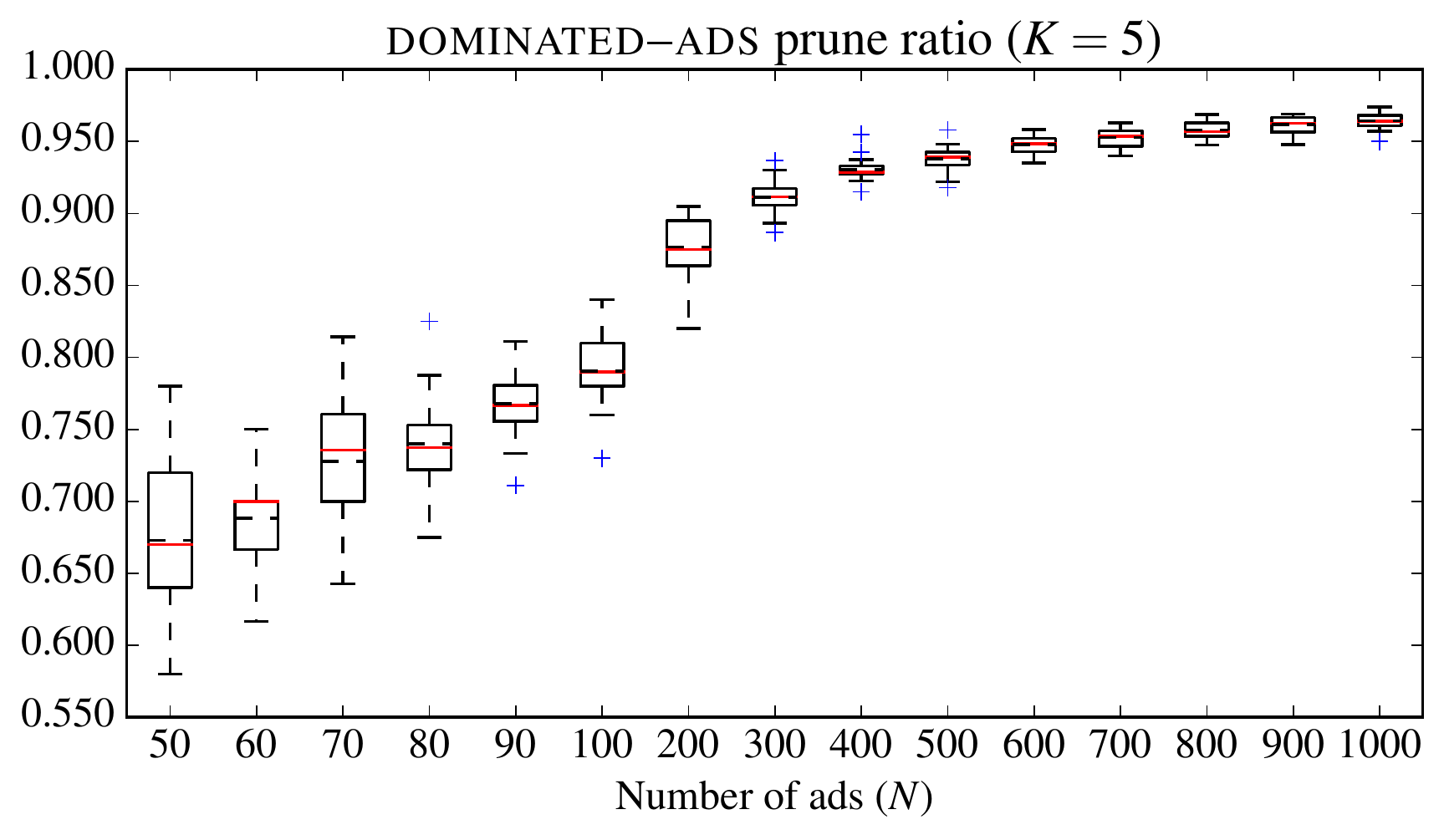}\\
			\noindent\includegraphics[width=1\linewidth]{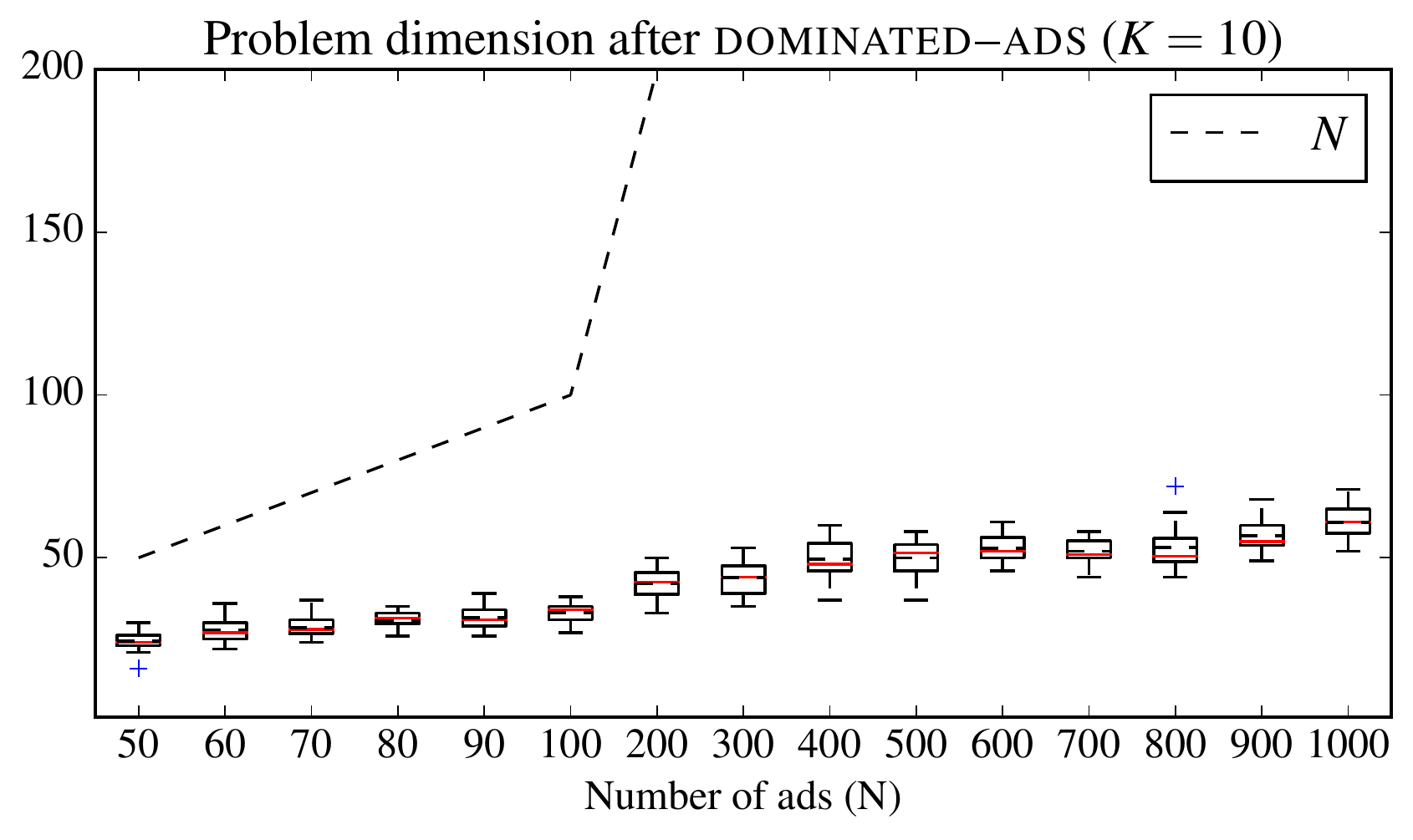}\\
			\noindent\includegraphics[width=1\linewidth]{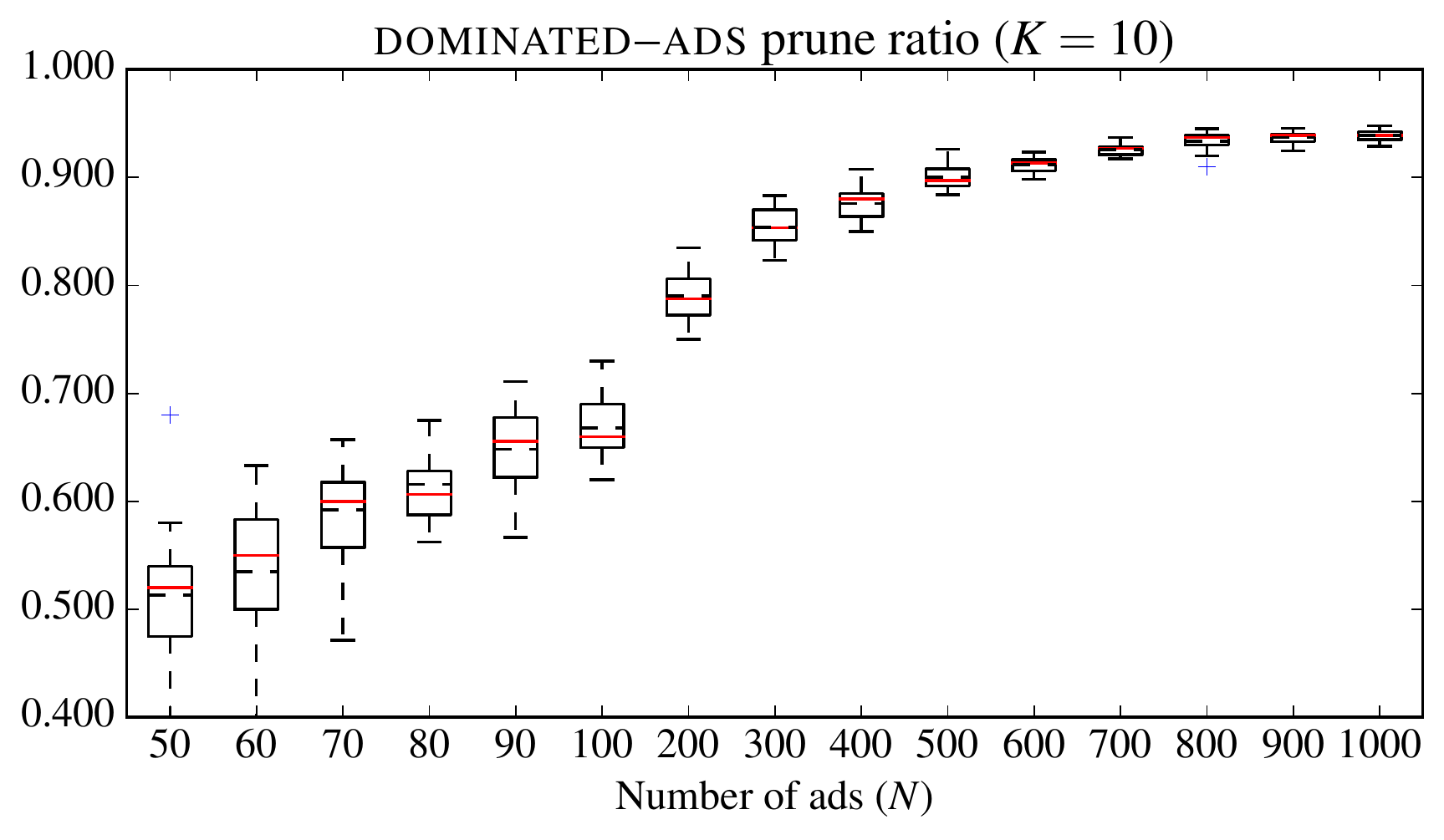}
		\end{tabular}

		\begin{tabular}{c}
			\noindent\includegraphics[width=.98\linewidth]{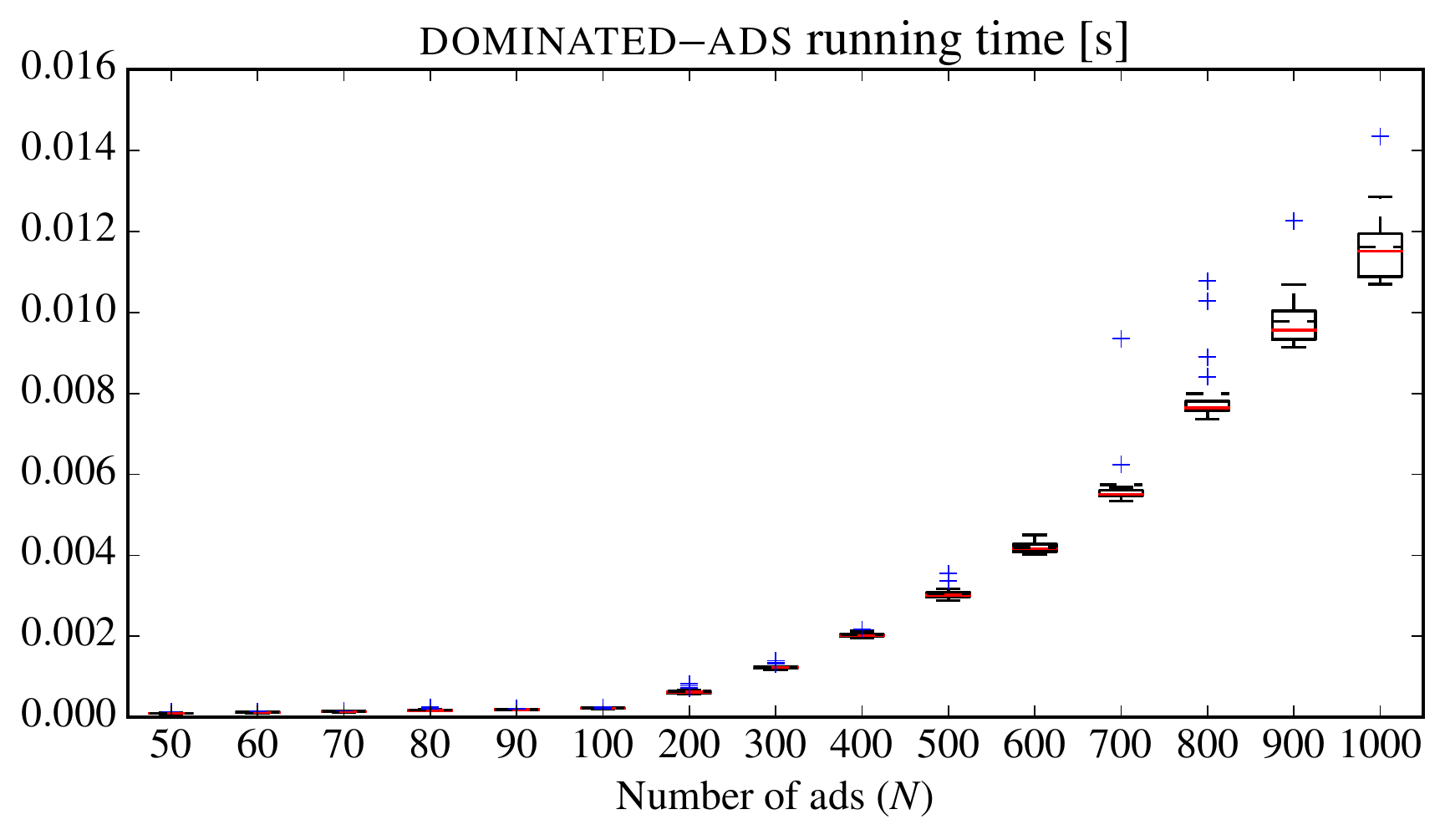}
		\end{tabular}
	\subsection{\textsc{colored--ads} algorithm}
		\begin{tabular}{c}
			\includegraphics[width=1\linewidth]{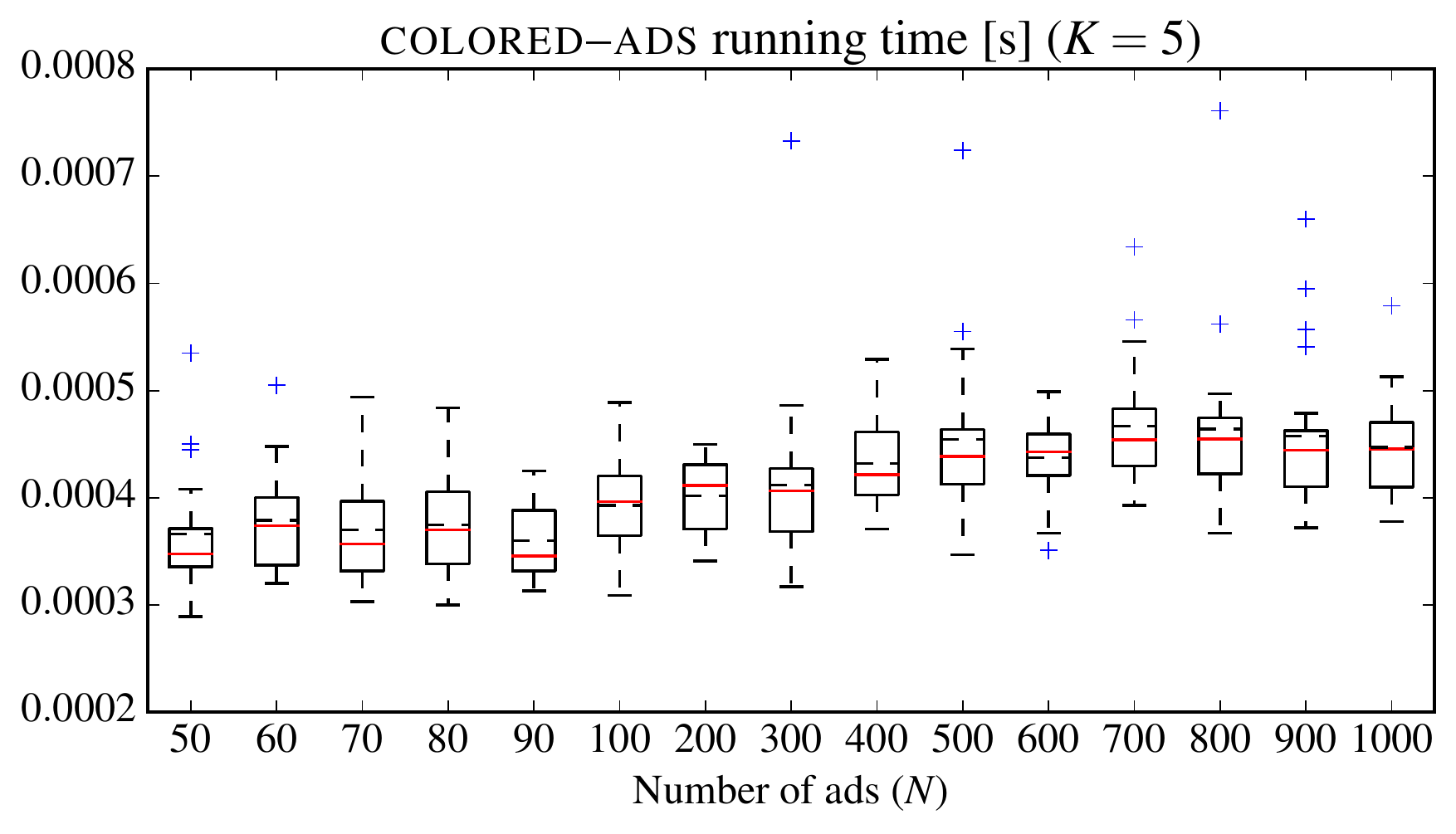}\\
			\includegraphics[width=1\linewidth]{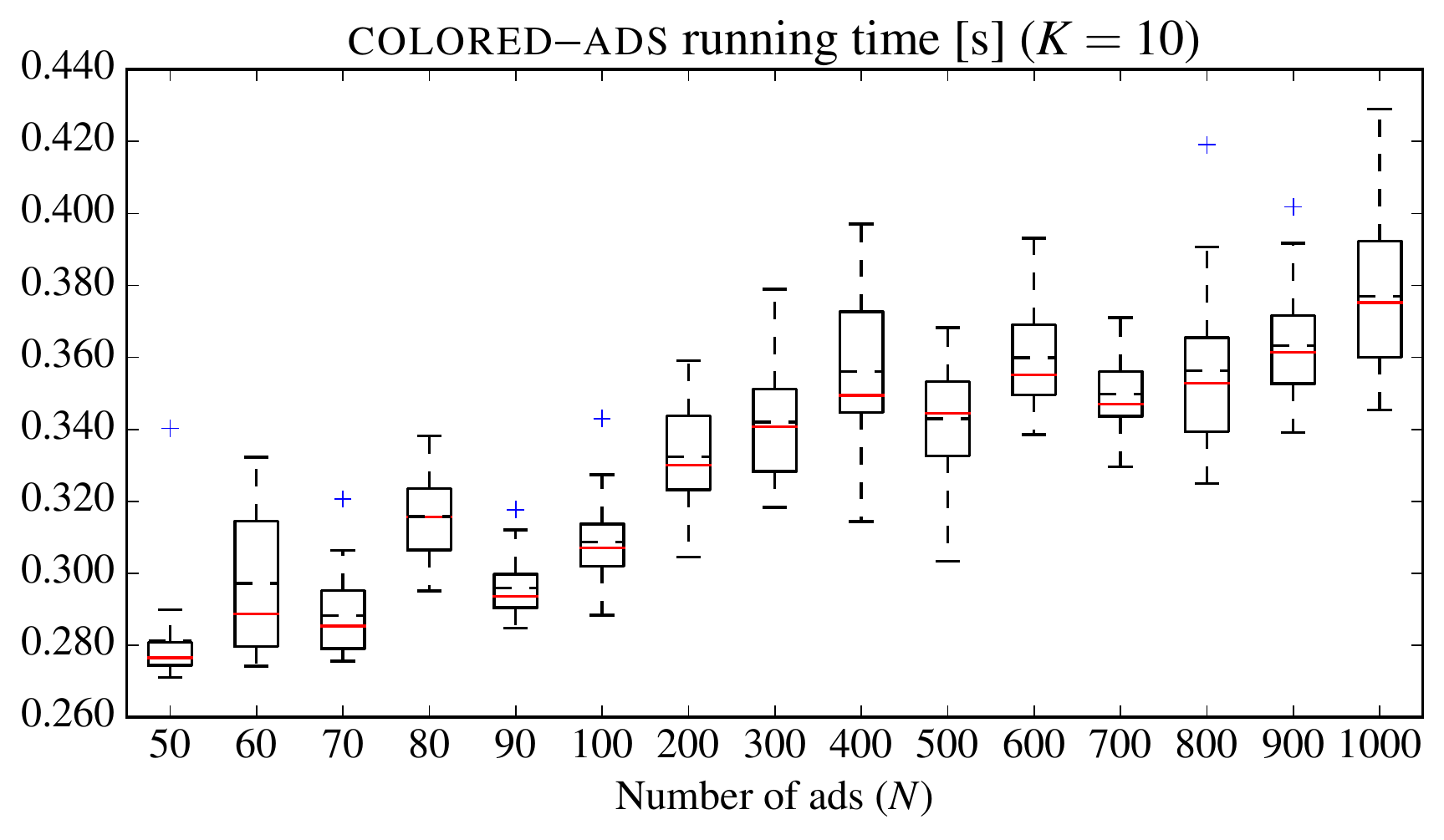}
		\end{tabular}
	
	\newpage
	\subsection{\textsc{sorted--ads} algorithm}
		\begin{tabular}{c}
			\includegraphics[width=1\linewidth]{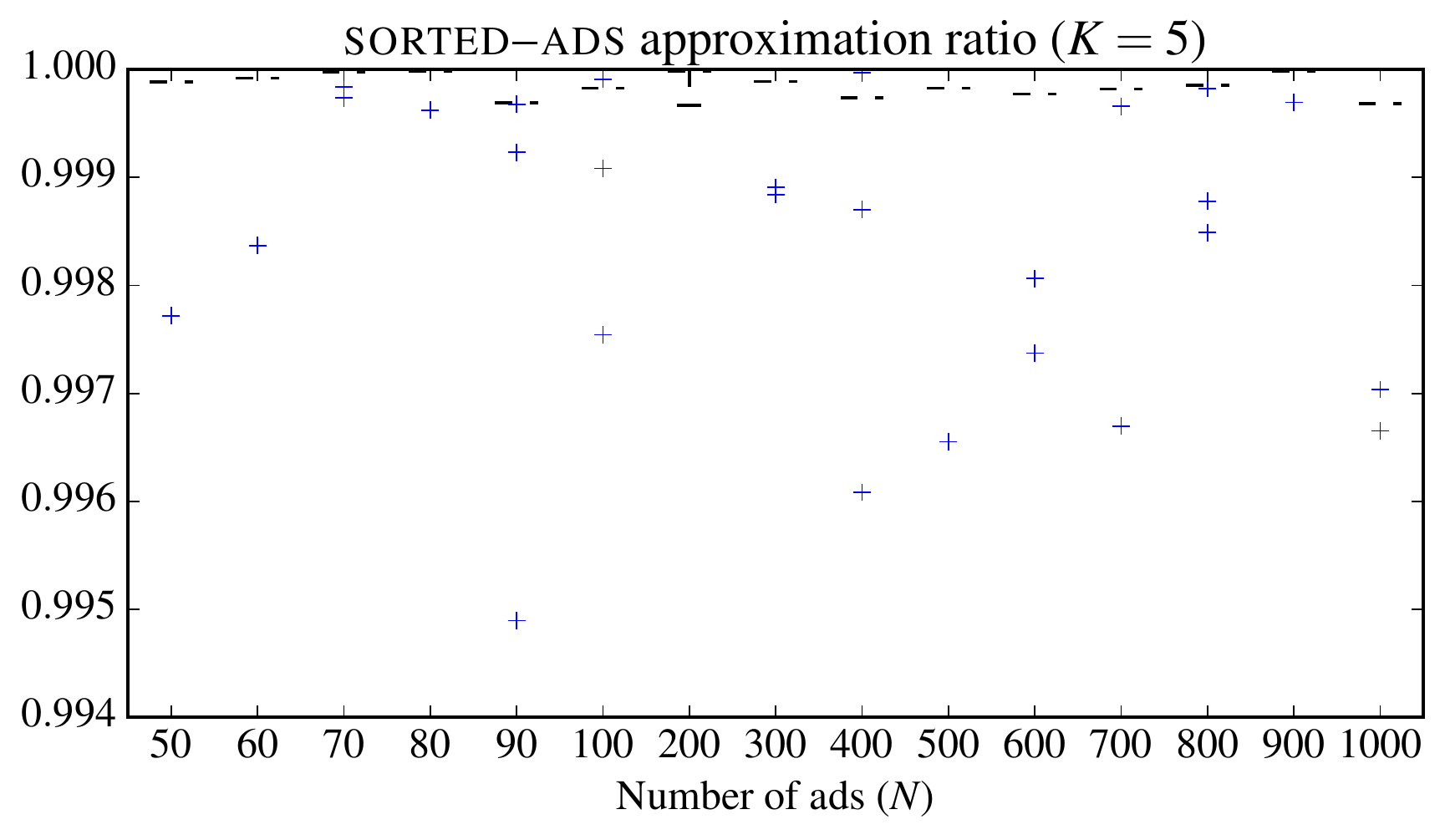}\\
			\includegraphics[width=1\linewidth]{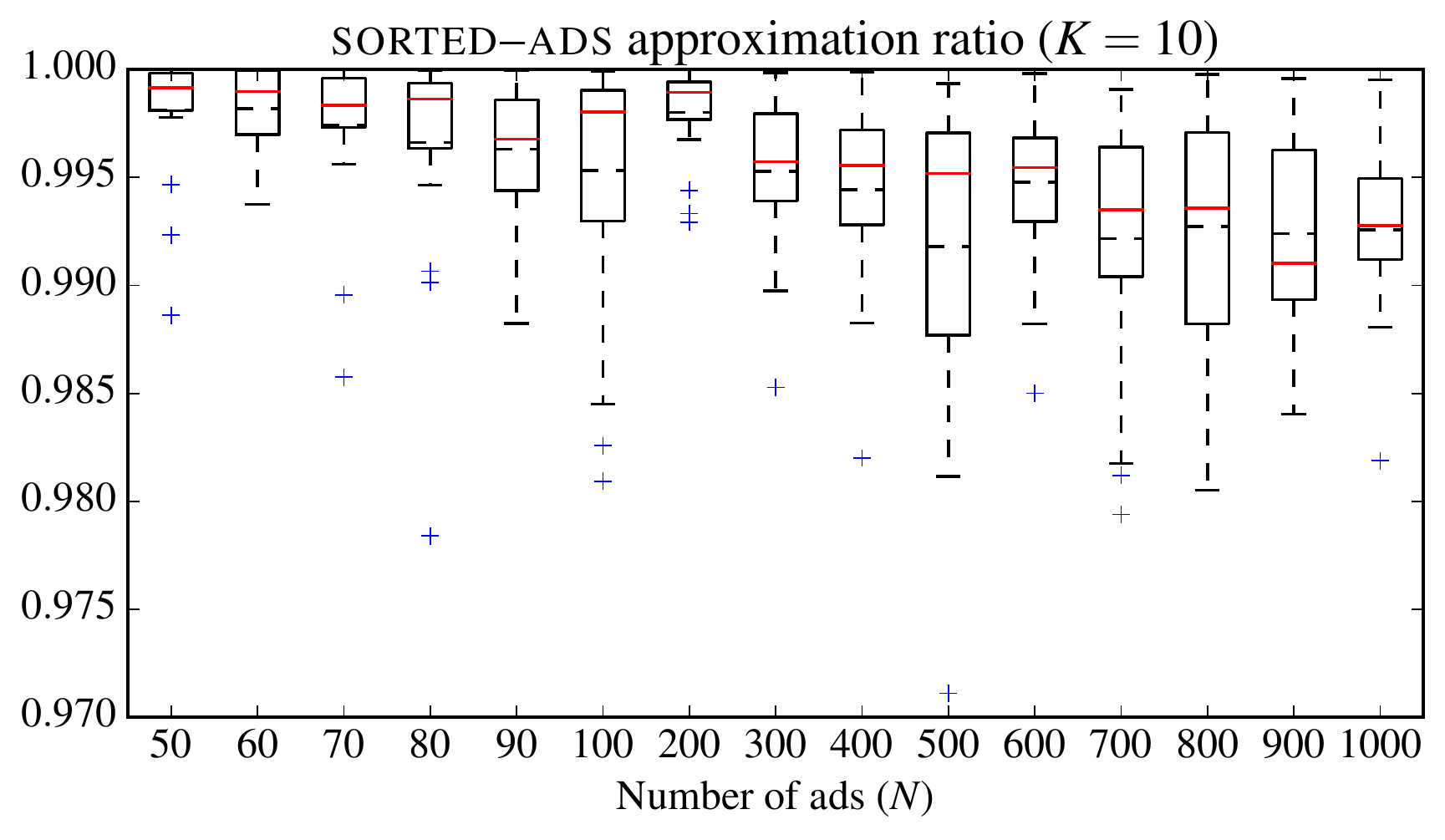}\\
			\includegraphics[width=1\linewidth]{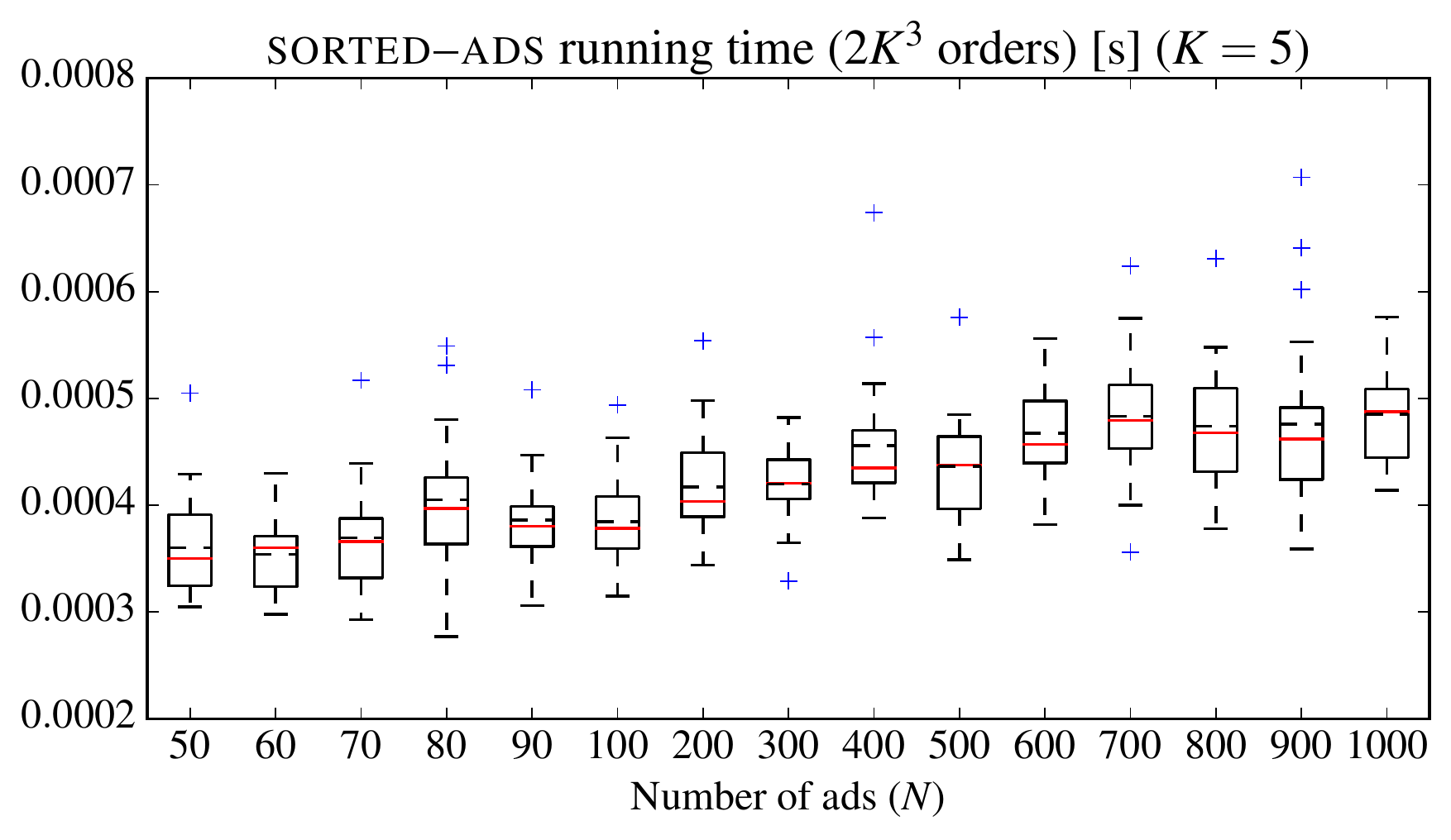}\\
			\includegraphics[width=1\linewidth]{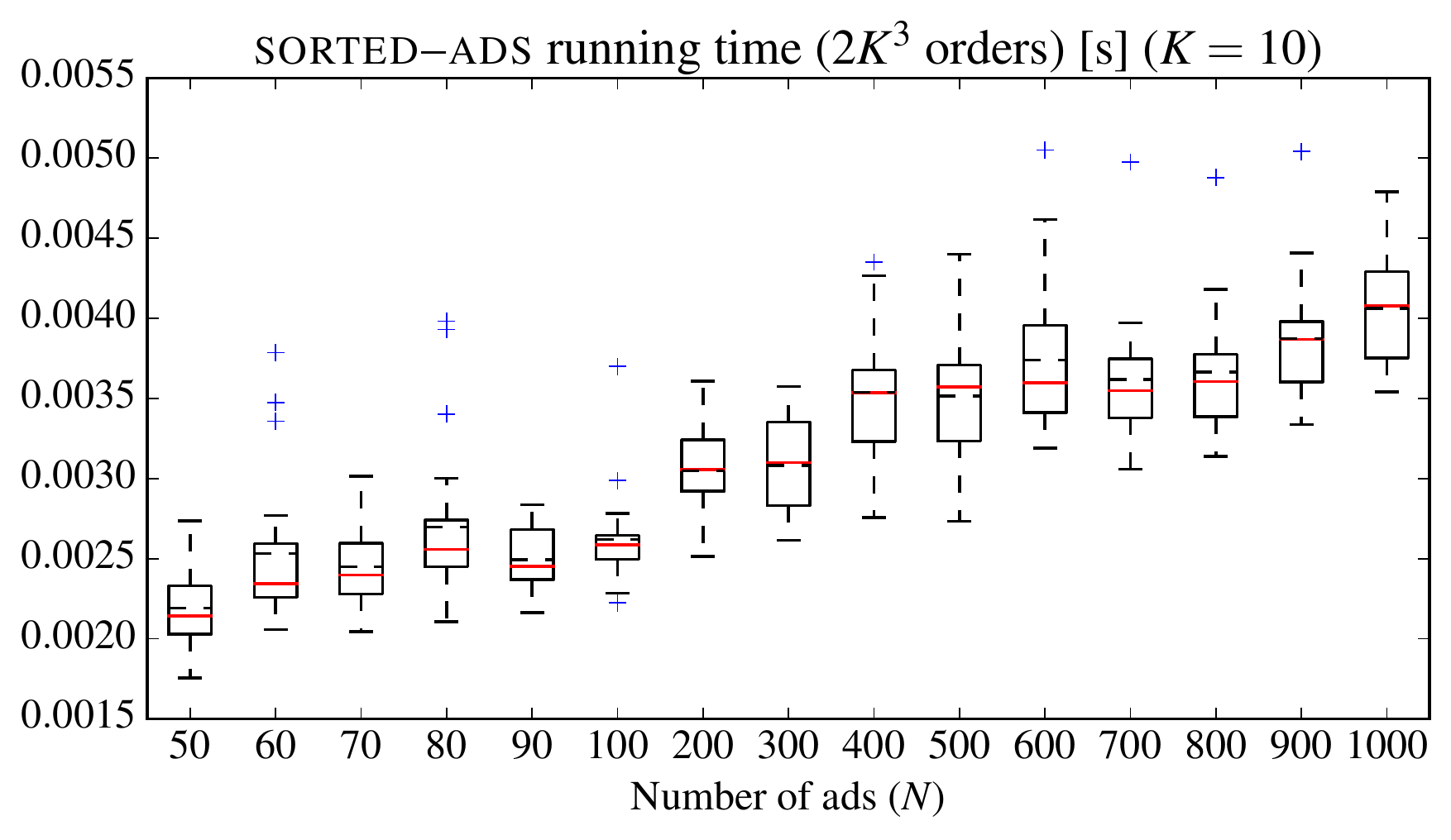}
		\end{tabular}
		
\end{appendices}
\fi
\end{document}